\newcommand{\MH}[1]{{\color{red}{#1}}}
\theoremstyle{plain}
\newtheorem{theorem}{Theorem}[section]
\newtheorem{proposition}[theorem]{Proposition}
\newtheorem{corollary}[theorem]{Corollary}
\theoremstyle{definition}
\newtheorem{definition}[theorem]{Definition}
\newtheorem{remark}[theorem]{Remark}
\newtheorem{example}[theorem]{Example}
\theoremstyle{remark}
\numberwithin{equation}{section}
\newtheorem*{empty*}{}
\renewcommand\thmcontinues[1]{\textbf{Continued}}
\DeclareMathOperator*{\argmin}{argmin}
\DeclareMathOperator*{\Var}{Var}
\DeclareMathOperator*{\esssup}{ess\,sup}
\DeclareMathOperator*{\essinf}{ess\,inf}
\newcommand{\e}{\mathrm{e}}
\renewcommand{\theta}{\vartheta}
\renewcommand{\epsilon}{\varepsilon}
\renewcommand{\P}{\mathbb{P}}
\newcommand{\E}{\mathbb{E}}
\newcommand{\NN}{\mathbb{N}}
\newcommand{\RR}{\mathbb{R}}
\newcommand{\cA}{\mathcal{A}}
\newcommand{\cD}{\mathcal{D}}
\newcommand{\cE}{\mathcal{E}}
\newcommand{\cF}{\mathcal{F}}
\newcommand{\cG}{\mathcal{G}}
\newcommand{\cM}{\mathcal{M}}
\newcommand{\cN}{\mathcal{N}}
\newcommand{\cO}{\mathcal{O}}
\newcommand{\cP}{\mathcal{P}}
\newcommand{\cQ}{\mathcal{Q}}
\newcommand{\cS}{\mathcal{S}}
\newcommand{\cX}{\mathcal{X}}
\newcommand{\dd}{\,\mathrm{d}}
\newcommand{\1}{\mathbf{1}}
\newcommand{\0}{\mathbf{0}}
\newcommand*{\as}[1]{#1\text{-a.s.}}
\newcommand*{\ol}[1]{\bar{#1}}
\newcommand{\ES}{\mathrm{ES}}
\newcommand{\VaR}{\mathrm{VaR}}
\newcommand{\SR}{\mathrm{SR}}
\newcommand{\OCE}{\mathrm{OCE}}
\newcommand{\WC}{\mathrm{WC}}
\newcommand{\fE}{\mathfrak{E}}
\begin{document}
	
	\title{$\rho$-arbitrage and $\rho$-consistent pricing for star-shaped \\ risk measures\thanks{We would like to thank John Armstrong, Matteo Burzoni, Cosimo Munari and Ruodu Wang as well as the participants of the Seminar on Risk Management and Actuarial Science at the University of Waterloo for very helpful comments and discussions. We also thank two anonymous referees for their pertinent comments.}}
	
\author{Martin Herdegen\thanks{University of Warwick, Department of Statistics, Coventry, CV4 7AL, UK, email \texttt{m.herdegen@warwick.ac.uk}.}
	\and 
Nazem Khan\thanks{Dublin City University, School of Mathematics, Dublin, D09 FW22, Ireland, email \texttt{nazem.khan@dcu.ie}.}
	}

	\date{\today}
	
	\maketitle

\begin{abstract}
This paper revisits mean-risk portfolio selection in a one-period financial market, where risk is quantified by a star-shaped risk measure $\rho$. We make three contributions. First, we introduce the new axiom of sensitivity to large expected losses and show that it is key to ensure the existence of optimal portfolios.  Second, we give primal and dual characterisations of (strong) $\rho$-arbitrage.  Finally, we use our conditions for the absence of (strong) $\rho$-arbitrage to explicitly derive the (strong) $\rho$-consistent price interval for an external financial contract.  
\end{abstract}

\bigskip
\noindent\textbf{Mathematics Subject Classification (2020):} 91G10, 90C46, 46N10 


\bigskip
\noindent \textbf{Area of review:} Stochastic Models

\bigskip
\noindent\textbf{Keywords:} portfolio selection, $\rho$-arbitrage, convex risk measures, star-shaped risk measures, dual characterisation, good-deals, $\rho$-consistent pricing

\section{Introduction}
The aim of a risk measure is to quantify the risk of a financial position by a single number.  This number can be interpreted in different ways in different contexts, as summarised by Wang \cite[p.~337]{wang:16}: ``In banking, it represents the capital requirement to regulate a risk; in insurance, it calculates the premium for an insurance contract; and in economics it ranks the preference of a risk for a market participant.''  

In this paper, we interpret the risk measure $\rho$ as a \emph{regulatory constraint} imposed by the regulator on a financial agent seeking to optimise a portfolio. Because our focus is on the \emph{effectiveness} of the risk constraint, we ignore any idiosyncratic risk aversion of the agent.  Denoting by $X_\pi$ the excess return of a portfolio $\pi \in \RR^d$, we consider the following two problems:
\begin{enumerate}[label=({\arabic*})] 
\item Given a minimal desired expected excess return $\nu^* \in [0,\infty)$, minimise the risk $\rho(X_\pi)$ among all portfolios $\pi \in \RR^d$ that satisfy $\mathbb{E}[X_{\pi}] \geq \nu^*$;
\item Given a maximal risk threshold $\rho^* \in [0,\infty)$, maximise the return $\mathbb{E}[X_\pi]$ among all portfolios $\pi \in \RR^d$ that satisfy $\rho(X_\pi) \leq \rho^*$.
\end{enumerate}
We refer to either problem as mean-$\rho$ portfolio selection in the sequel.  The way to tackle these two problems is to first study problem (1) with an equality constraint:
\begin{enumerate} 
 \item [(1')] Given $\nu \geq 0$, minimise the risk $\rho(X_\pi)$ among all portfolios $\pi \in \RR^d$ that satisfy $\mathbb{E}[X_{\pi}] = \nu$.
\end{enumerate}
Portfolios minimising (1') are referred to as \emph{$\rho$-optimal} and the corresponding minimal risk is denoted by $\rho_\nu:=\inf\{ \rho(X_\pi) : \pi \in \Pi_\nu \}$, where $\Pi_\nu$ denotes all portfolios $\pi \in \RR^d$ with $\mathbb{E}[X_\pi] = \nu$.  One can then study the $\rho$-optimal boundary $\mathcal{O}_\rho:=\{(\rho_\nu,\nu) : \nu \geq 0\}$ and use this to find the solutions to (1) and (2) -- in case that they are well posed.

\medskip
There is a large literature on mean-$\rho$ portfolio selection.  \emph{Deviation risk measures}, which are a generalisation of the standard deviation, have been considered by de Giorgi \cite{de2005reward} and Rockafellar et al.~\cite{MasterFundsRockafellar}. As one would expect, the results in this case share a lot of similarities with the classical mean-variance framework of Markowitz \cite{Markowitz1952}. However, deviation risk measures quantify the degree of \emph{uncertainty} in a random variable, while regulators are more concerned with the overall seriousness of possible \emph{losses}. In particular, deviation risk measures are not monotone (or cash-invariant).

Presently, the most popular risk measures are normalised, monotone, cash-invariant and \emph{positively homogeneous}, i.e., $\rho(\lambda X) = \lambda \rho(X)$ for $\lambda \in (0,\infty)$, with Value at Risk (VaR) and Expected Shortfall (ES) being the most famous examples.\footnote{Among many other papers, mean-VaR portfolio selection has been studied by Alexander and Baptista \cite{alexander2002economic} and Gaivoronski and Pflug \cite{gaivoronski2005value} and mean-ES portfolio selection has been studied by Rockafellar and Uryasev \cite{rockafellar2002conditional} and Embrechts et al.~\cite{embrechts2021robustness}.  For an axiomatic justification of mean-ES portfolio selection, we refer to Han et al.~\cite{han:al:21}.}  However, in this setting, $\rho$-optimal portfolios need not exist.  And even if they do exist, mean-$\rho$ portfolio selection may be \emph{ill-posed} in the sense: there is a sequence of portfolios $(\pi_n)_{n \in \mathbb{N}}$ such that $\mathbb{E}[X_{\pi_n}] \uparrow \infty$ and $\rho(X_{\pi_n}) \leq 0$ for all $n$; or even worse (from the perspective of a regulator), there is a sequence of portfolios $(\pi_n)_{n \in \mathbb{N}}$ such that $\mathbb{E}[X_{\pi_n}] \uparrow \infty$ and $\rho(X_{\pi_n}) \downarrow -\infty$. These phenomena are referred to as \emph{$\rho$-arbitrage} and \emph{strong $\rho$-arbitrage}, respectively. 
They are undesirable from a regulatory point of view since positions that are highly rewarding should come with a risk -- and the risk is being nullified here. Both concepts are generalisations of the classical notions of arbitrage of the first and second kind. Indeed if $\rho$ is the worst-case risk measure, then $\rho$-arbitrage is classical arbitrage of the first kind and strong $\rho$-arbitrage is classical arbitrage of the second kind; see \cite[Proposition 3.22]{herdegen2020dual} for details. For coherent risk measures, the concepts of (strong) $\rho$-arbitrage have been studied from a theoretical perspective in Herdegen and Khan \cite{herdegen2020dual}; the practical relevance of $\rho$-arbitrage has been discussed by Armstrong and Brigo in \cite{armstrong2019risk} and \cite{armstrong2019statistical}. 

Positive homogeneity is a strong property, which has been questioned on economic grounds early on.  It triggered the introduction of \emph{convex} risk measures by F{\"o}llmer and Schied \cite{follmer2002convex} and Fritelli and Gianin \cite{frittelli2002putting}. It is easy to check that convexity together with normalisation implies that the risk measure is \emph{star-shaped}, i.e., $\rho(\lambda X) \geq \lambda \rho(X)$ for $\lambda \in [1,\infty)$.\footnote{Star-shaped risk measures have recently been studied by Castagnoli et al.~\cite{castagnoli2021star} in a setting where there is no underlying probability measure.} This encapsulates the idea that financial positions become more risky when there is more at stake, for which there is empirical evidence, cf.~\cite{bosch2006reflections,bosch2010averting}.

\medskip
The first objective of this paper is to study mean-$\rho$ portfolio selection when $\rho$ is a \emph{risk functional}, i.e., star-shaped, monotone and normalised.  For some of our results, in particular, for our dual characterisations, we assume in addition that $\rho$ is cash-invariant (cash-invariant risk functionals are referred to as \emph{risk measures}), convex or satisfies the Fatou property. Assuming that $\rho$ lives on some Riesz space $L^\infty \subset L \subset L^1$ and is $(-\infty, \infty]$-valued, we first seek to answer the following three questions for a given market:
\begin{enumerate}
	\item[(Q1)] \textbf{Existence of optimal portfolios.}~What conditions guarantee that \emph{$\rho$-optimal portfolios} for a desired excess return $\nu \geq 0$ exist, i.e., $\argmin_{\pi \in \Pi_\nu} \rho(X_\pi) \neq \emptyset$, where $\Pi_\nu$ denotes all portfolios with $\mathbb{E}[X_\pi] = \nu$?
	\item[(Q2)] \textbf{Absence of (strong) $\rho$-arbitrage.}~What are necessary and sufficient conditions to ensure that the market does not admit (strong) $\rho$-arbitrage?
    \item[(Q3)] \textbf{Well-posedness of mean-$\rho$ portfolio selection.} When do the mean-$\rho$ problems (1) and (2) admit solutions for all $\nu^*,\rho^* \geq 0$?
\end{enumerate}

(Q1) and (Q3) are important from a practical perspective, whilst (Q2) is crucial for the regulator. To the best of our knowledge, all three questions are open at the level of generality we consider here.  The extant literature on mean-$\rho$ portfolio selection when $\rho$ fails to be positively homogeneous is sparse, which is maybe due to the popularity/practical relevance of VaR/ES/coherent risk measures.\footnote{The literature on mean-$\rho$ portfolio selection for positively homogeneous risk measures has been discussed in detail in \cite{herdegen2020dual}, and we refer the interested reader there.} Notwithstanding, the minimisation of convex risk measures has been studied by Ruszczy\'nski and Shapiro \cite{ruszczynski2006optimization}, and their results were later extended to \emph{quasiconvex} risk measures by Mastrogiacomo and Gianin~\cite{mastrogiacomo2015portfolio}.  These two papers are related to ours in the sense that they study the following question: Given a vector space $\mathcal{Z}$ representing the set of actions and a function $F:\mathcal{Z} \to L$ which maps each action to a payoff, when is $\min_{z \in C} \rho(F(z))$ well-posed for some given convex subset  $C$ of $\mathcal{Z}$. Whilst their setting is more general than ours ($\mathcal{Z}$ may be infinite dimensional), their assumptions on $\rho$ are stronger (a ``nice'' dual representation). As an application, they consider the mean-$\rho$ problem (1) and provide sufficient conditions that guarantee the existence of a solution to (1). In particular, these conditions imply the existence of optimal portfolios (in our sense) and thereby answer (Q1), at least partially. Nevertheless, their results do not contribute to answering (Q2) or (Q3). Indeed, even if the mean-$\rho$ problem (1) has a solution, there might still be $\rho$-arbitrage and the mean-$\rho$ problem (2) may not have any solutions; cf.~\cite[Corollary 3.21]{herdegen2020dual}.

We first address (Q1) and show in Theorem \ref{thm:WSTD equiv to boundedness and existence of rho optimal sets} that the crucial ingredient for the existence of optimal portfolios is that $\rho$ satisfies on the set of returns $\cX \subset L$ the new axiom \emph{sensitivity to large expected losses}:~For any $X \in \mathcal{X} \setminus \{0\}$ with $\mathbb{E}[X] \leq 0$, there exists $\lambda > 0$ such that $\rho(\lambda X) > 0$.  The economic meaning of this axiom is simple and intuitive:~Apart from the riskless portfolio, any portfolio with a nonpositive expected excess return has a positive risk if it is scaled by a sufficiently large amount. 

We then turn our attention to (Q2). A key methodological tool here is to consider $\rho^{\infty}$, the \emph{smallest positively homogeneous} risk functional that \emph{dominates} $\rho$.  A key result is that under sensitivity to large expected losses and the Fatou property, $\rho$-arbitrage is equivalent to $\rho^\infty$-arbitrage.  Therefore, if $\rho$ satisfies this and has a dual representation, then so does $\rho^{\infty}$, and we can lift the results from \cite{herdegen2020dual} on the dual characterisation of $\rho$-arbitrage for coherent risk measures to the case of convex risk measures; see~Theorem \ref{thm: no reg arb equivalence}.  This link fails in the case of \emph{strong} $\rho$-arbitrage and the results are more involved.  However, using methods from convex analysis, we are still able to provide a dual characterisation of \emph{strong} $\rho$-arbitrage for convex risk measures, see Theorem~\ref{thm: no strong reg arb}.

As a byproduct of our results on (Q1) and (Q2), we are able to answer (Q3).  We show in Theorem \ref{thm:well posedness} that under sensitivity to large expected losses and the Fatou property, the absence of $\rho$-arbitrage is equivalent to the well-posedness of the mean-$\rho$ problems (1) and (2).

\medskip
The notion of (strong) $\rho$-arbitrage gives rise to the new concept of \emph{(strong) $\rho$-consistent pricing}. 
The second objective of this paper is to study (strong) $\rho$-consistent pricing by answering the following question for a given market:
\begin{enumerate}
	\item[(Q4)] \textbf{(Strong) $\rho$-consistent pricing.}~What are the set of (strong) $\rho$-consistent prices for a financial contract $Y$ that lives \emph{outside} the market, where consistency is determined through the absence of (strong) $\rho$-arbitrage?
\end{enumerate}

Pricing in a ``market consistent'' way has been an active area of research in mathematical finance for the past few decades.  The idea is simple.  Once we fix a property $\fE$ that translates to forbidding positions in the market that are ``too good to be true'', then the set of market consistent (with respect to $\fE$) prices for $Y$ is given by  
\begin{equation*}
    \textnormal{I}_{\fE}(Y):=\{ y \in \RR : \textnormal{the augmented market with $Y$ priced at $y$ satisfies $\fE$} \}.
\end{equation*}

When $\fE$ stands for no-arbitrage (or no-free lunch with vanishing risk), we are in the classical setting of arbitrage pricing; see e.g.~Delbaen and Schachermayer \cite{delbaen2006mathematics} and the references therein.  While the absence of arbitrage is universally accepted, its implications for pricing are often rather weak, since for incomplete markets, the pricing intervals can be too large to provide any useful information.  Sharper bounds can be obtained by requiring for $\fE$ more/something different than just absence of arbitrage.

There have been many proposals in the literature on how to define $\fE$. Cochrane and Saa-Requejo \cite{cochrane2000beyond} considered Sharpe ratios,
Bernardo and Ledoit \cite{bernardo2000gain} looked at gain-loss ratios, and \v{C}ern\`{y} and Hodges \cite{vcerny2002theory} used utility functions to define $\fE$.  Later contributions in multi-period and continuous-time settings include Kl\"oppel and Schweizer \cite{susanne2007dynamic} and Arai \cite{arai2011good}.

Not surprisingly, risk measures have played an important role for defining $\fE$, starting with Jaschke and K{\"u}chler \cite{jaschke2001coherent}, who studied the absence of \emph{good-deals of the second kind} in a topological framework.  Their results were generalised by Staum \cite{staum2004fundamental}, and sharpened by Cherny \cite{cherny2008pricing}.  More recently, Arduca and Munari \cite{arduca2020fundamental} have studied the absence of \emph{good-deals of the first kind} and \emph{scalable good-deals}.  We elaborate on these papers and discuss how they compare to our work in Remarks \ref{rmk:strong rho arb pricing bounds}(b) and \ref{rmk:applying dual char reg arb}(b).

In this paper, market consistency is defined through the absence of (strong) $\rho$-arbitrage.  This terminology is motivated by the observation that markets that admit (strong) $\rho$-arbitrage are not consistent with $\rho$ since the risk constraint becomes void.  We answer (Q4) by using our answers to (Q2).  In Theorem \ref{cor: elliptical market}, we are able to compute the (strong) $\rho$-consistent price bounds for a new asset in a market with elliptical returns.  Theorem (\ref{cor:strong rho arb pricing bounds}) \ref{cor:rho arb pricing bounds} gives \emph{explicit} (strong) $\rho$-consistent price bounds for $Y$ in a general market using (absolutely continuous) equivalent martingale measures.  



\medskip
The remainder of the paper is organised as follows.  Section \ref{section:setup} describes our modelling framework.  Section \ref{sec:sensitivity to large losses} is devoted to primal answers to (Q1)-(Q3), and also to (Q4) in the case of elliptical markets. Section \ref{sec:convex risk measures} provides a dual characterisation of (strong) $\rho$-arbitrage for convex risk measures and dual answers to (Q4).  In Section \ref{sec:Applications} we apply our theoretical results to two classes of examples: risk functionals based on loss functions and $g$-adjusted Expected Shortfall risk measures.  Section \ref{sec:conclusion} concludes. Appendix~\ref{app:examples} contains some counterexamples complementing the theory, Appendix \ref{app:convex analysis} contains key definitions and results on convex analysis (that are used in Section \ref{sec:convex risk measures}), and Appendix \ref{app:additional results} contains some additional technical results and all proofs.

\section{Modelling framework}
\label{section:setup}

\subsection{Risk framework}
We fix a probability space $(\Omega,\cF,\mathbb{P})$ and work on a Riesz space $L^\infty \subset L \subset L^1$ (for a background on Riesz spaces, see \cite[Chapter 8]{guide2006infinite}).  Key examples for $L$ include $L^p$-spaces for $p \in [1,\infty]$, or more generally Orlicz hearts and Orlicz spaces. We consider one period of uncertainty, where the elements in $L$ represent (discounted) payoffs at time $t=1$ of financial positions held at $t=0$.  The reward for any $X \in L$ is quantified by its expectation, $\mathbb{E}[X]$.  As for the associated risk, we consider a \emph{risk functional} $\rho:L \to (-\infty,\infty]$ satisfying the following axioms:
\begin{itemize}
\itemsep0em 
	\item \emph{Monotonicity:} For any $X_1, X_2 \in L$ such that $X_1 \leq X_2 \ \mathbb{P}$-a.s., $\rho(X_{1}) \geq \rho(X_{2})$;
	\item \emph{Normalisation:} $\rho(0)=0$;
		\item \emph{Star-shapedness:} For all $X \in L$ and $\lambda \in [1,\infty)$, $\rho(\lambda X) \geq \lambda \rho(X)$.
	\end{itemize}
Here, monotonicity means that higher payoffs have lower risk. Normalisation encodes that no investment means no risk.  These two axioms imply that $0$ lies in the \emph{acceptance set}  $\mathcal{A}_\rho:=\{X \in L : \rho(X) \leq 0\}$  of $\rho$ and  $\mathcal{A}_\rho + L_+ \subset \mathcal{A}_\rho$, where
\begin{equation*}
    L_+:=\{ X \in L : X \geq 0 \ \mathbb{P}\textnormal{-a.s.} \}.
\end{equation*}
Finally, star-shapedness captures the idea that a position's risk should increase \emph{at least} proportionally when scaled by a factor greater than one. This is economically sounder and strictly weaker than \emph{positive homogeneity}, where the inequality is replaced by an equality (and $\lambda$ valued in $(0, \infty)$).  Examples of risk functionals include the \emph{worst-case risk measure}, $\WC: L^1 \to (-\infty,\infty]$ given by
\begin{equation*}
    \WC(X):=\esssup(-X),
\end{equation*}
and the \emph{Value at Risk} (VaR) and \emph{Expected Shortfall} (ES) defined by
\begin{equation*}
    \textnormal{VaR}^{\alpha}(X) := \inf \{m \in \mathbb{R} :  \mathbb{P}[m+X < 0] \leq \alpha \} \quad \textnormal{and} \quad \textnormal{ES}^{\alpha}(X) := \frac{1}{\alpha} \int_{0}^{\alpha} \textnormal{VaR}^{u}(X) \dd u,
\end{equation*}
where $\alpha \in (0,1)$ denotes the confidence level and $X \in L^1$.  See Section \ref{sec:Applications} for more examples.

\medskip
Our definition of a risk functional is very general, but for some of our results, in particular for our dual characterisations, we also need some of the following axioms:
\begin{itemize}
\itemsep0em 
	\item \emph{Cash-invariance:} For all $X \in L$ and $c \in \RR$, $\rho(X+c) = \rho(X)-c$;
	\item \emph{Convexity:} For any $X,Y \in L$ and $\lambda \in [0,1]$, $\rho(\lambda X + (1-\lambda)Y) \leq \lambda \rho(X) + (1-\lambda) \rho(Y)$;
	 \item \emph{Fatou property on $\mathcal{Y} \subset L$:} If $X_n \to X$ $\as{\P}$ for $X_n, X \in \mathcal{Y}$ and $|X_n| \leq Y$ $\as{\P}$ for some $Y \in L$ then $\rho(X) \leq \liminf_{n \to \infty} \rho(X_n)$.
\end{itemize}
All three axioms are widely used in the literature.  Cash-invariance allows us to write $\rho(X) = \inf\{m \in \RR : X+m \in \mathcal{A}_\rho\}$ and interpret the value as the minimal amount of capital required to make the position $X$ acceptable.  Such risk functionals are fully characterised by their acceptance set.  Convexity represents the idea that diversification should not increase risk and implies $\mathcal{A}_\rho$ is convex.  Note that under normalisation, convexity implies star-shapedness but the converse is false.  Finally, the Fatou property ensures that risk is never underestimated by approximations; for our applications, it sometimes suffices to consider this on a subset $\mathcal{Y} \subset L$.  

It will be made clear whenever an additional axiom is assumed.  In line with the extant literature, we refer to cash-invariant risk functionals as \emph{risk measures} and positively homogeneous convex risk measures as \emph{coherent risk measures}; see  F{\"o}llmer and Schied \cite[Chapter 4]{follmerschied:2016} for an excellent overview of risk measures -- note however, that their definition of risk measure does not include normalisation.

\medskip
Whilst the key point of this paper is \emph{not} to insist on positive homogeneity of $\rho$, it turns out that its smallest positively homogeneous majorant $\rho^\infty: L \to (-\infty, \infty]$ plays a key role.  This is also a risk functional. 
It is explicitly given by
\begin{equation}
	\label{eq:rho bar}
	\rho^\infty(X) := \lim_{t \to \infty} \frac{\rho(tX)}{t}.
\end{equation}
For future reference, note that $\mathcal{A}_{\rho^\infty} = (\mathcal{A}_{\rho})^\infty$, where the latter is used to denote the largest cone contained in $\cA_\rho$, that is,
\begin{equation*}
    (\mathcal{A}_{\rho})^\infty := \{ X \in \cA_\rho : \lambda X \in \cA_\rho \ \textnormal{for all } \lambda \in (0,\infty) \}.
\end{equation*}
Moreover, if $\rho$ satisfies convexity, cash-invariance or the Fatou property on some $\mathcal{Y} \subset L$, then so does $\rho^{\infty}$.  

%

\subsection{Portfolio framework}
\label{section:portfolio optimisation}

We consider a one-period $(1+d)$-dimensional market $(S^0_t, \ldots, S^d_t)_{t \in \{0, 1\}}$.  We assume that $S^0$ is riskless and satisfies
$S^0_0 = 1$ and $S^0_1 = 1 + r$, where $r > -1$ denotes the riskless rate. We further assume that $S^1, \ldots, S^d$ are risky assets, where $S^1_0, \ldots, S^d_0 > 0$ and $S^1_1, \ldots, S^d_1 \in L$. We denote the (relative) \emph{return} of asset $i \in \{0, \ldots, d\}$ by 
\begin{equation*}
R^i := \frac{S^i_1 - S^i_0}{S^i_0},
\end{equation*}
and its expectation by $\mu^i := \mathbb{E}[R^i]$.  For notational convenience, we set $S:= (S^1, \ldots, S^d)$, $R:= (R^1, \ldots, R^d)$ and $\mu:= (\mu^1, \ldots, \mu^d) \in \RR^d$.  We may assume without loss of generality that the market is \emph{nonredundant}, i.e., $\sum_{i =0}^d \theta^i S^i =0$ $\as{\P}$ implies that $\theta^i = 0$ for all $i \in \{0, \ldots, d\}$.  We also assume that the risky returns are \emph{nondegenerate} in the sense that for at least one $i \in \{1, \ldots, d\}$, $\mu^i \neq r$. (If $\mu^i = r$ for all $i \in \{1, \ldots, d\}$, then every portfolio $\pi \in \RR^d$ has zero expected excess return.  There would be no incentive to invest and mean-risk portfolio optimisation becomes meaningless.)  Note that this implies that $\P$ itself is not an equivalent martingale measure for the discounted risky assets $S/S^0$.

\begin{remark}
\label{rmk:positive prices}
While the assumption $S^i_0 > 0$ for $i \in \{1, \ldots d\}$ is necessary in order to define the (relative) return $R^i$ in a meaningful way and to parametrise trading in fractions of wealth, it is without loss of generality.  Indeed, if $S^i_0 \leq 0$ for some $i 
\in \{1, \ldots, d\}$, then one can define $\tilde{S}^i := S^i + (1- S^i_0) S^0$ (so that $\tilde S^i_0 = 1$), and the original market is economically equivalent to the market with $S^i$ replaced by $\tilde{S}^i$.
\end{remark}

As $S^0_0, \ldots, S^d_0 > 0$, we can parametrise trading in \emph{fractions of wealth}, and we assume that trading is frictionless. More precisely, we fix an initial wealth $x_0 > 0$ and describe any portfolio (for this initial wealth) by a vector $\pi =(\pi^1, \ldots, \pi^d) \in \RR^d$, where $\pi^i$ denotes the fraction of wealth invested in asset $i \in \{1, \ldots, d\}$. The fraction of wealth invested in the riskless asset is in turn given by $\pi^0 := 1 - \sum_{i =1}^d \pi^i = 1 - \pi \cdot \1$, where $\1 :=(1, \ldots, 1) \in \RR^d$.  The \emph{return} of a portfolio $\pi \in \RR^d$ can be computed by $R_\pi := (1 - \pi \cdot \1)r + \pi \cdot R$, and its \emph{excess return} over the riskless rate $r$ is in turn given by
\begin{equation}
\label{eq:excess return}
X_{\pi} := R_\pi - r = (1 - \pi \cdot \1)r + \pi \cdot R - r =\pi \cdot(R -r\1).
\end{equation}
Thus, $\cX:=\{X_\pi:\pi \in \RR^d\}$ is a subspace of $L$, independent of the initial wealth $x_0$.  The \emph{expected excess return} of a portfolio $\pi \in \RR^d$ over the riskless rate $r$ can be calculated as
$\mathbb{E}[X_{\pi}] = \pi \cdot (\mu - r \1)$, and the set of all portfolios with expected excess return $\nu \in \RR$ is given by
\begin{equation}
\label{eq:def:Pi nu}
\Pi_{\nu} := \{\pi \in \RR^d: \mathbb{E}[X_{\pi}] = \nu\}.
\end{equation}
Then  $\Pi_{\nu}$ is nonempty, closed and convex for all $\nu \in \RR$. Moreover, 
\begin{equation}
\label{eq:Pi k}
\Pi_{\lambda \nu} = \lambda \Pi_\nu \quad \textnormal{for any } \nu \in \RR \textnormal{ and } \lambda \in \RR \setminus \{0\}.
\end{equation}
The \emph{risk} associated to a portfolio $\pi$ is given by $\rho(X_\pi)$.

\section{Mean-$\rho$ portfolio selection, $\rho$-arbitrage and $\rho$-consistent pricing}
\label{sec:sensitivity to large losses}

We start our discussion on mean-$\rho$ portfolio optimisation (concurrently, mean-$\rho^\infty$ portfolio optimisation) by introducing a preference preorder on the set of portfolios. This  preorder formalises the idea that return is ``desirable'' and risk is ``undesirable''.  

\begin{definition}
	\label{def:rho preferred}
A portfolio $\pi \in \RR^d$ is \emph{strictly $\rho$-preferred} over another portfolio $\pi^\prime \in \RR^d$ if $\mathbb{E}[X_{\pi}] \geq \mathbb{E}[X_{\pi^\prime}]$ and $\rho(X_{\pi}) \leq \rho(X_{\pi^\prime})$, with at least one inequality being strict.
\end{definition}

\noindent It gives rise to the mean-$\rho$ problems:
\begin{enumerate}[label=({\arabic*})] 
\item Given a minimal desired expected excess return $\nu^* \in [0,\infty)$, minimise $\rho(X_\pi)$ among all portfolios $\pi \in \RR^d$ that satisfy $\mathbb{E}[X_{\pi}] \geq \nu^*$.
    \item Given a maximal risk threshold $\rho^* \in [0,\infty)$, maximise $\mathbb{E}[X_\pi]$ among all portfolios $\pi \in \RR^d$ that satisfy $\rho(X_\pi) \leq \rho^*$.
   
\end{enumerate}

\noindent The way to tackle mean-$\rho$ portfolio selection is to first find the set of $\rho$\emph{-optimal} portfolios, and to then use the \emph{$\rho$-optimal boundary} to find the solutions to (1) and (2).

\begin{definition}
\label{def:optimal portfolio}
Let $\nu \geq 0$. A portfolio $\pi \in \Pi_\nu$ is called \emph{$\rho$-optimal} for $\nu$ if $\rho(X_\pi) < \infty$ and $\rho(X_\pi) \leq \rho(X_{\pi'})$ for all $\pi' \in \Pi_\nu$. We denote the set of all $\rho$-optimal portfolios for $\nu$ by $\Pi^\rho_\nu$. Moreover, we set
\begin{equation}
\label{eq:def:optimal portfolio:rho nu}
    \rho_\nu := \inf \{ \rho(X_\pi) : \pi \in \Pi_{\nu} \},
\end{equation}
and define the \emph{$\rho$-optimal boundary} by
\begin{equation}
\label{eq:def:optimal portfolio:o rho}
\mathcal{O}_{\rho} := \{ (\rho_\nu, \nu): \nu \geq 0 \}.
\end{equation}
\end{definition}


In classical mean-variance portfolio selection, optimal portfolios exist for all $\nu \geq 0$ and they form the solutions to problems (1) and (2). By contrast, this can all break down in the mean-$\rho$ framework.  Optimal portfolios need not exist, cf.~\cite[Example A.1]{herdegen2020dual}.  Even if $\rho$-optimal portfolios do exist, the mean-$\rho$ problems may still be ill-posed, for example if the market admits  \emph{(strong) $\rho$-arbitrage} (defined below), cf.~Example \ref{exa:binomial} in Appendix \ref{app:examples}.  

\begin{definition}
\label{def:rho arbitrage}
The market $(S^0, S)$ is said to admit \emph{$\rho$-arbitrage} if there exists a sequence of portfolios $(\pi_{n})_{n \geq 1} \subset \RR^d$ such that 
\begin{equation*}
    \mathbb{E}[X_{\pi_{n}}] \uparrow \infty \quad \textnormal{and} \quad \rho(X_{\pi_{n}}) \leq 0 \ \textnormal{for all } n.
\end{equation*}
It is said to admit \emph{strong $\rho$-arbitrage} if there is a sequence of portfolios $(\pi_{n})_{n \geq 1} \subset \RR^d$ such that 
\begin{equation*}
    \mathbb{E}[X_{\pi_{n}}] \uparrow \infty \quad \textnormal{and} \quad \rho(X_{\pi_{n}}) \downarrow -\infty.
\end{equation*}
\end{definition}

\begin{remark}
\label{rmk:(strong) rho arb definition}
(a)  This definition of strong $\rho$-arbitrage is equivalent to \cite[Definition 3.17]{herdegen2020dual} when $\rho$ is positively homogeneous; see \cite[Remark 3.19]{herdegen2020dual}.  Aside from the (very unusual) scenario where $\Pi^\rho_\nu = \emptyset$ for all $\nu \geq 0$ and $\rho_\nu \geq 0$ for all $\nu >0$, this definition of $\rho$-arbitrage is equivalent to \cite[Definition 3.17]{herdegen2020dual} when $\rho$ is positively homogeneous.  
Note, however, that the existence of a portfolio $\pi$ with $\mathbb{E}[X_\pi] > 0$ and $\rho(X_\pi) \leq 0$ alone does \emph{not} necessarily constitute a $\rho$-arbitrage since this “riskless profit” (in terms of $\rho$) may not be scaled.


(b) If $\rho$ is a positively homogeneous risk functional that is \emph{expectation bounded} ($\rho(X) \geq \mathbb{E}[-X]$ for all $X \in L$), the market admits strong $\rho$-arbitrage if and only if there exists a portfolio $\pi \in \RR^d$ (in fractions of wealth) such that $\rho(X_\pi) < 0$.  This is equivalent to the existence of a portfolio $(\theta^0,\theta) \in \RR^{1+d}$ (in numbers of shares) and $\epsilon > 0$ such that
\begin{equation*}
    \theta^0 S^0_0 + \theta \cdot S_0 \leq 0 \quad \textnormal{and} \quad \theta^0 S^0_1 + \theta \cdot S_1 - \epsilon \in \mathcal{A}_{\rho},
\end{equation*}
which is referred to as a \emph{good-deal of the second kind}, see e.g.~\cite{jaschke2001coherent}.  A \emph{good-deal of the first kind} is a portfolio $(\theta^0,\theta) \in \RR^{1+d} \setminus \{\mathbf{0}\}$ (in numbers of shares) such that
\begin{equation*}
    \theta^0 S^0_0 + \theta \cdot S_0 \leq 0 \quad \textnormal{and} \quad \theta^0 S^0_1 + \theta \cdot S_1 \in \mathcal{A}_{\rho}.
\end{equation*}
In our setting, this corresponds to a portfolio $\pi \in \RR^d \setminus \{\mathbf{0}\}$ (in fractions of wealth) with $X_\pi \in \mathcal{A}_{\rho}$. Thus, when $\rho$ is a positively homogeneous risk measure that satisfies the Fatou property and \emph{strict expectation boundedness} ($\rho(X) > \mathbb{E}[-X]$ for all non-constant $X \in L$), the existence of a good-deal of the first kind is equivalent to the market admitting $\rho$-arbitrage by \cite[Theorem 3.11 and Theorem  3.20]{herdegen2020dual}.
\end{remark}

(Strong) $\rho$-arbitrage is an extension of \emph{arbitrage of the first (second) kind}.  They coincide when $\rho$ is the worst-case risk measure, cf.~\cite[Proposition 3.22]{herdegen2020dual}. 

\begin{definition}
We say the market $(S^0,S)$ admits \emph{arbitrage of the first kind} if there exists a trading strategy $(\theta^0,\theta) \in \mathbb{R}^{1+d}$ (parametrised in numbers of shares) such that
	\begin{equation*}
	\theta^0 S^0_0 + \theta \cdot S_0 \leq 0, \quad \theta^0 S^0_1 + \theta \cdot S_1 \geq 0 \; \mathbb{P}\textnormal{-a.s.} \quad \textnormal{and} \quad \mathbb{P}[\theta^0 S^0_1 + \theta \cdot S_1 > 0] > 0.
\end{equation*}
It admits \emph{arbitrage of the  second kind} if there exists a trading strategy $(\theta^0,\theta) \in \mathbb{R}^{1+d}$ such that 
	\begin{equation*}
	\theta^0 S^0_0 + \theta \cdot S_0 < 0, \quad \textnormal{and} \quad \theta^0 S^0_1 + \theta \cdot S_1 \geq 0 \; \mathbb{P}\textnormal{-a.s.}
	\end{equation*} 
\end{definition}

\noindent It is not difficult to show that in general, arbitrage of the first kind implies $\rho$-arbitrage, and when $\rho$ is unbounded from below (e.g.~when $\rho$ is cash-invariant), arbitrage of the second kind implies strong $\rho$-arbitrage.  For more connections between classical arbitrage and (strong) $\rho$-arbitrage, see Corollary \ref{cor:rho infinity WC} and Theorem \ref{prop:A rho infinity larger than L+}.
 
\subsection{Sensitivity to large expected losses}

We first seek to understand under which conditions $\rho$-optimal portfolios exist (i.e., address (Q1)) and what properties $\rho$-optimal sets have.  To that end, we introduce the following axiom.

\begin{definition}
\label{def:WSLL}
The risk functional $\rho$  is said to satisfy \emph{sensitivity to large expected losses} on $\mathcal{Y} \subset L$ if for each $X \in \mathcal{Y} \setminus \{ 0 \}$ with $\mathbb{E}[X] \leq 0$, there exists $\lambda \in (0,\infty)$ such that $\lambda X \not\in \mathcal{A}_\rho$.
\end{definition}

The financial interpretation of sensitivity to large expected losses is clear:
For any nonzero position (in $\mathcal{Y}$) that has a nonpositive expectation, there is eventually a point where the scaled position is considered unacceptable by the person choosing the risk functional, in our case the regulator. It reflects a new notion of loss aversion and is a special case of the more general concept of \emph{sensitivity to large losses} which is studied in detail in \cite{HKM2024}.\footnote{For other existing notions of risk aversion in financial regulation, see \cite{mao2020risk}.}

\begin{remark}
\label{rmk:rho WSLL iff rho bar WSLL}

(a) We will often take $\mathcal{Y} \in \{ \cX, L\}$. Note that $\rho$ satisfies sensitivity to large expected losses on $\mathcal{Y} \subset L$ if and only if $\rho^\infty$ does.  When $\mathcal{Y} = L$, this is equivalent to $\mathcal{A}_{\rho^\infty}  \setminus \{0\} \subset \{X \in L : \mathbb{E}[X] > 0\}$, from which it follows that $\mathcal{A}_{\rho^\infty}$ is \emph{pointed}, i.e., $\mathcal{A}_{\rho^\infty} \cap (-\mathcal{A}_{\rho^\infty}) = \{0\}$.  Pointedness of $\mathcal{A}_{\rho^\infty}$ plays an important role in \cite[Section 4]{arduca2020fundamental}; cf.~also Remark \ref{rmk:applying dual char reg arb}(b).

(b)  It is often the case (see the examples in Section \ref{sec:Applications}) that $\rho$ is sensitive to large expected losses on the entire space $L$.  This is a more general concept than \emph{strict expectation boundedness} ($\rho(X) > \mathbb{E}[-X]$ for all non-constant $X \in L$), which played an important role in \cite{herdegen2020dual}.  The two properties are equivalent when $\rho$ is a positively homogeneous risk measure.
\end{remark}

By \cite[Theorem 3.11]{herdegen2020dual}, if $\rho$ is a positively homogeneous risk measure, then sensitivity to large expected losses together with the Fatou property implies that $\Pi^\rho_{\nu}$ is nonempty and compact for all $\nu$ with $\rho_\nu < \infty$. The same result also holds for risk functionals and this answers (Q1).

\begin{theorem}
\label{thm:WSTD equiv to boundedness and existence of rho optimal sets}
Assume $\rho$ is a risk functional that satisfies the Fatou property on $\mathcal{X}$ and sensitivity to large expected losses on $\cX$.  Then for any $\nu \geq 0$ with $\rho_\nu < \infty$, $\Pi^\rho_\nu$ is nonempty and compact. 
 \end{theorem}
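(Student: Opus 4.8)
The plan is to regard $\varphi \colon \RR^d \to (-\infty,\infty]$, $\varphi(\pi) := \rho(X_\pi)$, as the objective and to show (i) that $\varphi$ is lower semicontinuous and (ii) that the sublevel sets $\{\pi \in \Pi_\nu : \varphi(\pi) \le c\}$ are bounded. Since $\Pi_\nu$ is closed and $\rho_\nu < \infty$, a standard Weierstrass-type compactness argument then delivers that $\Pi^\rho_\nu$ is nonempty and compact. Throughout I would use that nonredundancy of the market makes the linear map $\pi \mapsto X_\pi$ injective, so that $X_\pi = 0$ forces $\pi = \0$.

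For (i), suppose $\pi_n \to \pi$ in $\RR^d$. Then $X_{\pi_n} = \pi_n \cdot (R - r\1) \to X_\pi$ $\P$-a.s., and as $(\pi_n)$ is bounded, say $|\pi_n| \le C$, we have $|X_{\pi_n}| \le C \sum_{i=1}^d |R^i - r| =: Y \in L$. The Fatou property on $\cX$ therefore yields $\rho(X_\pi) \le \liminf_n \rho(X_{\pi_n})$, so $\varphi$ is (sequentially, hence) lower semicontinuous; in particular its sublevel sets are closed.

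The crux is (ii). Suppose, for contradiction, that some sublevel set is unbounded, i.e.\ there are $\pi_n \in \Pi_\nu$ with $\varphi(\pi_n) \le c$ and $t_n := |\pi_n| \to \infty$. Passing to a subsequence, $e_n := \pi_n / t_n \to e$ with $|e| = 1$. Dividing $\E[X_{\pi_n}] = \nu$ by $t_n$ gives $\E[X_e] = 0$, and injectivity gives $X_e \ne 0$; together these force $\P[X_e < 0] > 0$. Weak sensitivity to large losses then supplies $\lambda > 0$ with $\rho(\lambda X_e) > 0$, so $\rho^\infty(X_e) > 0$. To exploit this, recall that star-shapedness makes $t \mapsto \rho(tX)/t$ nondecreasing, so $\rho^\infty(X) = \sup_{t>0}\rho(tX)/t$. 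Fixing $t > 0$ and taking $n$ large enough that $t_n \ge t$, star-shapedness gives $\rho(X_{\pi_n})/t_n = \rho(t_n X_{e_n})/t_n \ge \rho(t X_{e_n})/t$. Since $t X_{e_n} \to t X_e$ $\P$-a.s.\ under the domination $|tX_{e_n}| \le t\sum_{i=1}^d|R^i-r| \in L$, the Fatou property yields $\liminf_n \rho(t X_{e_n}) \ge \rho(t X_e)$, whence
\[
\liminf_{n\to\infty} \frac{\rho(X_{\pi_n})}{t_n} \ge \frac{\rho(t X_e)}{t}.
\]
Taking the supremum over $t > 0$ gives $\liminf_n \rho(X_{\pi_n})/t_n \ge \rho^\infty(X_e) > 0$, so $\rho(X_{\pi_n}) \to \infty$, contradicting $\rho(X_{\pi_n}) \le c$. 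Hence every sublevel set relative to $\Pi_\nu$ is bounded.

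Finally, fix $c > \rho_\nu$. The set $K_c := \{\pi \in \Pi_\nu : \varphi(\pi) \le c\}$ is nonempty, closed and bounded, hence compact, so a minimizing sequence has a subsequence converging to some $\pi^* \in \Pi_\nu$; lower semicontinuity gives $\rho(X_{\pi^*}) \le \rho_\nu < \infty$, so $\pi^* \in \Pi^\rho_\nu$ and $\Pi^\rho_\nu \ne \emptyset$. Moreover $\Pi^\rho_\nu = \{\pi \in \Pi_\nu : \rho(X_\pi) = \rho_\nu\}$ is a closed subset of the compact set $K_c$, hence compact. I expect the main obstacle to be the coercivity estimate (ii): the interchange of the limits in $n$ and in $t$ that transfers the strict positivity of the scalar recession value $\rho^\infty(X_e)$ into a genuine blow-up of $\rho(X_{\pi_n})$. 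It is exactly here that the monotonicity of $t \mapsto \rho(tX)/t$ (star-shapedness) and the Fatou property are used in tandem, and notably convexity of $\rho$ is \emph{not} needed.
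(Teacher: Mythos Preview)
Your proof is correct. Both arguments hinge on the same two ingredients—lower semicontinuity of $\pi\mapsto\rho(X_\pi)$ via the Fatou property, and a normalization/recession argument that converts an unbounded sublevel sequence into a nonzero direction $e\in\Pi_0$ violating weak sensitivity to large losses—so at the level of ideas there is no real difference.

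Where the organization diverges is this: the paper works with the auxiliary function $f_\rho(\pi)=\max\{\rho(X_\pi),0\}+|\E[X_\pi]|$ on all of $\RR^d$, first showing $A_0=\{f_\rho\le0\}$ is bounded (via a separate lemma packaging the normalization step) and then bootstrapping to $A_\delta$ for $\delta>0$ by a ``minimum on a sphere'' star-shapedness argument. You instead stay on the affine slice $\Pi_\nu$, normalize directly, and use the monotonicity of $t\mapsto\rho(tX)/t$ to compare $\liminf_n\rho(X_{\pi_n})/t_n$ against $\rho^\infty(X_e)>0$. Your route is more streamlined for this theorem in isolation, and makes the role of $\rho^\infty$ explicit. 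The paper's detour through $f_\rho$ buys a stronger intermediate fact—compactness of all sublevel sets $A_\delta$ of $f_\rho$ on $\RR^d$—which it then reuses verbatim in the proofs of Propositions~\ref{prop:properties of rho optimal boundary} (lower semicontinuity of $\nu\mapsto\rho_\nu$) and related results, where one needs compactness uniformly over a range of $\nu$'s rather than for a single fixed $\nu$.
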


\subsection{Optimal boundary}

We next answer (Q2).  A big step towards this is to understand the map $\nu \mapsto \rho_\nu$ from $\mathbb{R}_+$ to $[-\infty,\infty]$ whose graph corresponds exactly to the $\rho$-optimal boundary (but with the axes reversed). To this end, it turns out useful to relate the map $\nu \mapsto \rho_\nu$ to the map $\nu \mapsto \rho^{\infty}_\nu$.  We start by stating some basic properties. 

\begin{proposition}
	\label{prop:basic properties of rho optimal boundary}
	For a risk functional $\rho$, the map $\nu \mapsto \rho_\nu$ from $\mathbb{R}_+$ to $[-\infty,\infty]$ is star-shaped, i.e., for all $\nu \in \RR_+$ and $\lambda \in [1,\infty)$, $\rho_{\lambda \nu} \geq \lambda \rho_\nu$.  Moreover, $\rho_0 \leq 0$ and the map $\nu \to \rho^\infty_\nu$ from $\RR_+$ to $[-\infty,\infty]$ is a positively homogeneous majorant.
\end{proposition}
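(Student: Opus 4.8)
The plan is to reduce the entire proposition to two structural identities of the market, together with the defining axioms of $\rho$ and $\rho^\infty$, after which each assertion becomes a one-line manipulation of infima. The two identities are the linearity of the excess return, $X_{\lambda\pi} = \lambda X_\pi$ (immediate from \eqref{eq:excess return}, since $\pi \mapsto X_\pi$ is linear), and the scaling of the constraint sets, $\Pi_{\lambda\nu} = \lambda\Pi_\nu$ for $\lambda \neq 0$, recorded in \eqref{eq:Pi k}. Beyond these I use star-shapedness and normalisation of $\rho$, and the fact (stated after \eqref{eq:rho bar}) that $\rho^\infty$ is positively homogeneous and satisfies $\rho^\infty \geq \rho$ pointwise.

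First I would prove star-shapedness of $\nu \mapsto \rho_\nu$. Fix $\nu \geq 0$ and $\lambda \geq 1$. Using $\Pi_{\lambda\nu} = \lambda\Pi_\nu$ and the substitution $\pi = \lambda\pi'$ together with $X_{\lambda\pi'} = \lambda X_{\pi'}$, rewrite
\[
\rho_{\lambda\nu} = \inf_{\pi \in \Pi_{\lambda\nu}} \rho(X_\pi) = \inf_{\pi' \in \Pi_\nu} \rho(\lambda X_{\pi'}).
\]
Star-shapedness gives $\rho(\lambda X_{\pi'}) \geq \lambda \rho(X_{\pi'})$ for every $\pi'$, and since $f \geq g$ pointwise implies $\inf f \geq \inf g$ and a factor $\lambda > 0$ may be pulled out of an infimum, I obtain $\rho_{\lambda\nu} \geq \lambda\inf_{\pi'\in\Pi_\nu}\rho(X_{\pi'}) = \lambda\rho_\nu$. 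The bound $\rho_0 \leq 0$ is then immediate: $\mathbf{0} \in \Pi_0$ and normalisation gives $\rho(X_{\mathbf{0}}) = \rho(0) = 0$, whence $\rho_0 \leq 0$.

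Next I would treat the map $\nu \mapsto \rho^\infty_\nu$. The majorant property follows directly: since $\rho^\infty \geq \rho$ pointwise, $\rho^\infty(X_\pi) \geq \rho(X_\pi)$ for all $\pi \in \Pi_\nu$, and taking infima yields $\rho^\infty_\nu \geq \rho_\nu$ for every $\nu \geq 0$. For positive homogeneity I repeat the substitution argument but now with $\rho^\infty$ in place of $\rho$: for $\lambda > 0$,
\[
\rho^\infty_{\lambda\nu} = \inf_{\pi' \in \Pi_\nu} \rho^\infty(\lambda X_{\pi'}) = \inf_{\pi' \in \Pi_\nu} \lambda \rho^\infty(X_{\pi'}) = \lambda \rho^\infty_\nu,
\]
where the middle equality is positive homogeneity of $\rho^\infty$ (an \emph{equality}, in contrast with the star-shaped inequality), valid for all $\lambda > 0$. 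This shows $\nu \mapsto \rho^\infty_\nu$ is positively homogeneous and dominates $\nu \mapsto \rho_\nu$.

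The computations are short; the only genuine care is in the extended-real arithmetic. Since $\rho$ is $(-\infty,\infty]$-valued and the infima in \eqref{eq:def:optimal portfolio:rho nu} can equal $-\infty$, I would note explicitly that both $\inf f \geq \inf g$ (for $f \geq g$) and $\inf(\lambda f) = \lambda \inf f$ (for $\lambda > 0$) remain valid in $[-\infty,\infty]$, so the manipulations are legitimate even when $\rho_\nu$ or $\rho^\infty_\nu$ lies in $\{-\infty,+\infty\}$. I would also flag that the scaling $\Pi_{\lambda\nu} = \lambda\Pi_\nu$ requires $\lambda \neq 0$, so the degenerate case $\lambda = 0$ of the homogeneity statement is not covered by the substitution and is instead dispatched directly via $\rho^\infty(0) = 0$. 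I do not anticipate any real obstacle beyond this bookkeeping.
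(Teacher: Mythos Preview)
Your proof is correct and complete. The paper does not actually include a proof of this proposition (it is omitted as elementary), so there is nothing to compare against; your argument is precisely the natural one the authors presumably had in mind, using the scaling identity \eqref{eq:Pi k}, linearity of $\pi \mapsto X_\pi$, and the defining properties of $\rho$ and $\rho^\infty$.
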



As a consequence of this result, $\cO_{\rho}$ lies to the left of $\cO_{\rho^\infty}$ in the mean-risk plane.  Moreover, the function $\nu \mapsto \rho_\nu$ is increasing on the interval $\{ \nu \in [\nu^+,\infty) : \rho_\nu < \infty \}$ where
\begin{equation*}
    \nu^+ := \inf\{ \nu \geq 0 : \rho_\nu > 0 \} \in [0,\infty].
\end{equation*}
However, we lack knowledge concerning its behaviour on $(0,\nu^+)$.  The next result shows that sensitivity to large expected losses together with the Fatou property yields a stronger connection between $\cO_\rho$ and $\cO_{\rho^\infty}$ and gives us information concerning
\begin{equation*}
    \nu_{\min}:=\inf\{\nu \geq 0: \rho_{\nu'} > \rho_{\nu} \textnormal{ for all } \nu' > \nu \} \in [0,\nu^+] \quad \textnormal{and} \quad \rho_{\min}:=\inf\{\rho_\nu:\nu \geq 0\} \in [-\infty,0].
\end{equation*}
 
\begin{proposition}
\label{prop:properties of rho optimal boundary}
Let $\rho$ be a risk functional that satisfies the Fatou property on $\mathcal{X}$ and sensitivity to large expected losses on $\cX$.  Then $\nu \mapsto \rho_\nu$ is $(-\infty,\infty]$-valued and lower semi-continuous. Its smallest positively homogeneous majorant is given by $\nu \mapsto \rho_\nu^\infty$, and $\rho^\infty_1 > 0$ if and only if $\nu^+ < \infty$. Moreover, we have the following three cases:
\begin{enumerate}
    \item If $\rho^\infty_1 > 0$, then $\nu_{\min}< \infty$ and $\rho_{\min} = \rho_{\nu_{\min}} \in (-\infty,0]$.
    \item If $\rho^\infty_1=0$, then $\nu_{\min} \in [0,\infty]$ and $\rho_{\min} \in [-\infty,0]$.
    \item If $\rho^\infty_1 < 0$, then $\nu_{\min} = \infty$ and $\rho_{\min} = -\infty$.
\end{enumerate}
 \end{proposition}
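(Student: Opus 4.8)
The plan is to route everything through the function $g\colon\RR^d\to(-\infty,\infty]$, $g(\pi):=\rho(X_\pi)$, which is star-shaped about the origin and whose recession function is $g^\infty(\pi)=\lim_{t\to\infty}\rho(tX_\pi)/t=\rho^\infty(X_\pi)$. First I would record that $g$ is lower semi-continuous: if $\pi_n\to\pi$ then $X_{\pi_n}\to X_\pi$ $\P$-a.s.\ with $|X_{\pi_n}|\le M\sum_{i=1}^d|R^i-r|=:Y\in L$ for $\pi_n$ in a bounded neighbourhood, so the Fatou property on $\cX$ gives $g(\pi)\le\liminf_n g(\pi_n)$. The map of interest is the partial minimisation $h(\nu):=\rho_\nu=\inf_{\pi\in\Pi_\nu}g(\pi)$. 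The first claim, that $h$ is $(-\infty,\infty]$-valued, is immediate from Theorem~\ref{thm:WSTD equiv to boundedness and existence of rho optimal sets}: if $\rho_\nu<\infty$ then $\Pi^\rho_\nu\neq\emptyset$, so $\rho_\nu=\rho(X_\pi)\in(-\infty,\infty)$ for an optimiser $\pi$.

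The single engine behind the remaining claims is a \emph{boundedness principle}, and this is the step I expect to be the main obstacle. By Remark~\ref{rmk:rho WSLL iff rho bar WSLL}(b), $\rho^\infty$ also satisfies weak sensitivity to large losses on $\cX$; since $\rho^\infty$ is positively homogeneous, this says that any $\pi$ with $\E[X_\pi]=0$ and $\rho^\infty(X_\pi)\le0$ must have $X_\pi\ge0$ $\P$-a.s., hence $X_\pi=0$ and, by nonredundancy, $\pi=0$. Equivalently, $\{\pi:g^\infty(\pi)\le0\}\cap\Pi_0=\{0\}$. I would then rule out any sequence $(\pi_n)$ with $\|\pi_n\|\to\infty$, $\E[X_{\pi_n}]/\|\pi_n\|\to0$ and $\limsup_n\rho(X_{\pi_n})/\|\pi_n\|\le0$: normalising $\hat\pi_n:=\pi_n/\|\pi_n\|\to\hat\pi$ (a unit vector), one reads off $\E[X_{\hat\pi}]=0$, while the recession lemma of Appendix~\ref{app:star-shapedness} (applicable since $g$ is lower semi-continuous) gives $\rho^\infty(X_{\hat\pi})=g^\infty(\hat\pi)\le\liminf_n\rho(X_{\pi_n})/\|\pi_n\|\le0$; the triviality of the recession cone then forces $\hat\pi=0$, contradicting $\|\hat\pi\|=1$.

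With this in hand, lower semi-continuity of $h$ (second claim) follows by taking $\nu_n\to\nu_0$ with $\rho_{\nu_n}\to c:=\liminf_n\rho_{\nu_n}$, which may be assumed finite, and choosing optimisers $\pi_n\in\Pi^\rho_{\nu_n}$; since $\rho(X_{\pi_n})$ and $\E[X_{\pi_n}]=\nu_n$ are bounded, the boundedness principle forces $(\pi_n)$ bounded, a subsequence converges to some $\pi_0\in\Pi_{\nu_0}$ (closedness of $\Pi_{\nu_0}$ and continuity of $\pi\mapsto\E[X_\pi]$), and Fatou yields $\rho_{\nu_0}\le\rho(X_{\pi_0})\le c$. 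For the recession identity (third claim), Proposition~\ref{prop:basic properties of rho optimal boundary} already exhibits $\nu\mapsto\rho^\infty_\nu$ as a positively homogeneous majorant of $h$, so $h^\infty\le\rho^\infty_\cdot$ because its recession function $h^\infty$ is, by Appendix~\ref{app:star-shapedness}, the \emph{smallest} such majorant. For the reverse inequality I fix $\nu>0$, pick $t_n\to\infty$ with $\rho_{t_n\nu}/t_n\to h^\infty(\nu)$ (assumed finite) and optimisers $\pi_n\in\Pi^\rho_{t_n\nu}$, and rescale $\tilde\pi_n:=\pi_n/t_n\in\Pi_\nu$; applying the boundedness principle to $(\pi_n)$ shows $(\tilde\pi_n)$ bounded, a subsequence converges to $\tilde\pi\in\Pi_\nu$, and the recession lemma gives $\rho^\infty_\nu\le\rho^\infty(X_{\tilde\pi})\le\liminf_n\rho(t_nX_{\tilde\pi_n})/t_n=h^\infty(\nu)$. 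The case $\nu=0$ is trivial since both sides vanish.

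The equivalence and the trichotomy are then bookkeeping. Writing $a:=\rho^\infty_1$, positive homogeneity gives $\rho^\infty_\nu=\nu a$ for $\nu>0$, and the third claim identifies $h^\infty(1)=a$. If $a>0$ then $\rho_t/t\uparrow a>0$, so $\rho_t>0$ for large $t$ and $\nu^+<\infty$; conversely, if $\nu^+<\infty$ pick $\nu_1>0$ with $\rho_{\nu_1}>0$, so $\nu_1 a=h^\infty(\nu_1)\ge h(\nu_1)=\rho_{\nu_1}>0$ and $a>0$. If $a<0$, then $\rho_\nu\le\rho^\infty_\nu=\nu a\to-\infty$ gives $\rho_{\min}=-\infty$, and since $\rho_{\nu'}\to-\infty$ the defining set of $\nu_{\min}$ is empty, so $\nu_{\min}=\infty$. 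If $a=0$, then $\nu^+=\infty$, whence $\rho_\nu\le0$ throughout and only the stated ranges $\nu_{\min}\in[0,\infty]$, $\rho_{\min}\in[-\infty,0]$ remain. If $a>0$, then $\rho_\nu>0$ for $\nu>\nu^+$ (monotonicity on $[\nu^+,\infty)$), so the lower semi-continuous $h$ attains its minimum over $[0,\infty)$ on the compact interval $[0,\nu^+]$; one then checks that $\nu_{\min}$ equals the \emph{largest} minimiser $\nu^*\in[0,\nu^+]$ (star-shapedness makes $h$ strictly increasing once positive, so the defining set is exactly $[\nu^*,\infty)$), giving $\nu_{\min}<\infty$ and $\rho_{\min}=\rho_{\nu_{\min}}\in(-\infty,0]$.
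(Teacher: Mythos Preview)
Your argument is correct and reaches the same conclusions, but it is packaged differently from the paper's proof, most notably for the recession identity $\rho^\infty_\nu=h^\infty(\nu)$. The paper introduces the monotone family $\rho^t(X):=\rho(tX)/t$, notes $\rho^t\uparrow\rho^\infty$, and then appeals to a separate min--limit interchange result (Proposition~\ref{prop:interchange min and lim}) on a fixed compact set $K$ extracted from the proof of Theorem~\ref{thm:WSTD equiv to boundedness and existence of rho optimal sets}. Your route avoids that auxiliary proposition entirely: you reuse your boundedness principle to keep the rescaled optimisers $\tilde\pi_n$ in a compact set and then pass to the limit via the recession inequality. This is arguably more direct and self-contained. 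For lower semi-continuity the two proofs are morally the same (both use weak sensitivity to large losses to force boundedness of a minimising sequence), the paper recycling the sublevel sets $A_\delta$ from Theorem~\ref{thm:WSTD equiv to boundedness and existence of rho optimal sets} while you rebuild the mechanism from the triviality of $\{g^\infty\le0\}\cap\Pi_0$.

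Two small points to tighten. First, the ``recession lemma of Appendix~\ref{app:star-shapedness}'' you invoke---namely $g^\infty(\hat\pi)\le\liminf_n g(t_n x_n)/t_n$ for $t_n\to\infty$, $x_n\to\hat\pi$---is not stated there; Appendix~\ref{app:star-shapedness} only gives the formula $g^\infty(x)=\lim_{t\to\infty}g(tx)/t$. The inequality is immediate (for fixed $s\ge1$ and $n$ large, star-shapedness gives $g(t_n x_n)/t_n\ge g(s x_n)/s$, then lower semi-continuity of $g$ and $\sup_s$), but you should spell it out rather than cite it. Second, in case~(a) your parenthetical ``the defining set is exactly $[\nu^*,\infty)$'' overstates what holds without convexity: $h$ need not be monotone on $(\nu^*,\nu^+)$, so the set $\{\nu:\rho_{\nu'}>\rho_\nu\text{ for all }\nu'>\nu\}$ can fail to be an interval. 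What you actually need---and what the paper proves---is only that $\nu^*$ lies in the set (since $\rho_{\nu'}>m$ for all $\nu'>\nu^*$, using that $\nu^*$ is the \emph{largest} minimiser and $\rho_{\nu'}>0\ge m$ for $\nu'>\nu^+$) and that no $\nu<\nu^*$ does; this already gives $\nu_{\min}=\nu^*$.
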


\begin{remark}
\label{rmk:smallest positively homogeneous majorant}
One might be inclined to think that $\nu \mapsto \rho_\nu^\infty$ is \emph{always} the smallest positively homogeneous majorant of $\nu \mapsto \rho_\nu$.  However, we need both the Fatou property on $\mathcal{X}$ and sensitivity to large expected losses on $\mathcal{X}$ in order for this to hold; cf.~Example \ref{exa:no WSTLL} in Appendix \ref{app:examples}.
\end{remark}

The next result shows that for a \emph{convex} risk functional $\rho$ satisfying the Fatou property and sensitivity to large expected losses, $\mathcal{O}_\rho$ is continuous (except where it possibly jumps to $\infty$) and convex.  It has a strong connection with $\mathcal{O}_{\rho^{\infty}}$ by Proposition \ref{prop:properties of rho optimal boundary}, and by Theorem \ref{thm:WSTD equiv to boundedness and existence of rho optimal sets} every point on the $\rho$-optimal boundary (with finite risk) corresponds to a $\rho$-optimal portfolio. Figure \ref{optimal boundary picture} gives a graphical illustration.

\begin{proposition}
\label{prop:convex and WSTD optimal boundary properties}
Let $\rho$ be a convex risk functional that satisfies the Fatou property on $\mathcal{X}$ and sensitivity to large expected losses on $\cX$.  Then the map $\nu \mapsto \rho_\nu$ from $\RR_+$ to $(-\infty,\infty]$ is convex, continuous on the closed set $\{\nu \in \RR_+ : \rho_{\nu} < \infty\}$ and
\begin{equation*}
    \rho^\infty_1 > 0 \iff \nu^+ < \infty \iff \nu_{\min} < \infty.
\end{equation*}
Moreover, we have the following three cases:
\begin{enumerate}
    \item If $\rho^\infty_1  > 0$, the map $\nu \mapsto \rho_\nu$ is nonincreasing on $[0,\nu_{\textnormal{min}}]$, increasing on the closed interval $\{ \nu \in [\nu_{\textnormal{min}},\infty) : \rho_\nu < \infty \}$ and bounded below by $\rho_{\min} = \rho_{\nu_{\min}} \in (-\infty,0]$.
    \item If $\rho^\infty_1=0$, the map $\nu \mapsto \rho_\nu$ is nonincreasing on $\RR_+$ and $\rho_{\min} \in [-\infty,0]$. 
    \item If $\rho^\infty_1 < 0$, the map $\nu \mapsto \rho_\nu$ is decreasing on $\RR_+$ and $\rho_{\min} = -\infty$.
\end{enumerate}
\end{proposition}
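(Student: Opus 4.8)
The plan is to derive all assertions from the convexity of the map $\nu\mapsto\rho_\nu$, using as a black box the properties already recorded in Proposition~\ref{prop:properties of rho optimal boundary}: that $\nu\mapsto\rho_\nu$ is $(-\infty,\infty]$-valued and lower semi-continuous, that its positively homogeneous majorant is $\nu\mapsto\rho^\infty_\nu=\nu\rho^\infty_1$, and that $\rho^\infty_1>0\iff\nu^+<\infty$. Convexity itself is immediate: since $\pi\mapsto X_\pi$ and $\pi\mapsto\E[X_\pi]$ are linear, given $\pi_1\in\Pi_{\nu_1}$, $\pi_2\in\Pi_{\nu_2}$ and $\lambda\in[0,1]$ the portfolio $\lambda\pi_1+(1-\lambda)\pi_2$ lies in $\Pi_{\lambda\nu_1+(1-\lambda)\nu_2}$ and satisfies $X_{\lambda\pi_1+(1-\lambda)\pi_2}=\lambda X_{\pi_1}+(1-\lambda)X_{\pi_2}$; convexity of $\rho$ and passing to the infimum over $\pi_1,\pi_2$ then give $\rho_{\lambda\nu_1+(1-\lambda)\nu_2}\le\lambda\rho_{\nu_1}+(1-\lambda)\rho_{\nu_2}$.

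Next I would treat continuity and the domain. A convex, lower semi-continuous, $(-\infty,\infty]$-valued function on an interval of $\RR$ is automatically continuous on its effective domain: on the interior this is the standard continuity of finite convex functions, and at an included endpoint $c$ the one-sided limit $\lim_{\nu\to c}\rho_\nu$ exists (monotone difference quotients), dominates $\rho_c$ by convexity and is dominated by it through lower semi-continuity, so the two agree. For the effective domain $\{\nu\in\RR_+:\rho_\nu<\infty\}$ I would invoke the majorant inequality $\rho_\nu\le\rho^\infty_\nu=\nu\rho^\infty_1$ from Proposition~\ref{prop:properties of rho optimal boundary}: whenever $\rho^\infty_1<\infty$ this forces $\rho_\nu<\infty$ for every $\nu\ge0$, so the domain equals $\RR_+$, hence is closed. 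This already disposes of cases (b) and (c) and of the subcase $\rho^\infty_1\in(0,\infty)$ of case (a).

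The equivalences and the three cases then follow from the sign of the asymptotic slope. Because $\nu\mapsto\nu\rho^\infty_1$ is the positively homogeneous majorant of the convex map $\nu\mapsto\rho_\nu$, the quotients $\rho_\nu/\nu$ increase to $\rho^\infty_1$, and every difference quotient of $\nu\mapsto\rho_\nu$ is bounded above by $\rho^\infty_1$. Hence if $\rho^\infty_1\le0$ all difference quotients are nonpositive, so $\nu\mapsto\rho_\nu$ is nonincreasing on $\RR_+$ and $\nu_{\min}=\infty$; if moreover $\rho^\infty_1<0$ the quotients are strictly negative, giving strict decrease, while $\rho_\nu\le\nu\rho^\infty_1\to-\infty$ yields $\rho_{\min}=-\infty$ (case (c)), and $\rho^\infty_1=0$ gives the nonincreasing behaviour with $\rho_{\min}\in[-\infty,0]$ (case (b)). Conversely, if $\rho^\infty_1>0$ then by Proposition~\ref{prop:properties of rho optimal boundary}(a) the minimum is attained at a finite $\nu_{\min}$ with $\rho_{\min}=\rho_{\nu_{\min}}\in(-\infty,0]$; convexity forces $\nu\mapsto\rho_\nu$ to be nonincreasing on $[0,\nu_{\min}]$ and, by the definition of $\nu_{\min}$ together with convexity, increasing on $\{\nu\in[\nu_{\min},\infty):\rho_\nu<\infty\}$ (case (a)). Combining $\rho^\infty_1>0\iff\nu_{\min}<\infty$ with $\rho^\infty_1>0\iff\nu^+<\infty$ from Proposition~\ref{prop:properties of rho optimal boundary} yields the stated chain of equivalences.

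The one genuinely delicate point is the closedness of the effective domain in the super-linear regime $\rho^\infty_1=+\infty$, where the bound $\rho_\nu\le\nu\rho^\infty_1$ is vacuous and $\rho_\nu$ could in principle blow up as $\nu$ approaches the right endpoint $c$ of its domain. Here I would exploit the attainment in Theorem~\ref{thm:WSTD equiv to boundedness and existence of rho optimal sets}: taking $\nu_k\uparrow c$ with optimizers $\pi_k\in\Pi^\rho_{\nu_k}$, I would try to extract a limiting optimizer in $\Pi_c$ of finite risk via the Fatou property. The obstruction is that bounded optimizers do not by themselves rule out super-linear blow-up of $\rho_\nu$; one must use weak sensitivity to control the unbounded case, where the normalised directions $\pi_k/\|\pi_k\|$ converge to some $u$ with $\E[X_u]=0$ and $X_u\neq0$, on which weak sensitivity (via Remark~\ref{rmk:rho WSLL iff rho bar WSLL}) forces $\rho^\infty(X_u)>0$. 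Making this limiting/compactness argument watertight -- reconciling possible unboundedness of the optimizers with the growth of $\nu\mapsto\rho_\nu$ -- is the step I expect to require the most care.
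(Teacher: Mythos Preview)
Your approach is essentially the same as the paper's: establish convexity of $\nu\mapsto\rho_\nu$ directly from convexity of $\rho$ and linearity of $\pi\mapsto X_\pi$, then combine with the lower semi-continuity from Proposition~\ref{prop:properties of rho optimal boundary} and standard one-variable convex analysis. The paper's proof is terser --- it dispatches continuity and closedness of the domain in a single sentence and defers cases (a)--(c) to ``standard properties of convex functions'' together with Propositions~\ref{prop:basic properties of rho optimal boundary} and~\ref{prop:properties of rho optimal boundary} --- whereas you spell out the monotonicity claims explicitly via the asymptotic slope $\rho^\infty_1$ and the majorant bound $\rho_\nu\le\nu\rho^\infty_1$. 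That bound is a clean addition: it immediately gives $\{\rho_\nu<\infty\}=\RR_+$ whenever $\rho^\infty_1<\infty$, which the paper does not make explicit.

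Your caution about closedness of the effective domain when $\rho^\infty_1=+\infty$ is well placed. The paper's one-sentence argument (``lsc plus continuity on the interior implies finiteness on the closure'') is not a general fact about proper convex lsc functions on $\RR$ --- think of $\nu\mapsto 1/(c-\nu)$ on $[0,c)$ --- so in that regime one does need the additional structure you sketch (attainment of optimisers from Theorem~\ref{thm:WSTD equiv to boundedness and existence of rho optimal sets} plus the compactness of the sublevel sets $A_\delta$ used in the proof of Proposition~\ref{prop:properties of rho optimal boundary}). Neither you nor the paper writes this step out in full.

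One minor slip: at an included endpoint $c$ of the domain, it is \emph{convexity} (via the chord inequality $\rho_\nu\le\tfrac{c-\nu}{c}\rho_0+\tfrac{\nu}{c}\rho_c$) that gives $\lim_{\nu\to c^-}\rho_\nu\le\rho_c$, and \emph{lower semi-continuity} that gives $\rho_c\le\liminf_{\nu\to c}\rho_\nu$; you have the two attributions reversed, though the conclusion $\lim_{\nu\to c^-}\rho_\nu=\rho_c$ is of course unaffected.
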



\begin{figure}[H]
	\centering
	\includegraphics[width=1.0\textwidth]{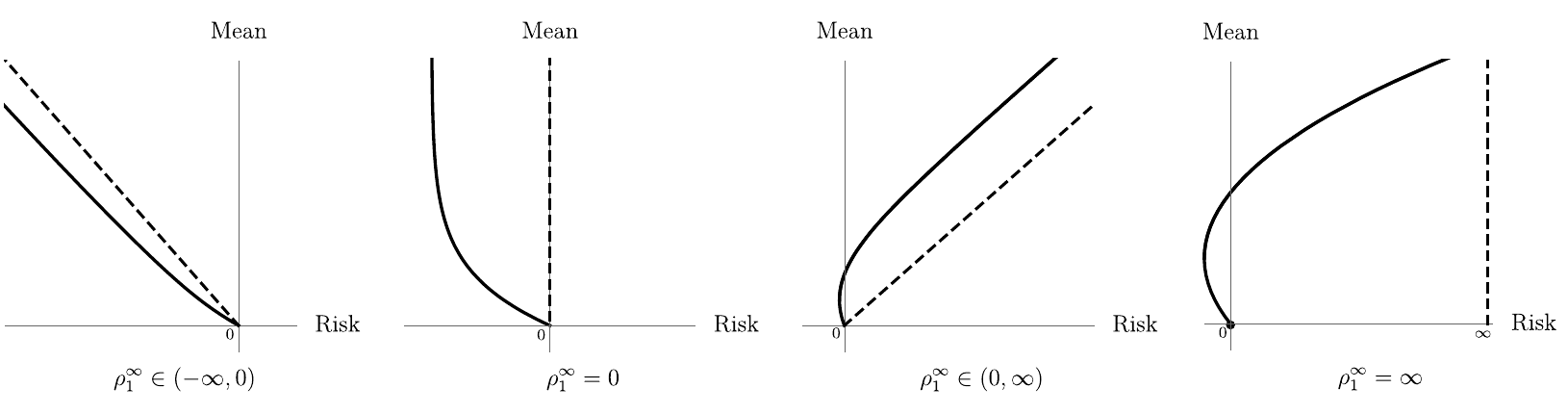}
	\caption{The $\rho$-optimal boundary (solid) and $\rho^\infty$-optimal boundary (dashed) when $\rho$ satisfies convexity, the Fatou property on $\cX$, sensitivity to large expected losses on $\cX$, and $\rho_\nu < \infty$ for all $\nu \geq 0$.  The black dot in the right lower panel is part of the $\rho^{\infty}$-optimal boundary.}
	\label{optimal boundary picture}
\end{figure}

\subsection{(Strong) $\rho$-arbitrage}
\label{subsec:relating rho and rho bar arbitrage}

We can now give primal characterisations of (strong) $\rho$-arbitrage in terms of the sign of $\rho^\infty_1$.  In particular, note that $\rho$-arbitrage is fully characterised by the sign of $\rho^\infty_1$ when $\rho$ satisfies the Fatou property on $\cX$ and sensitivity to large expected losses on $\cX$.  

\begin{theorem}
\label{prop:strong reg arb:first characterisation}
For a risk functional $\rho$, we have \textnormal{(a)} $\iff$ \textnormal{(b)} $\implies$ \textnormal{(c)} for the statements:
\begin{enumerate}
    \item $\rho^\infty_1 < 0$.
    \item The market $(S^0,S)$ admits strong $\rho^\infty$-arbitrage.
    \item The market $(S^0,S)$ admits strong $\rho$-arbitrage.
\end{enumerate}
\end{theorem}

\begin{remark}
\label{rmk:strong rho arb not equivalent to strong rho infinity arb}
Remark  \ref{rem:strong rho bar arbitrage} shows that the implication ``(c) $\implies$ (b)'' does not hold even if $\rho$ satisfies the Fatou property and sensitivity to large expected losses on $\cX$.  That being said, Theorem \ref{thm: no strong reg arb} provides conditions for when strong $\rho$-arbitrage \emph{is equivalent} to strong $\rho^\infty$-arbitrage.
\end{remark}

\begin{theorem}
\label{thm:reg arb:first characterisation}
Assume $\rho$ is a risk functional that satisfies the Fatou property on $\mathcal{X}$ and sensitivity to large expected losses on $\cX$.  Then the following are equivalent:
\begin{enumerate}
    \item $\rho^\infty_1 > 0 $.
    \item The market $(S^0,S)$ does not admit $\rho^\infty$-arbitrage.
    \item The market $(S^0,S)$ does not admit $\rho$-arbitrage.
\end{enumerate}
\end{theorem}

\begin{remark}
\label{rmk:rho arb scalable acceptable deal relation}
(a) Note that the implication ``(c) $\implies$ (b)'' remains true even without the Fatou property on $\mathcal{X}$ or sensitivity to large expected losses on $\cX$. However, Example \ref{exa:no WSTLL} in Appendix~\ref{app:examples} shows that in order for the reverse implication to hold true we need both these properties simultaneously (see also Remark \ref{rmk:smallest positively homogeneous majorant}).

(b) By Theorem \ref{thm:reg arb:first characterisation} and Remark \ref{rmk:rho WSLL iff rho bar WSLL}(a), when $\rho$ is sensitive to large expected losses and satisfies the Fatou property, then the market admits $\rho$-arbitrage if and only if there exists a portfolio $\pi \in \RR^d \setminus \{\mathbf{0}\}$ (in fractions of wealth) with $X_\pi \in \mathcal{A}_{\rho^\infty}$.  This is equivalent to the existence of a portfolio $(\theta^0,\theta) \in \RR^{1+d} \setminus \{\mathbf{0}\}$ (in numbers of shares) such that
\begin{equation*}
    \theta^0 S^0_0 + \theta \cdot S_0 \leq 0 \quad \textnormal{and} \quad \theta^0 S^0_1 + \theta \cdot S_1 \in \mathcal{A}_{\rho}^\infty.
\end{equation*}
This is referred to as a \emph{scalable good-deal} in \cite{arduca2020fundamental}. 
\end{remark}

\begin{remark}
(a) Combining Theorems \ref{prop:strong reg arb:first characterisation} and \ref{thm:reg arb:first characterisation} allows us to conclude the following when $\rho$ satisfies the Fatou property on $\cX$ and sensitivity to large expected losses on $\cX$: if $\rho^\infty_1 < 0$, then the market admits strong $\rho$-arbitrage; if $\rho^\infty_1 = 0$, then the market admits $\rho$-arbitrage and it may or may not admit strong $\rho$-arbitrage; if $\rho^\infty_1 > 0$, then the market does not admit $\rho$-arbitrage.  

(b) Fix $\alpha \in (0,1)$.  Let $\rho \equiv \textnormal{OCE}^l$ (see Section \ref{subsec:loss function risk measures}) where $l$ is a \emph{loss function} satisfying
   \begin{equation*}
       \lim_{x \to -\infty} \frac{l(x)}{x} = 0 \quad \textnormal{and} \quad \lim_{x \to \infty} \frac{l(x)}{x} = \frac{1}{\alpha}.
   \end{equation*}
One can verify that $\rho^\infty \equiv \ES^\alpha$ by Proposition 5.6, together with the dual representation of Expected Shortfall and the fact $[0,\tfrac{1}{\alpha}] \supset \textnormal{dom} \, l^* \supset (0,\tfrac{1}{\alpha})$.  Performing mean-$\rho$ portfolio selection in the binomial market from Example \ref{exa:binomial} in Appendix \ref{app:examples} with $\alpha < \alpha^*$, $\alpha = \alpha^*$ and $\alpha > \alpha^*$ illustrates simple cases where, respectively, $\rho^\infty_1 > 0$, $\rho^\infty_1 = 0$ and $\rho^\infty_1 < 0$ can occur.
\end{remark}

Theorems \ref{prop:strong reg arb:first characterisation} and \ref{thm:reg arb:first characterisation} go a long way in answering (Q2) from the introduction. When $\cA_{\rho^\infty}$ is equal to the positive cone, we can further establish a useful relationship between $\rho$-arbitrage and arbitrage of the first kind.  This is a consequence of Theorem \ref{thm:reg arb:first characterisation} and Remark \ref{rmk:rho WSLL iff rho bar WSLL}(a).

\begin{corollary}
\label{cor:rho infinity WC}
    Assume $\rho$ is a risk functional that satisfies the Fatou property on $\mathcal{X}$ and $\cA_{\rho^\infty} = L_+$.  Then the following are equivalent:
\begin{enumerate}
    \item The market $(S^0,S)$ does not admit $\rho$-arbitrage.
    \item The market $(S^0,S)$ does not admit arbitrage of the first kind.
\end{enumerate}
\end{corollary}

\noindent The following result shows that when $\cA_{\rho^\infty} \supsetneq L_+$, this equivalence breaks down. It can be viewed as a generalisation of \cite[Theorem 3.23]{herdegen2020dual}.

\begin{theorem}
\label{prop:A rho infinity larger than L+}
Assume $\rho$ is a (cash-invariant) risk functional such that $\cA_{\rho^\infty} \supsetneq L_+$.  Then there exists a market with returns in $L$ that does not admit arbitrage of the first kind, but admits (strong) $\rho$-arbitrage.  
\end{theorem}

\indent The primal characterisations of (strong) $\rho$-arbitrage in Theorems \ref{prop:strong reg arb:first characterisation} and \ref{thm:reg arb:first characterisation} are also useful when returns are \emph{elliptically distributed}.  We will apply them to that setting in Section \ref{subsec:consistent pricing elliptical}, but before this we turn to (Q3).

\begin{theorem}
    \label{thm:well posedness}
Assume $\rho$ is a risk functional that satisfies the Fatou property on $\mathcal{X}$ and sensitivity to large expected losses on $\cX$.  The following are equivalent:
\begin{enumerate}
    \item The market $(S^0,S)$ does not admit $\rho$-arbitrage.
    \item For any $\nu^*, \rho^* \geq 0$, the mean-$\rho$ problems (1) and (2) admit at least one solution.
\end{enumerate}
\end{theorem}

\subsection{(Strong) $\rho$-consistent pricing for elliptical returns}
\label{subsec:consistent pricing elliptical}
The criteria in Theorems \ref{prop:strong reg arb:first characterisation} and \ref{thm:reg arb:first characterisation} are rather indirect as they rely on computing the sign of $\rho^\infty_1$.  This is nontrivial in general.  However, one can easily extend \cite[Section 3.4]{herdegen2020dual} to give a characterisation of (strong) $\rho$-arbitrage in terms of the \emph{maximal Sharpe ratio} when returns are elliptical.  As a consequence, in this setting, we can answer (Q4) from the introduction. To this end, we introduce the concept of \emph{(strong) $\rho$-consistent pricing}.

\begin{definition}
\label{def:rho consistent pricing}
Let $X \in L$.  We say $x \in \RR$ is a \emph{(strong) $\rho$-consistent} price for $X$ if the augmented market $(S^0,S,S^{d+1})$ with $S^{d+1}_0=x$ and $S^{d+1}_1 = X$ does not admit (strong) $\rho$-arbitrage.  We denote the set of (strong) $\rho$-consistent prices for $X$ by $I_\rho(X)$ ($I_{\rho}^s(X)$).
\end{definition}

The idea behind the above definition is as follows: If the original market $(S^0,S)$ is free of (strong) $\rho$-arbitrage, but the augmented market $(S^0,S,S^{d+1})$ is not, then the new asset is priced ``incorrectly'' because a situation arises that is ``too good to be true'', in the sense that the augmented market allows for arbitrary high returns with non-positive risk (with risk tending to $-\infty$).  

\begin{remark}
\label{rem:rho consistent pricing}
(a) Since (strong) $\WC$-arbitrage coincides with arbitrage of the first (second) kind, (cf.~\cite[Proposition 3.22]{herdegen2020dual}) we have that $I_\WC(X)$ coincides with the set of arbitrage-of-the-first-kind-free prices ($I^s_\WC(X)$ coincides with the set of arbitrage-of-the-second-kind-free prices). 

(b) Due to the fact that arbitrage of the first kind implies $\rho$-arbitrage, we have $I_\rho(X) \subset I_\WC(X)$.  Similarly,  $I^s_\rho(X) \subset I^s_\WC(X)$ when $\rho$ is unbounded from below.  Thus, $\rho$-consistent pricing can be seen as a generalisation of classical arbitrage pricing that may result in sharper and more acceptable bounds, cf.~Theorem \ref{cor:rho arb pricing bounds} and Example \ref{exa:2 dim market}.

(c) Since (strong) $\rho$-arbitrage is closely linked to good deals of the first (second) kind, cf.~Remark \ref{rmk:(strong) rho arb definition}, (strong) $\rho$-consistent pricing is closely linked to good-deal pricing, see e.g.~\cite{vcerny2002theory, jaschke2001coherent}. Indeed, the main goal is to price in a way that ensures the absence of ``attractive'' investment opportunities.
\end{remark}

We begin by outlining some basic properties of (strong) $\rho$-consistent pricing.

\begin{proposition}
\label{prop:basic pricing properties}
    Assume $\rho$ is a risk functional, let $X \in L$ and $(S^0,S)$ the original market.
\begin{enumerate}
    \item We have $I_\rho(X) \subset I^s_{\rho}(X) \subset I^s_{\rho^\infty}(X)$.  If $\rho$ satisfies the Fatou property and sensitivity to large expected losses, then $I_\rho(X)=I_{\rho^\infty}(X)$.  In this case, if $\cA_{\rho^\infty} = L_+$, then the set of $\rho$-consistent prices coincides with the set of arbitrage-(of the first kind)-free prices. 
    \item If the market $(S^0,S)$ admits (strong) $\rho$-arbitrage, then $I_\rho(X) = \emptyset$ ($I_{\rho}^s(X) = \emptyset$).
    \item If the market $(S^0,S)$ does not admit (strong) $\rho$-arbitrage and $X = a \cdot (S^0_1,S^1_1,\dots,S^d_1)$ for some $a \in \RR^{d+1}$, then $I_\rho(X) = \{a \cdot (S^0_0,S^1_0,\dots,S^d_0)\}$ ($I^{s}_{\rho}(X) = \{a \cdot (S^0_0,S^1_0,\dots,S^d_0)\}$, if $\rho$ is unbounded from below, and $I^{s}_{\rho}(X) = \RR$ otherwise).
\end{enumerate}
\end{proposition}
Using only the primal characterisations of Theorems \ref{prop:strong reg arb:first characterisation} and \ref{thm:reg arb:first characterisation}, not much can be said about the set of (strong) $\rho$-consistent prices for an external contract beyond Proposition \ref{prop:basic pricing properties} in general. Notwithstanding, if asset prices are \emph{elliptically distributed} and $\rho$ is \emph{law-invariant} we can give explicit pricing intervals.

\begin{definition}\label{Elliptical Distribution Definition}
An $\mathbb{R}^d$-valued random vector $X = (X_{1},\dots,X_{d})$ has an \emph{elliptical distribution} if there exists a \emph{location vector} $a \in \RR^d$, a $d \times d$ nonnegative definite \emph{dispersion matrix} $B \in \RR^{d \times d}$, and a \emph{characteristic generator} $\psi:  \left[0,\infty \right) \rightarrow \mathbb{R}$ such that the characteristic function of $X$, $\phi_{X}$ can be expressed as
\begin{equation*}
\phi_{X}(t) = e^{i t^\top a}\psi(t^{T}Bt) \quad \textnormal{for all} \ t \in \mathbb{R}^{d}.
\end{equation*}
In this case we write $X \sim \tilde E_{d}(a, B, \psi)$. 
\end{definition}

\begin{remark}\label{RemarkAboutElliptical} 
If $X$ has an elliptical distribution with finite second moments, $X$ is also characterised by its mean vector $\mu \in \RR^d$, covariance matrix $\Sigma \in \RR^{d \times d}$  and characteristic generator~$\psi$. Therefore, we may write $X \sim E_{d}\left(\mu, \Sigma, \psi\right)$; see \cite[Remark 3.27]{QRMbook} for details.
\end{remark}

\begin{definition}
\label{defn:law invariance}
A risk functional $\rho : L \to (-\infty, \infty]$ is called \emph{law-invariant} if $\rho(X_1)=\rho(X_2)$ whenever $X_1,X_2 \in L$ have the same law.
\end{definition}

The next result gives the precise set of (strong) $\rho$-consistent prices for an external contract in an elliptical market setting. To this end, in order to easily deal with the case that $x \leq 0$, we replace the asset $S^{d+1}$ by $\tilde{S}^{d+1} := S^{d+1} + (1-x) S^0$ as described in Remark \ref{rmk:positive prices} and similarly set $\tilde S^i := S^{i} + (1- S^i_0) S^0$, and then check if the market $(S^0, \tilde S,\tilde{S}^{d+1})$ admits (strong) $\rho$-arbitrage.

\begin{theorem}
\label{cor: elliptical market}
    Let $(S^0,S)$ be a market, $X \in L$ and assume $(S_1,X)$ has an elliptical distribution with mean vector $\mu$, positive definite covariance matrix $\Sigma$ and characteristic generator $\psi$.  Assume $\{Y \sim E_{1}(\mu_{Y}, \sigma^2_Y, \psi):\mu_Y \in \RR, \ \sigma^{2}_Y \geq 0 \} \subset L$ and let $Z \sim E_{1}\left(0, 1, \psi\right)$.  For $x \in \RR$, define $\tilde \mu(x) \in \RR^{d+1}$ and $\SR_{\max}(x)$ by
    \begin{equation*}
        \tilde \mu^i(x) := 
\begin{cases}
\mu^i - S^i_0(1+r), &i \in \{1, \ldots, d\}, \\
\mu^{d+1} - x(1+r), &i = d+1,
\end{cases} \quad  \text{and} \quad \SR_{\max}(x) := \sqrt{\tilde \mu(x)^\top \Sigma^{-1} \tilde \mu(x)}.
    \end{equation*} 
\begin{enumerate}
    \item If $\rho$ is a law-invariant risk measure and $\rho$-arbitrage is equivalent to $\rho^\infty$-arbitrage, then
    \begin{equation*}
        I_\rho(X) = I_{\rho^\infty}(X) = \{ x \in \RR : \SR_{\max}(x) < \rho^{\infty}(Z) \}. 
    \end{equation*}
    \item If $\rho$ is a law-invariant risk measure and strong $\rho$-arbitrage is equivalent to strong $\rho^\infty$-arbitrage, then
    \begin{equation*}
        I^s_{\rho}(X) = I^s_{\rho^\infty}(X) = \{ x \in \RR : \SR_{\max}(x) \leq \rho^{\infty}(Z) \}. 
    \end{equation*}
\end{enumerate}
\end{theorem}

\begin{remark}
\label{rmk: elliptical market pricing}
(Strong) $\rho$-arbitrage and (strong) $\rho^\infty$-arbitrage are in particular equivalent when $\rho \equiv \rho^\infty$, i.e., when $\rho$ is positively homogeneous.  More generally, Theorem \ref{thm:reg arb:first characterisation} (Theorem \ref{thm: no strong reg arb}) give conditions when (strong) $\rho$-arbitrage is equivalent to (strong) $\rho^\infty$-arbitrage.
\end{remark}

If $\rho^\infty (Z) < \infty$ (which is generically the case if $\rho^\infty \not\equiv \WC$),  Theorem \ref{cor: elliptical market} implies that the set of (strong) $\rho$-consistent prices is a bounded interval and hence meaningful --
unlike the set of no-arbitrage prices (which is $(-\infty,\infty)$ in this situation). We illustrate this with an example, where $\rho^\infty = \rho \equiv \ES^\alpha$, the Expected Shortfall at level $\alpha \in (0,1)$.

\begin{example}
\label{exa:2 dim market}
Assume $(S^1_1,X)$ has an elliptical distribution with characteristic generator $\psi$, mean vector $\mu = (\mu_1,\mu_2)$ and covariance matrix 
    \begin{equation*}
\Sigma = \begin{pmatrix}
\sigma_1^2 & \gamma \sigma_1 \sigma_2 \\
\gamma \sigma_1 \sigma_2 & \sigma_2^2 ,
\end{pmatrix}
    \end{equation*}
where $\gamma \in (-1,1)$ denotes the correlation between $S^1_1$ and $X$. Theorem \ref{cor: elliptical market} then gives
\begin{equation*}
    \SR_{\max}(x) := \sqrt{\tilde \mu(x) ^\top \Sigma^{-1}\tilde \mu(x)} = \sqrt{\tfrac{1}{1-\gamma^2}(\SR_{2}^2(x) - 2\gamma\SR_1\SR_2(x)+\SR_1^2)}
\end{equation*}
where $\SR_1 :=(\mu^1 - S^1_0 (1+r))/\sigma_1$ and $\SR_2(x):=(\mu^2 - x(1+r))/\sigma_2$ denote the Sharpe ratios of $\tilde S^1 = S^1 + (1-S^0) S^0$ and $\tilde{S}^2=X+(1-x)S^1$, respectively.

Fix $\alpha \in (0,1)$ and let $Z \sim E_1(0,1,\psi)$.  Note that by strict expectation boundedness $\ES^\alpha(Z) \in (0, \infty)$.\footnote{If $Z \sim \cN(0, 1)$, then $\ES^\alpha(Z)=\phi(\Phi^{-1}(\alpha))/\alpha$ where $\phi$ and $\Phi$ denote the pdf and cdf of a standard normal distribution, respectively.}  By Theorem \ref{cor: elliptical market}, the original market $(S^0,S^1)$ is free of $\ES^\alpha$-arbitrage (and hence free of strong $\ES^\alpha$-arbitrage) if and only if $\SR_1 \in (-\ES^\alpha(Z),\ES^\alpha(Z))$, which is equivalent to
\begin{equation}
\label{eq:exa:2 dim market:or mark}
S^1_0 \in \left( \frac{\mu_1-\sigma_1\ES^\alpha(Z)}{1+r}, \frac{\mu_1+\sigma_1\ES^\alpha(Z)}{1+r} \right).
\end{equation}
So assume \eqref{eq:exa:2 dim market:or mark} and set
\begin{align*}
(S^2_0)^* := \frac{\mu_2-\gamma \SR_1 \sigma_2}{1+r} \quad \text{and} \quad \kappa := \frac{\sigma_2}{1+r}\sqrt{(1-\gamma^2)(\ES^\alpha(Z)^2-\SR_1^2)}. 
\end{align*}
Then Theorem \ref{cor: elliptical market} gives, after some algebra,
\begin{align*}
    I_{\ES^\alpha}(X) &=
        \left( (S^2_0)^* - \kappa,(S^2_0)^* + \kappa \right)  \quad \text{and} \quad   I^s_{\ES^\alpha}(X) =
        \left[ (S^2_0)^* - \kappa,(S^2_0)^* + \kappa \right].
\end{align*}
In particular, the length of the (strong) $\rho$-consistent price interval for $X$ is given by $2 \kappa$.

If we allow $\gamma$ to vary (cf.~Figure \ref{interval length picture}, left panel), then the length of the interval is largest when $\gamma = 0$ (perfectly uncorrelated), and decreases to $0$ as $\gamma \uparrow 1$ or $\gamma \downarrow -1$ (perfectly correlated).  

 If we allow $\alpha$ to vary (cf.~Figure \ref{interval length picture}, right panel), then the length of the interval diverges to $\infty$ and the interval itself to $(-\infty, \infty)$ as $\alpha \downarrow 0$. Moreover, the length of the interval decreases to zero and the endpoints of the intervals converge to the midpoint $(S^2_0)^*$ as $\alpha \uparrow \alpha^*$,  where $\alpha^*$ is the critical threshold for the original market to be free of $\ES^\alpha$-arbitrage.
\end{example}

\begin{figure}[H]
	\centering
	\includegraphics[width=1.0\textwidth]{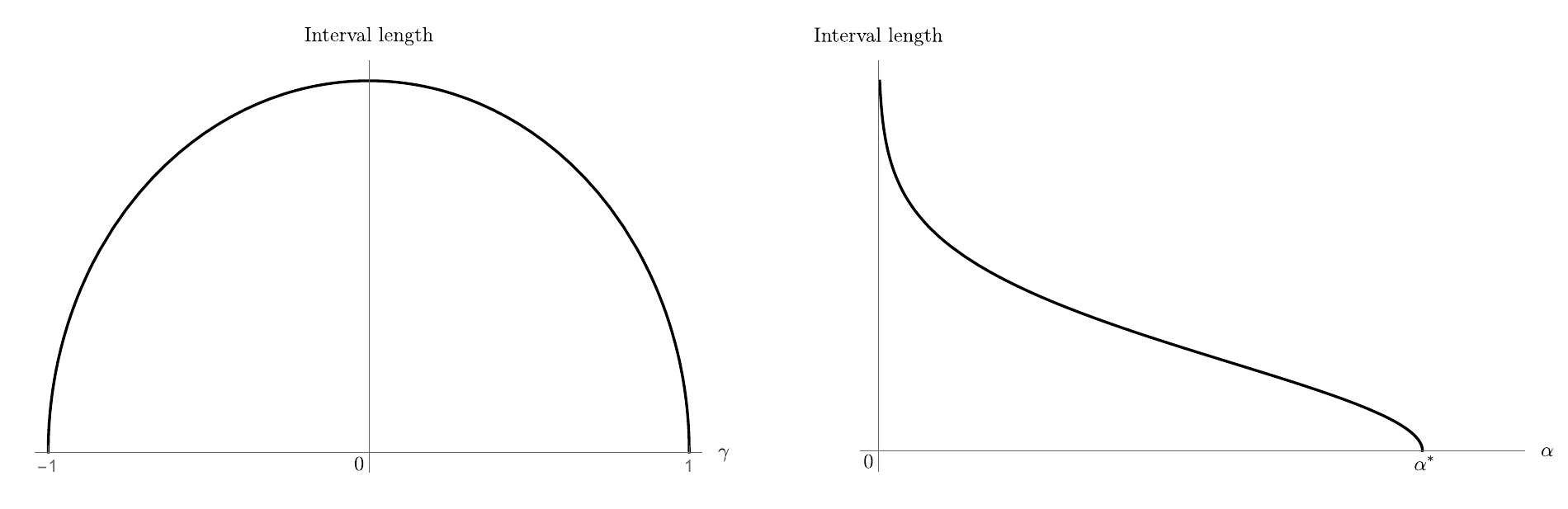}
	\caption{Interval length of  $I_{\ES^\alpha}(X)$. The left panel shows the dependence on the correlation $\gamma$ between $S^1_1$ and $X$; the right panel shows the dependence on the ES sensitivity parameter $\alpha$.}
	\label{interval length picture}
\end{figure}


\section{Dual characterisation of $\rho$-arbitrage and $\rho$-consistent pricing}\label{sec:convex risk measures}

In this section, we consider the case that $\rho$ is a convex risk measure on $L$ that admits a dual representation.  There are many relevant examples that fall into this  category, cf.~Section \ref{sec:Applications}.

Let $\mathcal{D} := \{Z \in L^1 : Z \geq 0 \ \mathbb{P}\textnormal{-a.s.} \textnormal{ and }  \mathbb{E}[Z]=1  \}$ be the set of all Radon-Nikod{\'y}m derivatives of probability measures that are absolutely continuous with respect to $\mathbb{P}$. Throughout this section, we assume that $\rho: L \xrightarrow{} (-\infty, \infty]$ admits a dual representation
\begin{equation} \label{eq:dual_char_of_rho}
    \rho(X) = \sup_{Z \in \mathcal{Q}^\alpha} \{\mathbb{E}[-ZX] - \alpha(Z)\}
\end{equation}
for some \emph{penalty function} $\alpha:\mathcal{D} \to [0,\infty]$ with effective domain 
$\mathcal{Q}^\alpha := \mathrm{dom\,} \alpha = \{ Z \in \mathcal{D} : \alpha(Z) < \infty \} \neq \emptyset$. The penalty function determines how seriously we treat probabilistic models in $\mathcal{D}$.  Since $\rho$ is normalised, $\inf \alpha = 0$.  Moreover, replacing $\alpha$ if necessary by its quasi-convex hull, we may assume without loss of generality that 
$\mathcal{Q}^\alpha$ is convex; see Remark \ref{rem:dual char} for details. 

\begin{remark}
\label{rem:dual char}
(a) The supremum in \eqref{eq:dual_char_of_rho} is over the effective domain of $\alpha$. This captures the idea that only the measures ``contained'' in $\mathcal{Q}^\alpha$ are seen as plausible. Since $-ZX$ may not be integrable, we define $\mathbb{E}[-ZX] := \mathbb{E}[ZX^-] - \mathbb{E}[ZX^+]$, with the conservative convention $\mathbb{E}[-ZX] :=\infty$ if $\mathbb{E}[ZX^{-}] = \infty$. 

(b) The class of risk measures satisfying \eqref{eq:dual_char_of_rho} is very large. In particular, we do not impose lower semi-continuity on $\alpha$, or $L^1$-closedness or uniform integrability on $\cQ^\alpha$.

(c)  The representation in \eqref{eq:dual_char_of_rho} is not unique. However, it is not difficult to check that the \emph{minimal} penalty function for which \eqref{eq:dual_char_of_rho} is satisfied is given by 
\begin{equation*}
    \alpha^\rho(Z) := \sup\{ \mathbb{E}[-ZX] - \rho(X) : X \in L \textnormal{ and } \rho(X) < \infty \}.
\end{equation*}
Note that $\alpha^\rho$ is automatically convex. Moreover, its effective domain
$\cQ^{\rho}:=\{Z \in \cD : \alpha^\rho(Z) < \infty\}$ is also convex and the \emph{maximal} dual set. Notwithstanding, it is sometimes useful not to consider $\alpha^\rho$ or $\mathcal{Q}^\rho$; cf.~\cite[Remark 4.1(c)]{herdegen2020dual}.

(d) If $\alpha: \cD \to [0, \infty]$ represents $\rho$ and $\alpha': \cD \to [0, \infty]$ satisfies $\alpha \geq \alpha' \geq \alpha^\rho$, then $\alpha'$ represents $\rho$ and  $\cQ^\alpha \subset \cQ^{\alpha'} \subset \cQ^\rho$. This follows directly by comparing \eqref{eq:dual_char_of_rho} for $\alpha, \alpha'$ and  $\alpha^\rho$.

\end{remark}

If $\rho$ admits a dual representation as in \eqref{eq:dual_char_of_rho}, then also $\rho^\infty$ admits the dual representation
\begin{equation}
\label{eq:dual char of bar rho}
\rho^\infty(X) = \sup_{Z \in \cQ^\alpha} (\mathbb{E}[-ZX]).
\end{equation}

\subsection{Preliminary considerations and conditions}

For \emph{coherent} risk measures, the dual characterisation of (strong) $\rho$-arbitrage has been studied in~\cite{herdegen2020dual}. We recall the key conditions that were introduced in \cite[Section 4]{herdegen2020dual} and extend some of their consequences to the current setup.

\medskip
\noindent\textbf{Condition I.}~For all $i \in \{1,\dots,d \}$ and any $Z \in \mathcal{Q}^\alpha$, $Z R^i \in L^1$.

\medskip
\noindent\textbf{Condition UI.}~$\mathcal{Q}^\alpha$ is uniformly integrable, and for all $i \in \{1,\dots,d \}$, $R^i\mathcal{Q}^\alpha$ is uniformly integrable, where $R^i\mathcal{Q}^\alpha := \{ R^iZ : Z \in \mathcal{Q}^\alpha \}.$

\medskip
Condition I is weak, but has some important consequences.  Arguing as in \cite[Proposition 4.3]{herdegen2020dual}, we obtain the following result.

\begin{proposition}
	\label{prop:cond I}
	Suppose that Condition I is satisfied. Then for any portfolio $\pi \in \RR^d$, \begin{equation}
\label{eq:relate rho to C}
    \rho(X_{\pi}) = \sup_{c \in C_{{\mathcal{Q}}^\alpha}} (\pi \cdot c - f_{\alpha}(c)),
\end{equation}
where $C_{\mathcal{Q}^{\alpha}} := \{\mathbb{E}[-Z(R-r\mathbf{1})] : Z \in \mathcal{Q}^\alpha \} \subset \mathbb{R}^d$ is convex and $f_\alpha:\mathbb{R}^d \to [0,\infty]$, defined by
\begin{equation*}
    f_\alpha(c) = 
    \inf \{\alpha(Z):Z \in \cQ^\alpha \textnormal{ and } \mathbb{E}[-Z(R-r\mathbf{1})] = c\}
\end{equation*}
satisfies $\mathrm{dom}\, f_\alpha = C_{{\mathcal{Q}}^\alpha}$. Moreover, $\rho$ satisfies the Fatou property on  $\cX$.
\end{proposition}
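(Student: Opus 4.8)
The plan is to substitute the portfolio excess return $X_\pi = \pi \cdot (R - r\1)$ from \eqref{eq:excess return} into the dual representation \eqref{eq:dual_char_of_rho} and then reorganise the supremum by grouping the dual elements $Z$ according to the value of $c_Z := \mathbb{E}[-Z(R - r\1)]$. First I would invoke Condition I to make the expectation genuine: since $ZR^i \in L^1$ and $\mathbb{E}[Z]=1$, each coordinate $\mathbb{E}[-Z(R^i - r)]$ is a finite real number, so $c_Z \in \RR^d$, and linearity of the expectation gives $\mathbb{E}[-ZX_\pi] = \pi \cdot c_Z$. Substituting into \eqref{eq:dual_char_of_rho} yields $\rho(X_\pi) = \sup_{Z \in \mathcal{Q}^\alpha}\{\pi \cdot c_Z - \alpha(Z)\}$, where every summand is a finite real (as $\alpha \geq 0$ is finite on $\mathcal{Q}^\alpha$), so no indeterminate forms arise. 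I would then partition $\mathcal{Q}^\alpha$ into the fibres $\{Z \in \mathcal{Q}^\alpha : c_Z = c\}$ indexed by $c \in C_{\mathcal{Q}^\alpha}$; since $\pi \cdot c_Z$ is constant on each fibre, taking the supremum first within a fibre gives $\sup\{\pi \cdot c - \alpha(Z) : c_Z = c\} = \pi \cdot c - \inf\{\alpha(Z) : Z \in \mathcal{Q}^\alpha,\, c_Z = c\} = \pi \cdot c - f_\alpha(c)$, and taking the supremum over $c$ produces exactly \eqref{eq:relate rho to C}.

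The two structural claims are then routine. Convexity of $C_{\mathcal{Q}^\alpha}$ follows because $Z \mapsto c_Z$ is affine (linearity of expectation, legitimate under Condition I) and $\mathcal{Q}^\alpha$ is convex by the standing assumption, so $C_{\mathcal{Q}^\alpha}$ is the image of a convex set under an affine map. For the effective domain, note that $f_\alpha(c) < \infty$ exactly when its defining fibre is nonempty, i.e.\ when $c \in C_{\mathcal{Q}^\alpha}$; and for such $c$ the infimum of the nonnegative finite values $\alpha(Z)$ is itself a finite nonnegative number. Hence $f_\alpha$ is $[0,\infty]$-valued with $\mathrm{dom}\, f_\alpha = C_{\mathcal{Q}^\alpha}$.

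The Fatou property on $\cX$ is where the most care is needed, and I expect it to be the main obstacle. The identity \eqref{eq:relate rho to C} exhibits $g(\pi) := \rho(X_\pi)$ as a supremum of affine functions of $\pi$, hence convex and lower semicontinuous on $\RR^d$; the remaining task is to transfer this to almost sure convergence of random variables. Here I would exploit nonredundancy, which makes $\{1, R^1, \ldots, R^d\}$ linearly independent in $L^0$ and therefore renders $\pi \mapsto X_\pi$ a linear bijection onto the finite-dimensional subspace $\cX$; such a subspace is closed in $L^1$ and the coordinate map $\cX \to \RR^d$ is $L^1$-continuous. Given $X_n = X_{\pi_n} \to X = X_\pi$ almost surely with $|X_n| \leq Y \in L \subset L^1$, dominated convergence gives $X_n \to X$ in $L^1$, whence $\pi_n \to \pi$ in $\RR^d$; lower semicontinuity of $g$ then yields $\rho(X) = g(\pi) \leq \liminf_{n\to\infty} g(\pi_n) = \liminf_{n\to\infty} \rho(X_n)$. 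The delicate point is precisely this marriage of abstract almost sure convergence with the finite-dimensional structure: without nonredundancy the portfolios $\pi_n$ need not converge even when $X_{\pi_n}$ does, so that assumption is essential to the argument.
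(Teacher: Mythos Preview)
Your argument is correct and follows the natural route. The paper itself does not spell out a proof here but simply defers to \cite[Proposition 4.3]{herdegen2020dual}; your fibration of the dual supremum by the value of $c_Z = \mathbb{E}[-Z(R-r\1)]$ is exactly the adaptation needed to pass from the coherent case (where the penalty vanishes on $\cQ^\alpha$) to the present convex case, and your treatment of the Fatou property via nonredundancy, dominated convergence, and lower semicontinuity of the finite-dimensional conjugate is the standard mechanism.
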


Condition UI is a uniform version of Condition I.  For $X \in \cX$, it allows us to replace $\alpha$ in \eqref{eq:dual_char_of_rho} by its $L^1$-lower semi-continuous convex hull $\overline{\textnormal{co}} \,\alpha$, and the supremum by a maximum.

\begin{proposition}
	\label{prop:Cond UI preliminary result}
	Suppose that Condition UI is satisfied. Denote by $\overline{\textnormal{co}} \,\alpha: \cD \to [0, \infty]$ the $L^1$-lower semi-continuous convex hull of $\alpha$. Then for $X \in \cX$, 
	\begin{equation}
		\label{eq:preliminary representation}
		\rho(X) = \max_{Z \in \cQ^{\overline{\textnormal{co}} \,\alpha}} \{ \mathbb{E}[-ZX]-\overline{\textnormal{co}} \,\alpha(Z) \}.
	\end{equation}
\end{proposition}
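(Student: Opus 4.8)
The plan is to prove the two inequalities separately and then upgrade the supremum to a maximum, throughout exploiting that for $X = X_\pi \in \cX$ the quantity $\E[-ZX] = \pi \cdot \E[-Z(R - r\1)]$ depends on $Z$ only through the finite-dimensional vector in $C_{\cQ^\alpha}$. Since Condition UI implies Condition I, Proposition \ref{prop:cond I} is available; moreover, uniform integrability of each $R^i\cQ^\alpha$ makes $\sup_{Z \in \cQ^\alpha}\E|ZR^i|<\infty$, so $C_{\cQ^\alpha} \subset \RR^d$ is bounded and hence $\rho$ is real-valued on $\cX$ — a fact I will use freely. The inequality ``$\rho(X) \le \max_{Z \in \cQ^{\overline{\textnormal{co}}\,\alpha}}\{\cdots\}$'' is then immediate: since $\overline{\textnormal{co}}\,\alpha \le \alpha$ we have $\cQ^\alpha \subset \cQ^{\overline{\textnormal{co}}\,\alpha}$, and replacing $\alpha$ by the smaller penalty only enlarges both the feasible set and each bracket in \eqref{eq:dual_char_of_rho}.

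For the reverse inequality I would use that $\overline{\textnormal{co}}\,\alpha$ is the biconjugate of $\alpha$ with respect to the dual pairing $(L^1, L^\infty)$, i.e.\ $\overline{\textnormal{co}}\,\alpha(Z) = \sup_{W \in L^\infty}\{\E[ZW] - \alpha^*(W)\}$ with $\alpha^*(W) = \sup_{Z' \in \cD}\{\E[Z'W] - \alpha(Z')\}$, which is legitimate as $\alpha \ge 0$ is proper. Testing with $W = -X_M$, where $X_M := \pi \cdot (R_M - r\1) \in L^\infty$ and $R_M$ is the componentwise truncation of $R$ to $[-M,M]$, and noting $\alpha^*(-X_M) = \rho(X_M)$ by \eqref{eq:dual_char_of_rho}, gives $\overline{\textnormal{co}}\,\alpha(Z) \ge \E[-ZX_M] - \rho(X_M)$ for every $Z \in \cQ^{\overline{\textnormal{co}}\,\alpha}$. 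I then let $M \to \infty$. On one side, $\E[-ZX_M] \to \E[-ZX]$ by dominated convergence once $ZR^i \in L^1$, a property that extends from $\cQ^\alpha$ to its $L^1$-closed convex hull (hence to $\cQ^{\overline{\textnormal{co}}\,\alpha}$) by stability of uniform integrability. On the other side $\rho(X_M) \to \rho(X)$, and this is where Condition UI enters decisively: uniform integrability of $R^i\cQ^\alpha$ yields $\sup_{Z' \in \cQ^\alpha}\E[|Z'R^i|\,\1_{\{|R^i|>M\}}] \to 0$, whence $\E[-Z'X_M] \to \E[-Z'X]$ uniformly in $Z' \in \cQ^\alpha$ and therefore the suprema $\rho(X_M)$ converge to $\rho(X)$. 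Passing to the limit yields $\E[-ZX] - \overline{\textnormal{co}}\,\alpha(Z) \le \rho(X)$ for all $Z \in \cQ^{\overline{\textnormal{co}}\,\alpha}$, which is the reverse inequality.

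It remains to show the supremum is attained. I would take a maximizing sequence $(Z_n) \subset \cQ^\alpha$ for the representation \eqref{eq:dual_char_of_rho} of $\rho(X)$. As $\cQ^\alpha$ is uniformly integrable, the Dunford--Pettis theorem gives a subsequence with $Z_n \rightharpoonup Z^*$ weakly in $L^1$; then $Z^* \in \cD$, and since $\overline{\textnormal{co}}\,\alpha$ is convex and $L^1$-lower semicontinuous (hence weakly lower semicontinuous) one gets $\overline{\textnormal{co}}\,\alpha(Z^*) \le \liminf_n \alpha(Z_n) < \infty$, so $Z^* \in \cQ^{\overline{\textnormal{co}}\,\alpha}$. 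The convergence $\E[-Z_nX] \to \E[-Z^*X]$ follows from a truncation argument of the same flavour, splitting $R^i$ at level $M$ and using weak convergence on the bounded part together with uniform integrability of $R^i\cQ^\alpha$ on the tail. Combining these gives $\E[-Z^*X] - \overline{\textnormal{co}}\,\alpha(Z^*) \ge \rho(X)$, and with the reverse inequality this shows $Z^*$ attains the maximum.

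The main obstacle throughout is that $X \in \cX$ is generally unbounded, so neither $L^1$- nor weak $L^1$-convergence of densities controls $\E[-ZX]$ on its own; each limit passage must be justified by hand, and it is exactly the uniform integrability of the sets $R^i\cQ^\alpha$ postulated in Condition UI that supplies the required tail control. A secondary technical point, routine but worth isolating, is verifying that $ZR^i \in L^1$ and the finite-valuedness of $\rho$ survive the replacement of $\alpha$ by its $L^1$-lower semicontinuous convex hull, which again rests on the stability of uniform integrability under convex combinations and $L^1$-limits.
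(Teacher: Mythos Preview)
Your argument is correct. The overall structure---easy inequality, reverse inequality, attainment via Dunford--Pettis---matches the paper's, and your handling of the weak convergence $\E[-Z_nX]\to\E[-Z^*X]$ by truncation is essentially the content of \cite[Proposition~C.2]{herdegen2020dual} that the paper cites.

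The one genuinely different step is the reverse inequality. The paper argues in the $Z$-variable: having found a maximiser $Z^*$ for the right-hand side, it invokes the pointwise description \eqref{eq:lsc hull} of the lower semi-continuous hull to produce a sequence $(Z_n)\subset\cQ^\alpha$ with $Z_n\to Z^*$ in $L^1$ and $\mathrm{co}\,\alpha(Z_n)\le\overline{\mathrm{co}}\,\alpha(Z^*)+\epsilon$, and then uses (strong) continuity of $Z\mapsto\E[-ZX]$ on $\ol\cQ^\alpha$. You instead argue in the $X$-variable: the biconjugate identity $\overline{\mathrm{co}}\,\alpha=\alpha^{**}$ is tested with bounded truncations $-X_M$, giving $\E[-ZX_M]-\overline{\mathrm{co}}\,\alpha(Z)\le\rho(X_M)$, and Condition~UI is used to push $M\to\infty$ on both sides. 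Your route is more self-contained (no appeal to the companion paper) and makes the role of uniform integrability of $R^i\cQ^\alpha$ completely explicit---in particular the step $\rho(X_M)\to\rho(X)$ pinpoints exactly where the \emph{uniformity} over $\cQ^\alpha$ is needed. The paper's route is a little shorter once the weak-continuity lemma is taken as given, and it avoids having to verify separately that $ZR^i\in L^1$ persists to the closure $\cQ^{\overline{\mathrm{co}}\,\alpha}\subset\ol\cQ^\alpha$.
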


As a consequence of Proposition \ref{prop:Cond UI preliminary result}, we obtain the following result which is crucial in establishing the dual characterisation of strong $\rho$-arbitrage.

 \begin{proposition}
	\label{prop:cond UI}
	Suppose that Condition UI is satisfied. Then for any portfolio $\pi \in \RR^d$, 
	\begin{equation}
		\label{eq:relate rho to C hat}
		\rho(X_{\pi}) =  \max_{c \in C_{\cQ^{\overline{\textnormal{co}} \,\alpha}}} (\pi \cdot c - f_{\overline{\textnormal{co}} \,\alpha}(c)),
	\end{equation}
	where $C_{\cQ^{\overline{\textnormal{co}} \,\alpha}} := \{\mathbb{E}[-Z(R-r\mathbf{1})] : Z \in \cQ^{\overline{\textnormal{co}} \,\alpha} \} \subset \RR^d$ is convex and bounded, and $f_{\overline{\textnormal{co}} \,\alpha}:\mathbb{R}^d \to [0,\infty]$, defined by
	\begin{equation}
		\label{eq:prop:cond UI}
		f_{\overline{\textnormal{co}} \,\alpha}(c) = 
		\inf \{\overline{\textnormal{co}} \,\alpha(Z):Z \in \cQ^{\overline{\textnormal{co}} \,\alpha} \textnormal{ and } \mathbb{E}[-Z(R-r\mathbf{1})] = c\}
	\end{equation}
	is the lower semi-continuous convex hull of $f_\alpha$ defined in Proposition \ref{prop:cond I} and satisfies  $\mathrm{dom}\, f_{\overline{\textnormal{co}} \,\alpha} = C_{\cQ^{\overline{\textnormal{co}} \,\alpha}}$. Moreover, the infimum in \eqref{eq:prop:cond UI} is a minimum if $c \in C_{\cQ^{\overline{\textnormal{co}} \,\alpha}}$.
	
\end{proposition}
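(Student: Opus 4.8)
The plan is to obtain the displayed formula by regrouping the attained maximum in \eqref{eq:preliminary representation} and then to verify the structural claims on $C_{\cQ^{\overline{\textnormal{co}} \,\alpha}}$ and $f_{\overline{\textnormal{co}} \,\alpha}$ separately. First I would apply Proposition \ref{prop:Cond UI preliminary result} to $X = X_\pi = \pi \cdot (R - r\mathbf{1})$. Since $\mathbb{E}[-ZX_\pi] = \pi \cdot \mathbb{E}[-Z(R - r\mathbf{1})]$, writing $c = \mathbb{E}[-Z(R - r\mathbf{1})]$ and partitioning the maximising domain $\cQ^{\overline{\textnormal{co}} \,\alpha}$ according to the value of $c$ turns the maximum over $Z$ into $\rho(X_\pi) = \max_{c \in C_{\cQ^{\overline{\textnormal{co}} \,\alpha}}} (\pi \cdot c - \inf\{\overline{\textnormal{co}} \,\alpha(Z) : Z \in \cQ^{\overline{\textnormal{co}} \,\alpha},\ \mathbb{E}[-Z(R - r\mathbf{1})] = c\}) = \max_{c \in C_{\cQ^{\overline{\textnormal{co}} \,\alpha}}}(\pi \cdot c - f_{\overline{\textnormal{co}} \,\alpha}(c))$. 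The outer operation is genuinely a maximum because a maximiser $Z^\ast$ from \eqref{eq:preliminary representation} produces a maximising $c^\ast = \mathbb{E}[-Z^\ast(R - r\mathbf{1})]$.

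Next I would record the easy structural facts. The set $C_{\cQ^{\overline{\textnormal{co}} \,\alpha}}$ is convex as the image of the convex set $\cQ^{\overline{\textnormal{co}} \,\alpha}$ under the linear map $Z \mapsto \mathbb{E}[-Z(R - r\mathbf{1})]$, and it is bounded because Condition UI (uniform integrability of $R^i\cQ^\alpha$, which is inherited by the $L^1$-closed convex hull underlying $\overline{\textnormal{co}} \,\alpha$) forces $\sup_{Z \in \cQ^{\overline{\textnormal{co}} \,\alpha}} \mathbb{E}[Z|R^i|] < \infty$ for each $i$, hence $|\mathbb{E}[-Z(R^i - r)]| \le \mathbb{E}[Z|R^i|] + |r|$ is uniformly bounded. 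The identity $\mathrm{dom}\, f_{\overline{\textnormal{co}} \,\alpha} = C_{\cQ^{\overline{\textnormal{co}} \,\alpha}}$ is immediate, since every $Z \in \cQ^{\overline{\textnormal{co}} \,\alpha}$ satisfies $\overline{\textnormal{co}} \,\alpha(Z) < \infty$, so the defining infimum is finite exactly when its constraint set is nonempty.

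The technical heart, and where the full strength of Condition UI enters, is to show that $f_{\overline{\textnormal{co}} \,\alpha}$ is convex and lower semi-continuous and that its defining infimum is attained. Convexity is the standard fact that the infimal projection of a convex function under a linear map is convex. For attainment and lower semi-continuity, I would fix $c$, take a minimising sequence $Z_n$ in the constraint set, and pass to a weak limit: by Dunford--Pettis, uniform integrability of $\cQ^\alpha$ makes $\cQ^{\overline{\textnormal{co}} \,\alpha}$ relatively weakly compact, and being $L^1$-closed and convex it is weakly compact, so $Z_n \rightharpoonup Z^\ast \in \cQ^{\overline{\textnormal{co}} \,\alpha}$ along a subsequence. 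Uniform integrability of $R^i\cQ^\alpha$ makes $Z \mapsto \mathbb{E}[-Z(R - r\mathbf{1})]$ weakly continuous on this set (passing the limit through the possibly unbounded $R^i$ via a truncation argument), so $\mathbb{E}[-Z^\ast(R - r\mathbf{1})] = c$; and weak lower semi-continuity of the $L^1$-lsc convex function $\overline{\textnormal{co}} \,\alpha$ gives $\overline{\textnormal{co}} \,\alpha(Z^\ast) \le \liminf_n \overline{\textnormal{co}} \,\alpha(Z_n) = f_{\overline{\textnormal{co}} \,\alpha}(c)$. This yields both the minimum and the lower semi-continuity of $f_{\overline{\textnormal{co}} \,\alpha}$.

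Finally I would identify $f_{\overline{\textnormal{co}} \,\alpha}$ with the lower semi-continuous convex hull of $f_\alpha$ via affine minorants. Since $\overline{\textnormal{co}} \,\alpha \le \alpha$ on $\cQ^\alpha$, one has $f_{\overline{\textnormal{co}} \,\alpha} \le f_\alpha$, so by the previous step $f_{\overline{\textnormal{co}} \,\alpha}$ is an lsc convex minorant of $f_\alpha$ and hence lies below its lsc convex hull. For the reverse inequality, an affine map $c \mapsto \pi \cdot c - b$ minorises $f_\alpha$ if and only if $b \ge \sup_c(\pi \cdot c - f_\alpha(c)) = \rho(X_\pi)$ by \eqref{eq:relate rho to C}, and it minorises $f_{\overline{\textnormal{co}} \,\alpha}$ if and only if $b \ge \rho(X_\pi)$ by the formula established in the first paragraph. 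Thus $f_\alpha$ and $f_{\overline{\textnormal{co}} \,\alpha}$ admit exactly the same affine minorants, so their lsc convex hulls coincide; as $f_{\overline{\textnormal{co}} \,\alpha}$ is already lsc and convex, it equals its own hull, giving $f_{\overline{\textnormal{co}} \,\alpha} = \overline{\textnormal{co}}\, f_\alpha$. I expect the main obstacle to be the third step: securing weak compactness of $\cQ^{\overline{\textnormal{co}} \,\alpha}$ and weak continuity of the return pairing simultaneously, which is precisely the point at which Condition UI, rather than the weaker Condition I, is indispensable.
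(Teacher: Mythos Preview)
Your proposal is essentially correct and follows the paper's approach closely for the compactness/lower semi-continuity part, but diverges in how you identify $f_{\overline{\textnormal{co}}\,\alpha}$ with $\overline{\textnormal{co}}\,f_\alpha$.

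One small imprecision: you assert that $\cQ^{\overline{\textnormal{co}}\,\alpha}$ is $L^1$-closed, but the effective domain of a lower semi-continuous convex function need not be closed. The paper handles this by passing to the $L^1$-closure $\bar{\cQ}^\alpha \supset \cQ^{\overline{\textnormal{co}}\,\alpha}$, which is genuinely $\sigma(L^1,L^\infty)$-sequentially compact by Dunford--Pettis and Eberlein--\v{S}mulian, and then observing that any weak limit $Z^\ast$ of a minimising sequence automatically lands back in $\cQ^{\overline{\textnormal{co}}\,\alpha}$ because $\overline{\textnormal{co}}\,\alpha(Z^\ast) \leq \liminf_n \overline{\textnormal{co}}\,\alpha(Z_n) < \infty$. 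Your argument is really the same once this is made precise.

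The genuinely different step is the identification $f_{\overline{\textnormal{co}}\,\alpha} = \overline{\textnormal{co}}\,f_\alpha$. You argue via conjugates: by \eqref{eq:relate rho to C} and your first paragraph, $f_\alpha^*(\pi) = f_{\overline{\textnormal{co}}\,\alpha}^*(\pi) = \rho(X_\pi)$ for all $\pi$, hence $f_\alpha$ and $f_{\overline{\textnormal{co}}\,\alpha}$ have the same biconjugate, and Fenchel--Moreau (applicable since $f_{\overline{\textnormal{co}}\,\alpha}$ is nonnegative, convex, and lsc) gives the claim. The paper instead argues by contradiction at the level of penalty functions: given a convex lsc $g \leq f_\alpha$ with $f_{\overline{\textnormal{co}}\,\alpha}(c^*) < g(c^*)$, it lifts $g$ to a convex $\sigma(L^1,L^\infty)$-lsc minorant $\alpha^g(Z) := g(\mathbb{E}[-Z(R-r\mathbf{1})])$ of $\alpha$, deduces $\alpha^g \leq \overline{\textnormal{co}}\,\alpha$, and obtains a contradiction at a minimiser $Z^*$ for $c^*$. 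Your route is cleaner and more direct; the paper's route has the minor advantage of making explicit why weak continuity of $Z \mapsto \mathbb{E}[-Z(R-r\mathbf{1})]$ on $\bar{\cQ}^\alpha$ (which you also need) is the key structural input from Condition UI.
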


\begin{remark}
	\label{rmk:Q subset Q start start subset Q bar}
	(a) By $\overline{\textnormal{co}} \,\alpha \leq \alpha$ and \eqref{eq:lsc co hull} it follows that $\cQ^\alpha \subset \cQ^{\overline{\textnormal{co}} \,\alpha} \subset \bar{\cQ}^\alpha$, where $\bar{\cQ}^\alpha$ is the $L^1$-closure of $\cQ^\alpha$. Moreover,  if $\alpha$ is bounded from above on its effective domain, then $\cQ^{\overline{\textnormal{co}} \,\alpha} = \bar{\cQ}^\alpha$.

	(b) Since the $L^1$-lower semi-continuous convex hull of $\alpha$ coincides with its $\sigma(L^1,L^\infty)$-lower semi-continuous convex hull  by \cite[Theorem 2.2.1]{zalinescu2002convex}, $\overline{\textnormal{co}} \,\alpha$ coincides with $\alpha^{**}$, the biconjungate of $\alpha$ under the pairing $\langle \cdot, \cdot \rangle: L^1 \times L^\infty \to \RR$, $\langle Z, X \rangle \mapsto \mathbb{E}[-ZX]$, by the Fenchel-Moreau theorem (and the fact that $\alpha$ is nonnegative); see Appendix \ref{app:convex analysis} for details.
\end{remark}

The final object that we need to recall is the ``interior'' of $\cQ^\alpha$, which will be crucial in the dual characterisation of $\rho$-arbitrage.  This is done in an abstract way.  More precisely, we look for (nonempty) subsets $\tilde \cQ^\alpha \subset \cQ^\alpha$ satisfying:

\medskip
\noindent\textbf{Condition POS.}~$\tilde Z > 0$ $\as{\P}$ for all $\tilde Z \in \tilde \cQ^\alpha$.

\medskip
\noindent\textbf{Condition MIX.}~$\lambda Z + (1-\lambda) \tilde Z \in \tilde \cQ^\alpha$ for all $Z \in \cQ^\alpha$, $\tilde Z \in \tilde \cQ^\alpha$ and $\lambda \in (0, 1)$.
\medskip

\noindent\textbf{Condition INT.}~For all $\tilde Z \in \tilde \cQ^\alpha$, there is an $L^\infty$-dense subset $\cE$ of $\cD \cap L^\infty$ such that for all $Z \in \cE$,  there is $\lambda \in (0, 1)$ such that $\lambda Z + (1-\lambda) \tilde Z \in \cQ^\alpha$.

\medskip
By \cite[Proposition 4.9]{herdegen2020dual}, the maximal ``interior'' of $\cQ^\alpha$ can be described explicitly by
\begin{align*}
	\tilde{\mathcal{Q}}^\alpha_{\max} :=\{ 0 < \tilde Z \in \mathcal{Q}^\alpha :  \,& \text{there is an $L^\infty$-dense subset $\cE$ of $\cD \cap L^\infty$ such that for all } Z \in \cE,\\
	& \text{there is } \lambda \in (0,1)  \text{ such that } \lambda Z + (1-\lambda)\tilde Z \in \mathcal{Q}^\alpha \}.
\end{align*}

\subsection{Dual characterisation of (strong) $\rho$-arbitrage}
We are now in a position to state and prove the dual characterisation of strong $\rho$-arbitrage in terms of absolutely continuous martingale measures (ACMMs) for the discounted risky assets,
\begin{equation*}
\mathcal{M} := \{Z \in \mathcal{D} : \mathbb{E}[Z(R^{i}-r)] =0 \textnormal{ for all } i=1,\dots ,d \};
\end{equation*}
and the dual characterisation of $\rho$-arbitrage in terms of equivalent martingale measures (EMMs) for the discounted risky assets,
\begin{equation*}
\mathcal{P}  := \{Z \in \mathcal{M} :  Z > 0 \ \mathbb{P}\textnormal{-a.s.} \}.
\end{equation*}

We first consider the dual characterisation of strong $\rho$-arbitrage. Since strong $\rho$-arbitrage is in general not equivalent to strong $\rho^\infty$-arbitrage (cf. Remark \ref{rem:strong rho bar arbitrage}), the following result and its proof are rather delicate and require some more advanced notions of convex analysis; see Appendix \ref{app:convex analysis} for a brief review on some key notions of details.
\begin{theorem}
\label{thm: no strong reg arb}
Assume Condition UI is satisfied and $1 \in \cQ^\alpha$. Denote by $\overline{\textnormal{co}} \,\alpha: \cD \to [0, \infty]$ the $L^1$-lower semi-continuous convex hull of $\alpha$. Then the following are equivalent:
\begin{enumerate}
    \item The market $(S^0,S)$ does not admit strong $\rho$-arbitrage.
    \item $\cQ^{\overline{\textnormal{co}}\,\alpha} \cap \mathcal{M} \neq \emptyset$.
\end{enumerate}
When $\alpha$ is bounded on its effective domain, then $\cQ^{\overline{\textnormal{co}}\,\alpha} = \bar{\cQ}^\alpha$ and strong $\rho$-arbitrage is equivalent to strong $\rho^\infty$-arbitrage.
\end{theorem}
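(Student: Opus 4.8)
The plan is to carry out the whole argument through the finite-dimensional dual representation of Proposition~\ref{prop:cond UI}. Write $C := C_{\cQ^{\overline{\textnormal{co}}\,\alpha}}$ and $f := f_{\overline{\textnormal{co}}\,\alpha}$, and set $m := \mu - r\mathbf{1}$, so that $\mathbb{E}[X_\pi] = \pi \cdot m$. By Proposition~\ref{prop:cond UI}, $C \subset \RR^d$ is convex and bounded, $f$ is proper (since $f \geq 0$ and $\mathrm{dom}\, f = C \neq \emptyset$), lower semi-continuous and convex with $\mathrm{dom}\, f = C$, and
\begin{equation*}
\rho(X_\pi) = \max_{c \in C}\bigl(\pi \cdot c - f(c)\bigr) = f^*(\pi), \qquad \pi \in \RR^d,
\end{equation*}
where $f^*$ is the convex conjugate; as $C$ is bounded, $f^*$ is real-valued. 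I would first record that assertion (b) is exactly the statement $0 \in C$: a density $Z \in \cQ^{\overline{\textnormal{co}}\,\alpha}$ lies in $\mathcal{M}$ precisely when $\mathbb{E}[-Z(R - r\mathbf{1})] = 0$, i.e.\ when it witnesses $0 \in C$. It therefore suffices to prove that no strong $\rho$-arbitrage is equivalent to $0 \in C$.

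The heart of the argument is the equivalence
\[
\text{the market admits no strong } \rho\text{-arbitrage} \iff \pi \mapsto \rho(X_\pi) \text{ is bounded below on } \RR^d.
\]
The implication ``$\Leftarrow$'' is immediate. For ``$\Rightarrow$'', the main obstacle is to ensure that a sequence driving the risk to $-\infty$ \emph{simultaneously} drives the expected excess return to $+\infty$; naively scaling along a single separating direction controls only one of the two quantities and, in the boundary case $0 \in \overline{C}\setminus C$, fails. This is resolved using the hypothesis $1 \in \cQ^\alpha$: since $\cQ^\alpha \subset \cQ^{\overline{\textnormal{co}}\,\alpha}$ by Remark~\ref{rmk:Q subset Q start start subset Q bar}(a) and $\mathbb{E}[-1 \cdot (R - r\mathbf{1})] = -m$, we get $-m \in C$ and hence $f(-m) < \infty$. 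Fenchel's inequality at $c = -m$ then yields
\begin{equation*}
\rho(X_\pi) = f^*(\pi) \geq -\pi \cdot m - f(-m), \qquad \text{i.e.} \qquad \mathbb{E}[X_\pi] \geq -\rho(X_\pi) - f(-m).
\end{equation*}
Thus whenever $\rho(X_{\pi_n}) \to -\infty$ one automatically has $\mathbb{E}[X_{\pi_n}] \to +\infty$; passing to a subsequence along which both quantities are monotone produces a strong $\rho$-arbitrage. Consequently, if $\rho(X_\cdot)$ is unbounded below, the market admits strong $\rho$-arbitrage.

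It remains to identify boundedness below of $f^*$ with $0 \in C$, which closes the chain of equivalences. Since $f$ is proper, lower semi-continuous and convex, the Fenchel--Moreau theorem gives $f^{**} = f$, whence
\begin{equation*}
f^{**}(0) = \sup_{\pi \in \RR^d}\bigl(-f^*(\pi)\bigr) = -\inf_{\pi \in \RR^d} f^*(\pi), \qquad \text{so that} \qquad \inf_{\pi \in \RR^d} f^*(\pi) = -f(0).
\end{equation*}
Hence $f^* = \rho(X_\cdot)$ is bounded below if and only if $f(0) < \infty$, i.e.\ if and only if $0 \in \mathrm{dom}\, f = C$. Combining the three steps gives (a) $\iff \rho(X_\cdot)$ bounded below $\iff 0 \in C \iff$ (b). The only genuinely delicate point is the ``$\Rightarrow$'' direction of the second step, and the assumption $1 \in \cQ^\alpha$ is precisely what makes it work; it is also what blocks any reduction to $\rho^\infty$, in line with Remark~\ref{rmk:strong rho arb not equivalent to strong rho infinity arb}.
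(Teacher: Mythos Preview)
Your proof is correct and follows essentially the same route as the paper's: reduce to the finite-dimensional representation of Proposition~\ref{prop:cond UI}, use $1 \in \cQ^\alpha$ to show that $\rho(X_{\pi_n}) \to -\infty$ forces $\mathbb{E}[X_{\pi_n}] \to +\infty$, and then apply Fenchel--Moreau to identify $\inf_\pi \rho(X_\pi) = -f_{\overline{\textnormal{co}}\,\alpha}(0)$, so that boundedness below is equivalent to $0 \in C_{\cQ^{\overline{\textnormal{co}}\,\alpha}}$, i.e.\ to (b). The only cosmetic difference is that the paper obtains the inequality $\mathbb{E}[X_\pi] \geq -\rho(X_\pi) - \alpha(1)$ directly from the dual representation \eqref{eq:dual_char_of_rho} (taking $Z = 1$), whereas you route it through the finite-dimensional Fenchel inequality at $c = -m$; these are the same estimate.
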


\begin{remark}
\label{rmk:strong rho arb pricing bounds}
(a) Note that in order to apply Theorem \ref{thm: no strong reg arb}, we do not necessarily need to find $\overline{\textnormal{co}} \,\alpha$ but rather only its effective domain $\cQ^{\overline{\textnormal{co}} \,\alpha}$. 

(b) When $\rho$ is coherent, then $\cQ^{\overline{\textnormal{co}} \,\alpha} = \bar{\cQ}^\alpha$ by Remark \ref{rmk:Q subset Q start start subset Q bar}(a), and Theorem \ref{thm: no strong reg arb} reduces to \cite[Theorem 4.15]{herdegen2020dual}.  By Remark \ref{rmk:(strong) rho arb definition}(b), this is comparable to Cherny's result \cite[Theorem 3.1]{cherny2008pricing} concerning the absence of good-deals of the second kind for coherent risk measures.  Good-deals of the second kind is a concept that was first explored by Jaschke and K{\"u}chler \cite{jaschke2001coherent} in a topological framework.  They proved that its absence is equivalent to the existence of a \emph{consistent price system}, and their results were generalised by Staum \cite{staum2004fundamental}.  \cite{jaschke2001coherent} and \cite{staum2004fundamental} are theoretical, and unlike \cite{cherny2008pricing}, are difficult to apply.  On the other hand, Theorem \ref{thm: no strong reg arb} (which leads to Theorem \ref{cor:strong rho arb pricing bounds}) can be used in practice, and also holds for convex risk measures.   

\end{remark}

We next consider the dual characterisation of $\rho$-arbitrage.

\begin{theorem}
\label{thm: no reg arb equivalence}
Suppose that Condition I is satisfied, $\rho$ satisfies sensitivity to large expected losses on $L$ and $\tilde \cQ_{\max}^\alpha \neq \emptyset$. Then the following are equivalent:
\begin{enumerate}
    \item The market $(S^0,S)$ does not admit $\rho$-arbitrage.
    
    
    \item There exists $\tilde \cQ^\alpha \subset \cQ^\alpha$ satisfying Conditions POS, MIX and INT, such that $\tilde{\mathcal{Q}}^\alpha \cap \cP \neq \emptyset$.
    
    \item For every nonempty $\tilde \cQ^\alpha \subset \cQ^\alpha$ satisfying Conditions POS, MIX and INT, we have  $\tilde{\mathcal{Q}}^\alpha \cap \cP \neq \emptyset$.
\end{enumerate}
\end{theorem}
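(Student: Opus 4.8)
The plan is to reduce the whole statement to the already-established coherent case by passing to the recession function $\rho^\infty$, exploiting that $\rho$ and $\rho^\infty$ share the \emph{same} effective dual domain $\cQ^\alpha$. First I would record the primal reduction. Condition I grants, via Proposition \ref{prop:cond I}, that $\rho$ satisfies the Fatou property on $\cX$; together with convexity of $\rho$ (the standing assumption of this section) and weak sensitivity to large losses on $L$, hence on $\cX$, this places us in the hypotheses of Proposition \ref{thm:reg arb:first characterisation}. That proposition yields ``the market does not admit $\rho$-arbitrage $\iff$ $\rho^\infty_1 > 0$ $\iff$ the market does not admit $\rho^\infty$-arbitrage.'' Thus statement (a) is equivalent to the absence of $\rho^\infty$-arbitrage, and it remains only to characterise the latter dually.

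Next I would check that $\rho^\infty$ meets the hypotheses of the coherent dual characterisation \cite[Theorem 4.20]{herdegen2020dual}. By \eqref{eq:dual char of bar rho}, $\rho^\infty$ is a coherent risk measure whose dual representation has effective domain exactly $\cQ^\alpha$; since Condition I is a statement about $\cQ^\alpha$ alone, it holds for $\rho^\infty$ verbatim, and $\tilde\cQ_{\max}^\alpha \neq \emptyset$ is assumed. The only hypothesis requiring translation is the sensitivity assumption: $\rho^\infty$ inherits weak sensitivity to large losses from $\rho$ by Remark \ref{rmk:rho WSLL iff rho bar WSLL}(b), and because $\rho^\infty$ is positively homogeneous this coincides with strict expectation boundedness by Remark \ref{rmk:rho WSLL iff rho bar WSLL}(c), which is the form in which the coherent result is phrased.

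Finally I would observe that Conditions POS, MIX and INT, together with the maximal set $\tilde\cQ_{\max}^\alpha$, are defined purely through the dual domain $\cQ^\alpha$ and make no reference to the values of $\alpha$; consequently the admissible interior subsets $\tilde\cQ^\alpha$ for $\rho$ and for $\rho^\infty$ form literally the same family. Applying \cite[Theorem 4.20]{herdegen2020dual} to $\rho^\infty$ then gives ``no $\rho^\infty$-arbitrage $\iff$ $\tilde\cQ^\alpha \cap \cP \neq \emptyset$ for some valid $\tilde\cQ^\alpha$ $\iff$ $\tilde\cQ^\alpha \cap \cP \neq \emptyset$ for all valid $\tilde\cQ^\alpha$,'' which are precisely statements (b) and (c). Chaining this with the primal reduction of the first step closes the cycle (a) $\iff$ (b) $\iff$ (c). The main obstacle is conceptual rather than computational: one must verify that every structural hypothesis — Condition I, nonemptiness of $\tilde\cQ_{\max}^\alpha$, the three interior conditions, and the sensitivity/expectation-boundedness assumption — transfers from $\rho$ to $\rho^\infty$ without loss. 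This is exactly what is secured by the fact that taking the recession function flattens $\alpha$ to zero while preserving its effective domain $\cQ^\alpha$, so that all the relevant dual objects are inherited unchanged.
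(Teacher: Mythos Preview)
Your proposal is correct and follows essentially the same route as the paper's own proof, which is a brief two-liner invoking Proposition \ref{thm:reg arb:first characterisation} together with \cite[Theorem 4.20]{herdegen2020dual} and noting via Remark \ref{rmk:rho WSLL iff rho bar WSLL}(c) that weak sensitivity to large losses on $L$ makes $\rho^\infty$ strictly expectation bounded. Your write-up simply spells out more explicitly why each hypothesis (Condition I, the Fatou property on $\cX$ from Proposition \ref{prop:cond I}, and the fact that the interior conditions depend only on $\cQ^\alpha$) transfers from $\rho$ to $\rho^\infty$.
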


\begin{remark}
\label{rmk:applying dual char reg arb}
(a) Usually (at least in all the examples we consider) there is an ``interior'' of $\cQ$ which contains $1$ (the real world measure).  This implies $\tilde{\cQ}_{\max}^\alpha \neq \emptyset$.  Furthermore, by \cite[Proposition 4.11]{herdegen2020dual} it follows that $\rho^\infty$ is strictly expectation bounded and so by Remark \ref{rmk:rho WSLL iff rho bar WSLL}(b), $\rho$ is sensitive to large expected losses on the entire space $L$.  In such cases, we only need to check when Condition I holds in order to apply Theorem \ref{thm: no reg arb equivalence}.  

(b) When $\rho$ is coherent, Theorem \ref{thm: no reg arb equivalence} reduces to \cite[Theorem 4.20]{herdegen2020dual}.  By Remarks \ref{rmk:rho WSLL iff rho bar WSLL}(a) and \ref{rmk:rho arb scalable acceptable deal relation}(b), this is related to Arduca and Munari's result \cite[Theorem 4.14]{arduca2020fundamental} on the absence of good-deals of the first kind.  They work in a one-period model that accounts for frictions and derive a fundamental theorem of asset pricing, which just like Theorem \ref{thm: no reg arb equivalence}, subsumes the classical fundamental theorem of asset pricing.  This is the only result in the market consistent pricing (good-deal pricing) literature we are aware of that does this.  They relate the absence of good-deals with the existence of a \emph{strictly consistent price deflator}, however, it is difficult to apply this in practice.  Theorem \ref{thm: no reg arb equivalence} (which leads to Theorem \ref{cor:rho arb pricing bounds}) tells us \emph{exactly} which strictly consistent price deflators (EMMs in the frictionless case) to consider. 

\end{remark}

\subsection{Dual characterisation of (strong) $\rho$-consistent pricing}

As a consequence of Theorem \ref{thm: no strong reg arb}/Theorem \ref{thm: no reg arb equivalence}, one can obtain price bounds for a financial contract $X$ that lies \emph{outside} a given market based on the absence of strong $\rho$-arbitrage/$\rho$-arbitrage.  This involves running through the ACMMs that lie in $\cQ^{\overline{\textnormal{co}}\,\alpha}$/the EMMs that lie in an ``interior''  $\tilde{\cQ}^\alpha$  of $\mathcal{Q}^\alpha$ (note that the choice of ``interior'' does not matter) and taking the corresponding discounted expectation.  This is what the next two results show.  

\begin{theorem}
\label{cor:strong rho arb pricing bounds}
Suppose $\mathcal{Q}^\alpha$ is uniformly integrable and $1 \in \mathcal{Q}^\alpha$.  Let
\begin{equation*}
    \tilde{L}:=\{Y \in L^1 : \lim_{a \to \infty} \sup_{Z \in \mathcal{Q}^\alpha} \mathbb{E}[Z|Y|\mathds{1}_{|Y|>a}]=0\}
\end{equation*}
and $(S^0,S)$ be a $(1+d)$-dimensional market with returns in $\tilde{L}$.  Then, for \emph{any} financial contract $X \in \tilde{L}$, we have
\begin{equation}
\label{eq:no strong rho arb price bounds}
    I^s_{\rho}(X) = \{\mathbb{E}[ZX/(1+r)] : Z \in \mathcal{Q}^{\overline{\textnormal{co}}\,\alpha} \cap \mathcal{M} \}
\end{equation}
where $\mathcal{M}$ is the set of ACMMs for the original market.
\end{theorem}

\begin{theorem}
\label{cor:rho arb pricing bounds}
Suppose $\rho$ satisfies sensitivity to large expected losses and admits a dual representation \eqref{eq:dual_char_of_rho}, where $\emptyset \neq \tilde{\mathcal{Q}}^\alpha \subset \mathcal{Q}^\alpha$ satisfies Conditions POS, MIX and INT.  Let
\begin{equation*}
    \tilde{L}:=\{Y \in L^1 : ZY \in L^1 \textnormal{ for all } Z \in \mathcal{Q}^\alpha \}
\end{equation*}
and $(S^0,S)$ be a $(1+d)$-dimensional market with returns in $\tilde{L}$.  Then, for \emph{any} financial contract $X \in \tilde{L}$, we have
\begin{equation}
\label{eq:no rho arb price bounds}
    I_{\rho}(X) = \{\mathbb{E}[ZX/(1+r)] : Z \in \tilde{\mathcal{Q}}^\alpha \cap \mathcal{P} \}
\end{equation}
where $\mathcal{P}$ is the set of EMMs for the original market.
\end{theorem}

\begin{remark}
While there is a strong link between good-deals of the first (second) kind and (strong) $\rho$-arbitrage, cf.~Remark \ref{rem:rho consistent pricing}(c), let us stress that -- unlike the extant literature on good-deal pricing, which often yields only abstract price intervals --  (strong) $\rho$-consistent price bounds are \emph{explicitly computable} once ($\mathcal{M}$) $\mathcal{P}$ is known since $(\mathcal{Q}^{\overline{\textnormal{co}}\,\alpha})$ $\tilde{\mathcal{Q}}^\alpha$ is not difficult to find in general, cf.~Section \ref{sec:Applications}.
\end{remark}

\section{Examples}
\label{sec:Applications}

In this section, we apply our theory to various examples.  Our main focus is on convex risk measures that are not coherent since the latter have been discussed in \cite[Section 5]{herdegen2020dual}.  We do not make any assumptions on the returns, other than our standing assumptions that they are contained in a Riesz space and that the market is nonredundant and nondegenerate.

\subsection{Risk functionals based on loss functions}
\label{subsec:loss function risk measures}

The examples in this section are based around the theme of \emph{loss functions}.  

\begin{definition}
\label{def:loss function}
A function $l:\mathbb{R} \to \RR$ is called a \emph{loss function}  if it is nondecreasing, convex, $l(0)=0$ and $l(x) \geq x$ for all $x \in \RR$. 
\end{definition}

A loss function $l$ reflects how risk averse an agent is, and so it is natural to assume that it is nondecreasing and $l(0)=0$.  The assumption $l(x) \geq x$ means that compared to the risk neutral evaluation, there is more weight on losses and less on gains.  Finally, convexity of $l$ encodes that diversified positions are less risky than concentrated ones.  

The growth rate of $l$ will play an important role in the sequel.  To that end, we let 
\begin{equation*}
    a_l:= \lim_{x \to -\infty} \frac{l(x)}{x} \quad \textnormal{and} \quad b_l:= \lim_{x \to \infty} \frac{l(x)}{x}.
\end{equation*}
Note that $0 \leq a_l \leq 1 \leq b_l \leq \infty$, where $a_l < b_l$ unless $a_l = b_l = 1$, in which case $l$ is the identify function.  We will also repeatedly make use of the fact $(a_l, b_l) \subset \textnormal{dom} \, l^* \subset [a_l, b_l]$, where $l^*$ is the convex conjugate of $l$.  In particular, this means $l^*$ is bounded on any compact subset of $(a_l, b_l)$.

\subsubsection{Expected weighted loss}
The expected weighted loss of $X \in H^{\Phi_l}$ with respect to a loss function $l$ is given by
\begin{equation*}
    \textnormal{EW}^l(X) := \mathbb{E}[l(-X)]
\end{equation*}
where $H^{\Phi_l}$ is the Orlicz heart corresponding to the Young function $\Phi_l:=l|_{\RR_{+}}$.  By the properties of $l$ and the definition of $H^{\Phi_{l}}$, $\textnormal{EW}^l$ is a real-valued convex risk functional (but never cash-invariant unless $l(x)=x$). It is also not difficult to check that it  satisfies the Fatou property on $H^{\Phi_{l}}$.  Therefore, by  Corollary \ref{cor:rho infinity WC}, Theorem \ref{prop:A rho infinity larger than L+} and Proposition \ref{prop:EW SSLL characterisation}, we have the following.

\begin{corollary}
Let $l$ be a loss function for which $a_l = 0$ or $b_l = \infty$.  Assume the market $(S^0,S)$ has returns in $H^{\Phi_{l}}$. The following are equivalent:
\begin{enumerate}
\item The market $(S^0,S)$ does not admit $\textnormal{EW}^l$-arbitrage.
\item The market $(S^0,S)$ does not admit arbitrage of the first kind.
\end{enumerate}
Moreover, if $a_l > 0$ and $b_l < \infty$, then there exists a market with returns in $H^{\Phi_{l}} = L^1$ that admits $\textnormal{EW}^l$-arbitrage, but does not admit arbitrage of the first kind.
\end{corollary}

\begin{remark}
One can check that this result (including Proposition \ref{prop:EW SSLL characterisation}) extends to functions $l:\RR \to \RR$ that are nondecreasing, convex, and satisfy $l(0) = 0$ as well as $\lim_{x \to \infty} l(x) = \infty$ (which is weaker than $l(x) \geq x$ for all $x \in \RR$). 
\end{remark}

\subsubsection{Shortfall risk measures}
Shortfall risk measures were first introduced as a case study on $L^\infty$ by F{\"o}llmer and Schied in \cite[Section 3]{follmer2002convex}.  Here, we work on Orlicz hearts.  To that end, let $l$ be a loss function and define the acceptance set
\begin{equation*}
    \mathcal{A}_l := \{ X \in H^{\Phi_{l}} : \textnormal{EW}^l(X) \leq 0 \},
\end{equation*}
where $H^{\Phi_{l}}$ is the Orlicz heart corresponding to the Young function $\Phi_l:=l|_{\RR_+}$.  Then the shortfall risk measure associated with $l$ is given by $\textnormal{SR}^l:H^{\Phi_{l}} \to (-\infty,\infty]$ where
\begin{equation*}
    \textnormal{SR}^l(X) := \inf \{ m \in \RR : X+m \in \mathcal{A}_l \} = \inf \{ m \in \RR : \textnormal{EW}^l(X+m) \leq 0 \}.
\end{equation*}
This is a convex risk measure that satisfies the Fatou property.  It can be interpreted as the cash-invariant analogue of $\textnormal{EW}^l$ in the sense that it is cash-invariant and $\textnormal{SR}^l(X) \leq 0$ if and only if $\textnormal{EW}^l(X) \leq 0$.  
It admits a dual representation, which we now recall. 

\begin{proposition}
\label{prop:dual rep SR}
Let $l$ be a loss function and $l^*$ its convex conjugate.  Then for $X \in H^{\Phi_l}$
\begin{equation}
\label{eq:dual rep SR}
    \textnormal{SR}^l(X) = \sup_{Z \in \cQ^{\alpha^l}} \{\mathbb{E}[-ZX] - \alpha^l(Z)\}, \quad \textnormal{where} \quad \alpha^l(Z) := \inf_{\lambda > 0} \tfrac{1}{\lambda} \mathbb{E}[l^*(\lambda Z)]
\end{equation}
and $\cQ^{\alpha^l} = \{ Z \in \cD : \textnormal{there exists } \lambda > 0 \textnormal{ such that } \mathbb{E}[l^*(\lambda Z)] < \infty \}$.
\end{proposition}

The characterisation of $\textnormal{SR}^l$-arbitrage depends on the constants $a_l$ and $b_l$.  There are two cases.  Firstly, when  $a_l= 0$ or $b_l = \infty$, then since $\textnormal{dom} \; l^* \supset (a_l, b_l)$, Proposition \ref{prop:dual rep SR} implies
\begin{align*}
    \mathcal{Q}^{\alpha^l} \supset \{ Z \in \mathcal{D} \cap L^\infty : \textnormal{there exists $\epsilon > 0$ such that } Z \geq \epsilon \ \mathbb{P}\textnormal{-a.s.} \}.
\end{align*}
As a consequence of this, \eqref{eq:dual char of bar rho} and \cite[Proposition C.6]{herdegen2020dual}, we have $(\textnormal{SR}^l)^\infty \equiv \textnormal{WC}$ in this case.  Whence, Corollary \ref{cor:rho infinity WC} can be applied.

\begin{corollary}
\label{cor:SRl arb equiv ordinary arb}
    Let $l$ be a loss function where $a_l = 0$ or $b_l = \infty$.  Assume the market $(S^0,S)$ has returns in $H^{\Phi_{l}}$.  The following are equivalent:
    \begin{enumerate}
        \item The market $(S^0,S)$ does not admit $\textnormal{SR}^l$-arbitrage.
        \item The market $(S^0,S)$ does not admit arbitrage of the first kind.
    \end{enumerate}
\end{corollary}

When $a_l > 0$ and $b_l < \infty$, as $(a_l, b_l) \subset \textnormal{dom} \, l^* \subset [a_l, b_l]$, we have
\begin{align*}
    \{ Z \in \mathcal{D} : \textnormal{there exists $\lambda > 0$ such that } a_l < \essinf \lambda Z \textnormal{ and } \esssup \lambda Z < b_l \} & \subset \cQ^{\alpha^l} \quad \textnormal{and} \\ \{ Z \in \mathcal{D} : \textnormal{there exists $\lambda > 0$ such that } a_l \leq \lambda Z \leq b_l \ \mathbb{P}\textnormal{-a.s.}\} & \supset \mathcal{Q}^{\alpha^l}.
\end{align*}
Conditions I and UI are satisfied since $\lVert Z \rVert_{\infty} \leq b_l/a_l$ for $Z \in \mathcal{Q}^{\alpha^l}$. And the dual characterisation of (strong) SR$^l$-arbitrage follows from Theorem \ref{thm: no strong reg arb}, Proposition \ref{prop:interior SR} and Proposition \ref{prop:Q co alpha l SRl}. 

\begin{corollary}
\label{cor:SRl arbitrage}
    Let $l$ be a loss function, and assume that $a_l > 0$ and $b_l < \infty$.  Assume the market $(S^0,S)$ has returns in $H^{\Phi_{l}}$.
    \begin{enumerate}
        \item The market $(S^0,S)$ does not admit $\textnormal{SR}^l$-arbitrage if and only if there exists $Z \in \mathcal{P}$ such that $a_l + \epsilon < \lambda Z < b_l - \epsilon \ \mathbb{P}$-a.s.~for some $\lambda,\epsilon > 0$.
        \item The market $(S^0,S)$ does not admit strong $\textnormal{SR}^l$-arbitrage if and only if there exists $Z \in \mathcal{M}$ such that $\mathbb{E}[l^*(\lambda Z)] < \infty$ for some $\lambda > 0$. 
    \end{enumerate}
\end{corollary}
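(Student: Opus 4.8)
The plan is to read off both characterisations from the two general dual theorems---Theorem~\ref{thm: no strong reg arb} for the strong case and Theorem~\ref{thm: no reg arb equivalence} for the non-strong case---after feeding in the explicit descriptions of the relevant dual sets supplied by Propositions~\ref{prop:interior SR} and~\ref{prop:Q co alpha l SRl}; the corollary is then a matter of substitution. First I would record the standing hypotheses. Since $l$ is convex with $l(x) \geq x$ and $l(0) = 0$, one has $a_l \leq 1 \leq b_l$, so $1 \in \mathrm{dom}\, l^* = [a_l, b_l]$ and hence $l^*(1) < \infty$; taking $\lambda = 1$ in the formula for $\alpha^l$ from Proposition~\ref{prop:dual rep SR} gives $\alpha^l(1) \leq l^*(1) < \infty$, i.e.\ $1 \in \mathcal{Q}^{\alpha^l}$. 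As already observed before the statement, every $Z \in \mathcal{Q}^{\alpha^l}$ satisfies $\lVert Z \rVert_\infty \leq b_l/a_l$, so $\mathcal{Q}^{\alpha^l}$ is uniformly bounded and Conditions~I and~UI both hold.

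For part~(b) I would invoke Theorem~\ref{thm: no strong reg arb}: the market admits no strong $\textnormal{SR}^l$-arbitrage if and only if $\mathcal{Q}^{\overline{\textnormal{co}}\,\alpha^l} \cap \mathcal{M} \neq \emptyset$. Because $\alpha^l$ is bounded above on its (uniformly bounded) effective domain, Remark~\ref{rmk:Q subset Q start start subset Q bar}(a) gives $\mathcal{Q}^{\overline{\textnormal{co}}\,\alpha^l} = \bar{\mathcal{Q}}^{\alpha^l}$, and Proposition~\ref{prop:Q co alpha l SRl} identifies this $L^1$-closure as $\{Z \in \mathcal{D} : a_l \leq kZ \leq b_l \ \mathbb{P}\text{-a.s. for some } k > 0\}$. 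Intersecting with the absolutely continuous martingale measures $\mathcal{M}$ then yields exactly the stated non-strict condition.

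For part~(a) I would apply Theorem~\ref{thm: no reg arb equivalence}, whose remaining hypotheses are weak sensitivity to large losses on $L = H^{\Phi_l}$ and $\tilde{\mathcal{Q}}^{\alpha^l}_{\max} \neq \emptyset$. Proposition~\ref{prop:interior SR} supplies the explicit interior $\tilde{\mathcal{Q}}^{\alpha^l}_{\max} = \{Z \in \mathcal{D} : Z > 0 \text{ and } a_l + \epsilon \leq kZ \leq b_l - \epsilon \ \mathbb{P}\text{-a.s. for some } k, \epsilon > 0\}$; in particular $1$ lies in this set, so it is nonempty, and weak sensitivity on all of $L$ then follows via Remark~\ref{rmk:applying dual char reg arb}(a). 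Theorem~\ref{thm: no reg arb equivalence} now gives that the market admits no $\textnormal{SR}^l$-arbitrage if and only if $\tilde{\mathcal{Q}}^{\alpha^l}_{\max} \cap \mathcal{P} \neq \emptyset$, and since $\mathcal{P}$ consists of the strictly positive martingale measures, this intersection is nonempty precisely when there is $Z \in \mathcal{P}$ with $a_l + \epsilon < kZ < b_l - \epsilon$ $\mathbb{P}$-a.s.\ for some $k, \epsilon > 0$, which is the claim.

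The main obstacle does not lie in the corollary but is discharged in Propositions~\ref{prop:interior SR} and~\ref{prop:Q co alpha l SRl}, where one computes the $L^1$-closure of $\mathcal{Q}^{\alpha^l}$ and its maximal POS/MIX/INT-interior. The delicate point there is the bookkeeping of strict versus non-strict endpoint constraints: the closure allows $kZ$ to attain the boundary values $a_l$ and $b_l$, producing the non-strict inequalities and the use of $\mathcal{M}$ in~(b), whereas membership in the interior forces $kZ$ to remain strictly inside $(a_l, b_l)$ and $Z$ to be strictly positive, producing the strict inequalities and the use of $\mathcal{P}$ in~(a).
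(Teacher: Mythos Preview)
Your approach is essentially identical to the paper's: apply Theorem~\ref{thm: no strong reg arb} and Theorem~\ref{thm: no reg arb equivalence} after feeding in the explicit dual sets from Propositions~\ref{prop:interior SR} and~\ref{prop:Q co alpha l SRl}, with Conditions~I and~UI checked via the uniform bound $\lVert Z\rVert_\infty \leq b_l/a_l$. Two small points of phrasing: Proposition~\ref{prop:Q co alpha l SRl} asserts $\mathcal{Q}^{\overline{\textnormal{co}}\,\alpha^l}=\mathcal{Q}^{\alpha^l}$ directly (the route through Remark~\ref{rmk:Q subset Q start start subset Q bar}(a) and $L^1$-closedness is its \emph{proof}, not its statement), and Proposition~\ref{prop:interior SR} only shows that the displayed $\tilde{\mathcal{Q}}^{\alpha^l}$ satisfies POS/MIX/INT, not that it equals $\tilde{\mathcal{Q}}^{\alpha^l}_{\max}$; neither affects the argument since Theorem~\ref{thm: no reg arb equivalence} makes the characterisation independent of which admissible interior you use.
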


\begin{remark}
\label{rem:SR}
(a) All of the above results for shortfall risk measures hold for functions $l:\RR \to \RR$ that are nondecreasing, convex and satisfy $l(0) = 0$ and $l(x) > 0$ for all $x > 0$. 
 
(b) Using numerical examples, it was shown in \cite[Section 5]{giesecke2008measuring} that shortfall risk measures corresponding to functions of the form $l(x) = cx^\alpha\mathds{1}_{\{x>0\}}$ where $\alpha > 1$ and $c > 0$ ``adequately account for event risk''.  In light of part (a) above, Corollary \ref{cor:SRl arb equiv ordinary arb} reinforces this.
\end{remark}

\subsubsection{OCE risk measures}
\label{subsec:OCE examples}

Optimised Certainty Equivalents (OCEs) were first introduced by Ben-Tal and Teboulle \cite{ben1986expected} and later linked to risk measures on $L^\infty$ by the same authors in \cite{ben2007old}. We follow \cite[Section 5.4]{cheridito2009risk} and define the OCE risk measure associated with a loss function $l$ as the map $\OCE^{l}: H^{\Phi_{l}} \to \RR$,
\begin{equation}
\label{eq:OCE definiton}
    \OCE^{l}(X) := \inf_{\eta \in \RR} \{ \mathbb{E}[l(\eta - X)] - \eta \},
\end{equation}
where $H^{\Phi_{l}}$ is the Orlicz heart corresponding to the Young function $\Phi_l:=l|_{\mathbb{R}_+}$.  By \cite[Section 5.1]{cheridito2009risk} (with $V \equiv \textnormal{EW}^l$), $\OCE^l$ is the largest real-valued convex risk measure on $H^{\Phi_{l}}$ that is dominated by $\textnormal{EW}^l$.\footnote{More generally, cash-invariant hulls of convex functionals have been studied by \cite{filipovic2007monotone,krokhmal2006modeling}.}  By \cite[Theorem 4.3]{cheridito2009risk}, it also satisfies the Fatou property on $H^{\Phi_{l}}$.  Like shortfall risk measures, OCE risk measures admit a dual representation.   

\begin{proposition}
\label{prop:OCE dual rep}
Let $l$ be a loss function and $l^*$ its convex conjugate.  Then for $X \in H^{\Phi_l}$
\begin{equation}
\label{eq:dual char OCE}
    \OCE^l(X) = \sup_{Z \in \cQ^{\alpha^l}} \{\mathbb{E}[-ZX] - \alpha^l(Z)\}, \quad \textnormal{where} \quad  \alpha^l(Z) := \mathbb{E}[l^*(Z)]
\end{equation}
and $\cQ^{\alpha^l} = \{ Z \in \cD : \mathbb{E}[l^*(Z)] < \infty \}$.
\end{proposition}

\begin{remark}
\label{rmk:OCE condition}
Normalisation of $\OCE^l$ is equivalent to $l(x) \geq x$ for all $x \in \RR$.  If $l(x) > x$ for all $x$ with $|x|$ sufficiently large, then $\lim_{|x| \to \infty} (l(x) - x) = \infty$ by convexity of $x$, and the infimum in \eqref{eq:OCE definiton} is attained; cf.~\cite[Lemma 5.2]{cheridito2009risk}.  However, if $l(x) = x$ for either $x \geq 0$ or $x \leq 0$, then the infimum is not necessarily attained, and it is easy to check $\OCE^l(X) = \mathbb{E}[-X]$ for $X \in H^{\Phi_l}$.

%
\end{remark}

Just like for shortfall risk measures, the characterisation of $\textnormal{OCE}^l$-arbitrage depends on the constants $a_l \geq 0$ and $b_l \leq \infty$.  There are two cases.  Firstly, when  $a_l = 0$ and $b_l = \infty$, then $\textnormal{dom} \, l^* \supset (0,\infty)$ and it follows that
\begin{align*}
    \mathcal{Q}^{\alpha^l} \supset \{ Z \in \mathcal{D} \cap L^\infty : \textnormal{there exists $\epsilon > 0$ such that } Z \geq \epsilon \ \mathbb{P}\textnormal{-a.s.} \}.
\end{align*}
As a consequence of this, \eqref{eq:dual char of bar rho} and \cite[Proposition C.6]{herdegen2020dual}, it follows that $(\textnormal{OCE}^l)^\infty \equiv \textnormal{WC}$ in this case.  Whence, Corollary \ref{cor:rho infinity WC} can be applied.

\begin{corollary}
\label{cor:OCEl arb equiv ordinary arb}
    Let $l$ be a loss function where $a_l = 0$ and $b_l = \infty$.  Assume the market $(S^0,S)$ has returns in $H^{\Phi_{l}}$.  The following are equivalent:
    \begin{enumerate}
        \item The market $(S^0,S)$ does not admit $\textnormal{OCE}^l$-arbitrage.
        \item The market $(S^0,S)$ does not admit arbitrage of the first kind.
    \end{enumerate}
\end{corollary}

When $a_l > 0$ or $b_l < \infty$, we can derive a dual characterisation of (strong) $\OCE^l$-arbitrage.  By Remark \ref{rmk:OCE condition}, it suffices to consider the case $a_l < 1 < b_l$.

\begin{corollary}
\label{cor:OCE:SRA}
Let $l$ be a loss function and assume that either $a_l > 0$ or $b_l < \infty$, and $a_l < 1 < b_l$. Assume the market $(S^0,S)$ has returns in $H^{\Phi_{l}}$. Then,
\begin{enumerate}
    \item The market $(S^0,S)$ does not admit $\OCE^l$-arbitrage if and only if there exists $Z \in \mathcal{P}$ such that $\mathbb{E}[l^*(Z)] < \infty$ and $a_l+\epsilon < Z < b_l - \epsilon \ \mathbb{P}$-a.s.~for some $\epsilon > 0$.
    \item If in addition $b_l < \infty$, the market $(S^0,S)$ does not admit strong $\OCE^l$-arbitrage if and only if there exists $Z \in \mathcal{M}$ such that $\mathbb{E}[l^*(Z)] < \infty$. 
\end{enumerate}
\end{corollary}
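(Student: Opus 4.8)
The plan is to specialise the dual criteria of Theorems \ref{thm: no reg arb equivalence} and \ref{thm: no strong reg arb} to $\rho = \OCE^l$ with the penalty function $\alpha^l(Z) = \E[l^*(Z)]$ from Proposition \ref{prop:OCE dual rep}. The first step is to describe the effective domain $\cQ^{\alpha^l}$. Since $l^*$ is finite precisely on $[a_l,b_l]$, one has $\cQ^{\alpha^l} = \{Z \in \cD : \E[l^*(Z)] < \infty\} \subset \{Z \in \cD : a_l \le Z \le b_l \text{ a.s.}\}$, with equality as soon as $b_l < \infty$ (because $l^*$ is then continuous, hence bounded, on the compact interval $[a_l,b_l]$). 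I would also record that $l^*(1) = \sup_x (x - l(x)) = 0$ since $l(x) \ge x$ and $l(0) = 0$; thus $1 \in \cQ^{\alpha^l}$ with $\alpha^l(1) = 0$, and as $a_l < 1 < b_l$ the real-world measure lies strictly inside the domain.

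For part (a) I would apply Theorem \ref{thm: no reg arb equivalence}. Condition I follows by noting that for $z \ge 1$ the contribution of $x < 0$ to $\sup_x(xz - l(x))$ is nonpositive, so $l^* = \Psi_l$ on $[1,\infty)$; hence $\E[l^*(Z)] < \infty$ forces $Z \in L^{\Psi_l}$, and Hölder's inequality together with returns in $H^{\Phi_l}$ gives $ZR^i \in L^1$. The crux is then to identify the maximal interior $\tilde{\cQ}^{\alpha^l}_{\max}$ as $\{Z \in \cD : Z > 0,\ \E[l^*(Z)] < \infty \text{ and } a_l + \epsilon < Z < b_l - \epsilon \text{ a.s. for some } \epsilon > 0\}$. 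To see that any such $Z$ satisfies Conditions POS, MIX and INT I would take $\cE = \cD \cap L^\infty$: POS and MIX are clear, while for INT one uses that for $Z' \in \cE$ and $\lambda$ small enough the mixture $\lambda Z' + (1-\lambda)Z$ stays above $a_l$, lies below $\max(Z', Z)$, and---splitting according to whether $Z' \ge a_l$---is controlled either by the convexity bound $l^*(\lambda Z' + (1-\lambda)Z) \le \lambda l^*(Z') + (1-\lambda)l^*(Z)$ or by $l^*(a_l) + l^*(Z)$, both of which are integrable. Since $1$ lies in this set, $\tilde{\cQ}^{\alpha^l}_{\max} \ne \emptyset$, and by Remark \ref{rmk:applying dual char reg arb}(a), $\OCE^l$ is weakly sensitive to large losses on $L$. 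Theorem \ref{thm: no reg arb equivalence} then gives that there is no $\OCE^l$-arbitrage iff $\tilde{\cQ}^{\alpha^l}_{\max} \cap \cP \ne \emptyset$, and intersecting the explicit description with $\cP$ yields exactly the stated criterion (the bound $Z < b_l - \epsilon$ being vacuous, and $\E[l^*(Z)] < \infty$ the binding constraint, when $b_l = \infty$).

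For part (b), with $b_l < \infty$, I would instead invoke Theorem \ref{thm: no strong reg arb}. Here $\cQ^{\alpha^l} = \{Z \in \cD : a_l \le Z \le b_l \text{ a.s.}\}$ is uniformly bounded, hence uniformly integrable, and $|R^iZ| \le b_l|R^i|$ with $R^i \in H^{\Phi_l} \subset L^1$ shows $R^i\cQ^{\alpha^l}$ is uniformly integrable, so Condition UI holds; and $1 \in \cQ^{\alpha^l}$ was noted above. It remains to compute $\cQ^{\overline{\textnormal{co}}\,\alpha^l}$: since $\alpha^l = \E[l^*(\cdot)]$ is convex and $L^1$-lower semi-continuous (by Fatou, using $l^* \ge 0$) and bounded on its closed, bounded effective domain, Remark \ref{rmk:Q subset Q start start subset Q bar}(a) gives $\cQ^{\overline{\textnormal{co}}\,\alpha^l} = \bar{\cQ}^{\alpha^l} = \cQ^{\alpha^l}$. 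Theorem \ref{thm: no strong reg arb} therefore reads: no strong $\OCE^l$-arbitrage iff $\cQ^{\overline{\textnormal{co}}\,\alpha^l} \cap \cM \ne \emptyset$, i.e., iff there is $Z \in \cM$ with $a_l \le Z \le b_l$ a.s.

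The main obstacle is the exact determination of $\tilde{\cQ}^{\alpha^l}_{\max}$ in part (a). The inclusion establishing that boundary-avoiding densities lie in the interior is the delicate INT verification sketched above, where---especially when $b_l = \infty$ and admissible densities are unbounded---one must choose the mixing weight $\lambda$ small relative to $\epsilon$ and exploit the convexity of $l^*$ to keep the penalty of the mixture finite. The reverse inclusion, that a density touching $a_l$ or $b_l$ cannot belong to the interior, requires showing that such a density fails INT against suitable test densities in every $L^\infty$-dense family, since mixing pushes a positive-measure set outside $[a_l,b_l]$ and renders the penalty infinite. Once this characterisation is in place, both parts follow by direct substitution into the two dual characterisation theorems.
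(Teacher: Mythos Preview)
Your approach is essentially the paper's: apply Theorem \ref{thm: no reg arb equivalence} for (a) and Theorem \ref{thm: no strong reg arb} for (b), after describing $\cQ^{\alpha^l}$ and verifying the side conditions. Part (b) is fine; your computation $\cQ^{\overline{\textnormal{co}}\,\alpha^l} = \{Z \in \cD: a_l \le Z \le b_l\}$ matches the paper's (which gets there via boundedness of $\alpha^l$ on its domain and Remark \ref{rmk:Q subset Q start start subset Q bar}(a), rather than lower semi-continuity of $\alpha^l$ directly; both are valid).

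For part (a) you are making life harder than necessary. You do \emph{not} need to identify $\tilde{\cQ}^{\alpha^l}_{\max}$, and in particular you do not need the ``reverse inclusion'' at all. Theorem \ref{thm: no reg arb equivalence} states that (a), (b) and (c) there are equivalent, so once \emph{some} nonempty $\tilde{\cQ}^{\alpha^l} \subset \cQ^{\alpha^l}$ satisfying POS, MIX and INT has been exhibited, the characterisation ``no $\OCE^l$-arbitrage $\iff \tilde{\cQ}^{\alpha^l} \cap \cP \ne \emptyset$'' holds for \emph{that} set. This is exactly what the paper does in Proposition \ref{prop:interior OCE}: it takes
\[
\tilde{\cQ}^{\alpha^l} := \{ Z \in \cQ^{\alpha^l} : a_l + \epsilon < Z < b_l - \epsilon \text{ a.s.~for some } \epsilon > 0\},
\]
checks POS, MIX, INT (INT via $\cE = \cD \cap L^\infty$, handling the $b_l = \infty$ case by exploiting that $l^*$ has a minimum on $[a_l,\infty)$ and splitting accordingly), and then reads off Corollary \ref{cor:OCE:SRA}(a) by intersecting with $\cP$. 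Your INT sketch is in the right spirit but rougher than the paper's; dropping the unnecessary maximality argument would let you focus on making that verification clean.
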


\subsection{Adjusted risk functionals}
\label{subsec:adjusted ES}

Suppose we have a family of risk functionals $(\rho_\alpha)_{\alpha \in \mathcal{I}}$ on a Riesz space $L^\infty \subset L \subset L^1$ indexed by a set $\mathcal{I}$.  Let $g:\mathcal{I} \to [0,\infty]$ be a function such that $\inf g = 0$.  Then the functional defined by 
\begin{equation*}
    \rho^g(X):= \sup_{\alpha \in \mathcal{I}} \{ \rho_\alpha(X) - g(\alpha) \}, \quad X \in L,
\end{equation*}
is also a risk functional.\footnote{Here, we tacitly assume that the supremum is only taken over those $\alpha \in \mathcal{I}$ for which $g(\alpha) < \infty$.}  It is cash-invariant/convex/satisfies the Fatou property whenever $\rho_\alpha$ is cash-invariant/convex/satisfies the Fatou property for $\alpha \in \mathcal{I}$.  The way to interpret this \emph{g-adjusted risk functional}, is to look at its acceptance set.  Indeed, $X \in \cA_{\rho^g}$ if and only if $\rho_\alpha(X) \leq g(\alpha)$ for all $\alpha \in \mathcal{I}$.  Thus, whilst the risk of a random variable $X$ is ultimately represented by a single number $\rho^g(X)$, whether or not $X$ is acceptable depends on the entire continuum of values $(\rho_\alpha(X))_{\alpha \in \mathcal{I}}$.  In this way, $g$ can be considered as a \emph{target risk profile}.

When $\mathcal{I}$ is equipped with a partial order, $\rho_\alpha \geq \rho_{\alpha'}$ for $\alpha \leq \alpha'$ and $\rho_\alpha$ is cash-invariant for all $\alpha \in \mathcal{I}$, then we may assume without loss of generality that $g$ is nonincreasing.  Indeed, otherwise, we may replace $g$ by $\tilde{g}: \mathcal{I} \to [0,\infty]$ where
\begin{equation*}
    \tilde{g}(\alpha) := \inf \{ g(\alpha') : \alpha' \leq \alpha \},
\end{equation*}
and note that the acceptance set of $\rho^g$ coincides with the acceptance set of $\rho^{\tilde{g}}$, which implies that $\rho^g = \rho^{\tilde{g}}$ since they are cash-invariant.  We will consider the cases where $\rho_\alpha$ is given by either Value at Risk or Expected Shortfall.


\subsubsection{Adjusted Value at Risk}

For a nonincreasing function $g:(0,1) \to [0,\infty]$ with $\inf g  = 0$,  we define the \emph{$g$-adjusted Value at Risk} as the map $\VaR^g: L^1 \to (-\infty,\infty]$ given by
\begin{equation*}
    \VaR^g(X) := \sup_{\alpha \in (0,1)}\{\VaR^\alpha(X) - g(\alpha)\}.
\end{equation*}  

This is a family of risk measures that are neither convex nor positively homogeneous (unless the function $g$ is constant on its effective domain). It was introduced by Bignozzi, Burzoni and Munari \cite{bignozzi2020risk}.\footnote{Our definition is based on \cite[Proposition 4]{bignozzi2020risk}. There, they consider nondecreasing functions that are left-continuous (since these are the properties of the generalised inverse of a \emph{benchmark loss distribution}.  However, in line with the way we defined VaR, the functions $g$ must be nonincreasing for us. They also do not necessarily have to be left-continuous.}  By Corollary \ref{cor:rho infinity WC}, Theorem \ref{prop:A rho infinity larger than L+} and Proposition \ref{prop:VaRg acceptance set}, we have the following result concerning $\VaR^g$-arbitrage.

\begin{proposition}
Assume the market $(S^0,S)$ has returns in $L^1$.  Let $g:(0,1) \to [0,\infty]$ be a nonincreasing function with $\inf g = 0$.  If $g$ is real-valued, then the following are equivalent:
\begin{enumerate}
    \item The market $(S^0,S)$ does not admit $\VaR^g$-arbitrage.
\item The market $(S^0,S)$ does not admit arbitrage of the first kind.
\end{enumerate}
If $g$ is not real-valued and the probability space is atomless, then there exists a market with returns in $L^1$ that admits strong $\VaR^g$-arbitrage, but does not admit arbitrage of the first kind.
\end{proposition}

\subsubsection{Adjusted Expected Shortfall}


Let $\cG$ be the set of nonincreasing functions $g:(0,1] \to [0,\infty]$ with $g(1) = 0$ and $\{1\} \subsetneq \textnormal{dom} \, g$.  For $g \in \cG$, we define the \emph{$g$-adjusted Expected Shortfall} as the map  $\ES^g:L^1 \to (-\infty,\infty]$, given by
\begin{equation*}
    \ES^g(X) := \sup_{\alpha \in (0,1]}\{\ES^\alpha(X) - g(\alpha)\}, \quad \textnormal{where} \quad \ES^1(X):=\lim_{\alpha \to 1} \ES^\alpha(X) = \mathbb{E}[-X].
\end{equation*}  

This is a family of convex risk measures introduced by Burzoni, Munari and Wang \cite{burzoni2020adjusted}.\footnote{Our definition is based on \cite[Proposition 2.2]{burzoni2020adjusted}. There, they consider nondecreasing functions.  However, in line with the way we defined ES, the functions $g$ must be nonincreasing for us.  The case $\textnormal{dom\,$g$} = \{1\}$ corresponds to the expected-loss risk measure and is not interesting.}    We proceed to state the dual representation of $g$-adjusted ES. To this end, for $\beta \in [0,1)$ set $\cG_\beta  := \{g \in \cG : \inf \textnormal{dom\,$g$} = \beta \}$.

\begin{proposition}
\label{prop:dual rep g adjusted ES}
Let $g \in \cG$. Then $\ES^g: L^1 \to (-\infty, \infty]$ satisfies the dual representation
\begin{equation*}
    \ES^g(X) = \sup_{Z \in \cQ^{\alpha^g}}\{\mathbb{E}[-ZX] - g(\lVert Z \rVert_{\infty}^{-1})\}
\end{equation*}
where the penalty function $\alpha^g:\cD \to [0,\infty]$ is given by $\alpha^g(Z) = g(\lVert Z \rVert_{\infty}^{-1})$ if $Z \in \cD \cap L^\infty$ and $\alpha^g(Z) = \infty$ otherwise.\footnote{Note that $\alpha^g$ is in general only quasi-convex since $g$ is nonincreasing.} Moreover, $\cQ^{\alpha^g}=\{Z \in \cD \cap L^\infty : g(\lVert Z \rVert_{\infty}^{-1})  < \infty \}$ is convex and satisfies 
\begin{equation}
	\label{eq:dual set g adjusted ES}
	\cQ^{\alpha^g} =  \begin{cases}
		\cD \cap L^\infty,&\text{if $g \in \cG_0$}, \\ \{Z \in \cD: \lVert Z \rVert_{\infty} \leq \tfrac{1}{\beta}\},&\text{if $g \in \cG_\beta$, $\beta \in (0,1)$ and $g(\beta) < \infty$},  \\ \{Z \in \cD: \lVert Z \rVert_{\infty} < \tfrac{1}{\beta}\},&\text{if $g \in \cG_\beta$, $\beta \in (0,1)$ and $g(\beta) = \infty$}.
	\end{cases}
\end{equation}
\end{proposition}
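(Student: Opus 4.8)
The plan is to obtain the representation directly from the definition $\ES^g(X) = \sup_{\alpha \in (0,1]}\{\ES^\alpha(X) - g(\alpha)\}$ by substituting the classical dual representation of Expected Shortfall and then interchanging the two suprema. Recall that for each $\alpha \in (0,1]$ and $X \in L^1$,
\[
\ES^\alpha(X) = \sup\{\mathbb{E}[-ZX] : Z \in \cD,\ \lVert Z \rVert_\infty \leq 1/\alpha\},
\]
where for $\alpha = 1$ the only admissible density is $Z \equiv 1$, recovering $\mathbb{E}[-X]$. Since every $Z \in \cD$ satisfies $\lVert Z \rVert_\infty \geq \mathbb{E}[Z] = 1$ (as $Z \leq \lVert Z \rVert_\infty$ $\P$-a.s.), the threshold $1/\lVert Z \rVert_\infty$ always lies in $(0,1]$ and hence in the domain of $g$.

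Substituting this into the definition and using $\sup_a \sup_b = \sup_b \sup_a$, I would rewrite, noting that $\lVert Z \rVert_\infty \leq 1/\alpha$ is equivalent to $\alpha \leq 1/\lVert Z \rVert_\infty$,
\[
\ES^g(X) = \sup_{Z \in \cD \cap L^\infty}\ \sup_{\alpha \in (0,\,1/\lVert Z \rVert_\infty]}\{\mathbb{E}[-ZX] - g(\alpha)\} = \sup_{Z \in \cD \cap L^\infty}\Bigl\{\mathbb{E}[-ZX] - \inf_{\alpha \in (0,\,1/\lVert Z \rVert_\infty]} g(\alpha)\Bigr\}.
\]
Because $g$ is nonincreasing, the infimum over $(0, 1/\lVert Z \rVert_\infty]$ is attained at the right endpoint and equals $g(\lVert Z \rVert_\infty^{-1})$. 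This identifies the penalty as $\alpha^g(Z) = g(\lVert Z \rVert_\infty^{-1})$ for $Z \in \cD \cap L^\infty$; for $Z \notin L^\infty$ one sets $\alpha^g(Z) = \infty$, and by the convention in Remark~\ref{rem:dual char}(a) such $Z$ contribute $-\infty$, so the supremum may harmlessly be extended to all of $\cD$, yielding the first equality in the statement.

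For the description of $\cQ^{\alpha^g} = \{Z \in \cD \cap L^\infty : g(\lVert Z \rVert_\infty^{-1}) < \infty\}$, I would invoke that $g$ nonincreasing forces $\mathrm{dom}\, g$ to be an interval with left endpoint $\beta = \inf \mathrm{dom}\, g$: namely $(0,1]$ if $\beta = 0$, $[\beta,1]$ if $\beta > 0$ and $g(\beta) < \infty$, and $(\beta,1]$ if $\beta > 0$ and $g(\beta) = \infty$. Using that $\lVert Z \rVert_\infty^{-1} \in (0,1]$, the condition $g(\lVert Z \rVert_\infty^{-1}) < \infty$ becomes $\lVert Z \rVert_\infty^{-1} \in \mathrm{dom}\, g$, which translates into no constraint beyond $Z \in \cD \cap L^\infty$ (case $g \in \cG_0$), into $\lVert Z \rVert_\infty \leq 1/\beta$ (case $g(\beta) < \infty$), and into $\lVert Z \rVert_\infty < 1/\beta$ (case $g(\beta) = \infty$). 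In each case $\cQ^{\alpha^g}$ is the intersection of the convex set $\cD$ with an $L^\infty$-norm ball (all of $L^\infty$, closed, or open, respectively), hence convex.

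The computation is essentially routine; the only point requiring care is the interchange of suprema together with the boundary behaviour as $\alpha \downarrow 0$, where both $\ES^\alpha(X)$ and $g(\alpha)$ may diverge. This causes no real difficulty, since reordering suprema is always valid and the possibly-$+\infty$ values are precisely what make $\ES^g$ take values in $(-\infty,\infty]$; the monotonicity of $g$ then collapses the inner supremum cleanly to a single evaluation, which is the crux of the whole argument.
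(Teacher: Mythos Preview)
Your argument is correct. The paper itself does not give a proof of the dual representation at all: it simply cites \cite[Proposition~3.7]{burzoni2020adjusted} for the representation and remarks that \eqref{eq:dual set g adjusted ES} and the convexity of $\cQ^{\alpha^g}$ follow directly from the definition of $\alpha^g$. Your write-up is therefore more self-contained; the mechanism you use---inserting the standard dual representation of $\ES^\alpha$ and swapping the two suprema, then exploiting monotonicity of $g$ to collapse the inner supremum---is exactly the natural computation behind the cited result.
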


When $g \in \mathcal{G}_0$, we have that $(\ES^g)^\infty \equiv \WC$ by \eqref{eq:dual char of bar rho}, \eqref{eq:dual set g adjusted ES} and \cite[Proposition C.6]{herdegen2020dual}.  Therefore, we may apply Corollary \ref{cor:rho infinity WC}.

\begin{corollary}
\label{cor:ESg arb equiv ordinary arb}
   Let $g \in \mathcal{G}_0$ and assume the market $(S^0,S)$ has returns in $L^1$.  The following are equivalent:
    \begin{enumerate}
        \item The market $(S^0,S)$ does not admit $\textnormal{ES}^g$-arbitrage.
        \item The market $(S^0,S)$ does not admit arbitrage of the first kind.
    \end{enumerate}
\end{corollary}

We can further provide a dual characterisation of (strong) $\ES^g$-arbitrage when $g \in \mathcal{G}_\beta$ and $\beta \in (0,1)$.  In this case, since $\mathcal{Q}^{\alpha^g}$ is $L^\infty$-bounded, Conditions I and UI are both satisfied if the returns lie in $L^1$.  Moreover,  it is not difficult to check that
\begin{equation*}
    \tilde{\cQ}^{\alpha^g} =\{ 0 < Z \in \cD: \lVert Z \rVert_{\infty} < 1/\beta\}
\end{equation*}
is a subset of $\mathcal{Q}^{\alpha^g}$ that satisfies Conditions POS, MIX and INT and contains $1$; see~\cite[Proposition B.6]{herdegen2020dual} for details. Finally, Proposition \ref{prop:Q g star star} shows that 
\begin{equation*}
    \cQ^{\overline{\textnormal{co}}\,\alpha^g} = \begin{cases}
   \{Z \in \cD: \lVert Z \rVert_{\infty} \leq \tfrac{1}{\beta}\},&\text{if $g \in  \cG_\beta^\infty$}, \\ \{Z \in \cD: \lVert Z \rVert_{\infty} < \tfrac{1}{\beta}\},&\text{if $g \in \cG_\beta \setminus \cG_\beta^\infty$},
   \end{cases}
\end{equation*}
where $\cG_\beta^\infty := \{g \in \cG_\beta : g \textnormal{ is bounded on its effective domain} \}$.  Thus, Theorems \ref{thm: no strong reg arb} and \ref{thm: no reg arb equivalence} yield the following result.

\begin{corollary}
Let $g \in \mathcal{G}_\beta$ where $\beta \in (0,1)$ and assume the market $(S^0,S)$ has returns in $L^1$.
\begin{enumerate}
    \item $(S^0,S)$ does not admit $\ES^g$-arbitrage if and only if there exists $Z \in \mathcal{P}$ such that $\left\Vert Z \right\Vert_\infty < \frac{1}{\beta}$. 
    \item When $g \in \cG_\beta^\infty$ ($g \in \cG_\beta \setminus \cG_\beta^\infty$), $(S^0,S)$ does not admit strong $\ES^g$-arbitrage if and only if there exists $Z \in \mathcal{M}$ with $\left\Vert Z \right\Vert_\infty \leq (<) \frac{1}{\beta}$. 
\end{enumerate}
\end{corollary}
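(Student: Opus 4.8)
The plan is to read the two dual characterisations off the machinery assembled in this subsection, applied to the convex risk measure $\rho = \ES^g$ with the quasi-convex penalty function $\alpha^g$ of Proposition \ref{prop:dual rep g adjusted ES}. For part (1) the tool is Theorem \ref{thm: no reg arb equivalence}, and for part (2) it is Theorem \ref{thm: no strong reg arb}. Each theorem reduces the absence of (strong) $\ES^g$-arbitrage to the nonemptiness of the intersection of an explicitly known dual set with $\mathcal{M}$ or $\mathcal{P}$; the remaining work is purely bookkeeping, matching those conditions against the descriptions of $\cQ^{\alpha^g}$, $\tilde\cQ^{\alpha^g}$ and $\cQ^{\overline{\textnormal{co}}\,\alpha^g}$ already recorded above.

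For part (1), I would first verify the hypotheses of Theorem \ref{thm: no reg arb equivalence}. Condition I holds because the returns lie in $L^1$ and $\cQ^{\alpha^g}$ is $L^\infty$-bounded by $1/\beta$ (see \eqref{eq:dual set g adjusted ES}), so $Z R^i \in L^1$ for each $i$. The interior set $\tilde\cQ^{\alpha^g} = \{Z > 0 \in \cD : \lVert Z \rVert_\infty < 1/\beta\}$ satisfies Conditions POS, MIX and INT and contains $1$, whence $\tilde\cQ^{\alpha^g}_{\max} \neq \emptyset$; moreover, exactly as in Remark \ref{rmk:applying dual char reg arb}(a), because this interior set contains the real-world density $1$, the recession measure $(\ES^g)^\infty$ is strictly expectation bounded and hence $\ES^g$ is weakly sensitive to large losses on all of $L^1$ by Remark \ref{rmk:rho WSLL iff rho bar WSLL}(c). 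Theorem \ref{thm: no reg arb equivalence} then states that the market admits no $\ES^g$-arbitrage if and only if $\tilde\cQ^{\alpha^g} \cap \mathcal{P} \neq \emptyset$, and intersecting with $\mathcal{P} = \{Z \in \mathcal{M} : Z > 0\}$ yields precisely the existence of $Z \in \mathcal{P}$ with $\lVert Z \rVert_\infty < 1/\beta$.

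For part (2), Condition UI holds: $\cQ^{\alpha^g}$ is uniformly integrable since it is $L^\infty$-bounded, and each family $R^i \cQ^{\alpha^g}$ is uniformly integrable because it is dominated by the fixed $L^1$-function $(1/\beta)|R^i|$. Since $1 \in \cQ^{\alpha^g}$ (as $\beta < 1$), Theorem \ref{thm: no strong reg arb} applies and states that the market admits no strong $\ES^g$-arbitrage if and only if $\cQ^{\overline{\textnormal{co}}\,\alpha^g} \cap \mathcal{M} \neq \emptyset$. Substituting the two-case description of $\cQ^{\overline{\textnormal{co}}\,\alpha^g}$ from Proposition \ref{prop:Q g star star} gives: when $g \in \cG_\beta^\infty$ the condition becomes $\{Z \in \mathcal{M} : \lVert Z \rVert_\infty \leq 1/\beta\} \neq \emptyset$, and when $g \in \cG_\beta \setminus \cG_\beta^\infty$ it becomes $\{Z \in \mathcal{M} : \lVert Z \rVert_\infty < 1/\beta\} \neq \emptyset$, which are exactly the two claimed criteria.

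The genuine analytical content — verifying Conditions I and UI, identifying $\tilde\cQ^{\alpha^g}$, and above all computing the closed-convex-hull domain $\cQ^{\overline{\textnormal{co}}\,\alpha^g}$ in Proposition \ref{prop:Q g star star} — has already been discharged before the statement. Consequently the only point requiring care in assembling the corollary is the standing weak-sensitivity hypothesis needed for part (1): Theorem \ref{thm: no reg arb equivalence} assumes weak sensitivity to large losses on $L$, and this must be justified rather than assumed. I expect this to be the main (if modest) obstacle, and it is resolved by the observation that $1$ lies in the interior dual set, which forces strict expectation boundedness of $(\ES^g)^\infty$ and hence weak sensitivity on all of $L^1$. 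Everything else is a direct translation of the nonemptiness conditions delivered by the two theorems into the stated bounds on $\lVert Z \rVert_\infty$.
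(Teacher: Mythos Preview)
Your proposal is correct and follows essentially the same approach as the paper: the paper likewise assembles Conditions I and UI from the $L^\infty$-boundedness of $\cQ^{\alpha^g}$, identifies $\tilde\cQ^{\alpha^g}$ containing $1$ (which supplies both $\tilde\cQ^{\alpha^g}_{\max}\neq\emptyset$ and weak sensitivity to large losses via Remark~\ref{rmk:applying dual char reg arb}(a)), and then reads off the two parts from Theorems~\ref{thm: no reg arb equivalence} and~\ref{thm: no strong reg arb} together with Proposition~\ref{prop:Q g star star}. Your write-up is in fact more explicit than the paper's, which simply states that these theorems ``yield'' the result after recording the needed ingredients in the surrounding text.
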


\begin{remark}
	\label{rem:strong rho bar arbitrage}
This result shows that the implication ``(c) $\implies$ (b)'' in Theorem~\ref{prop:strong reg arb:first characterisation} does not hold. Indeed, if  $g \in \cG_\beta \setminus \cG_\beta^\infty$ for $\beta \in (0, 1)$ and there exists no $Z \in \cM$ with $\Vert Z \Vert_\infty < \frac{1}{\beta}$ but a $Z \in \cM$  with $\Vert Z \Vert_\infty = \frac{1}{\beta}$, then the market admits strong $\rho$-arbitrage for $\rho = \ES^g$. However, since the $L^1$-closure of $\cQ^{\alpha^g}$ is $\{Z \in \cD: \lVert Z \rVert_{\infty} \leq \tfrac{1}{\beta}\}$, it follows from \eqref{eq:dual char of bar rho} and 
\cite[Theorem 4.15]{herdegen2020dual} that the market does not admit strong $\rho^{\infty}$-arbitrage.
\end{remark}

\section{Conclusion and outlook}
\label{sec:conclusion}
The goal of this paper has been to answer the four questions posed in the introduction.  We have seen that essentially (Q1) has a positive answer if $\rho$ satisfies sensitivity to large expected losses on the set of excess returns.  However, this axiom is not enough to avoid (strong) $\rho$-arbitrage, which is the main concern of the regulator.

In order to characterise the absence of $\rho$-arbitrage, we discovered the key relationship between mean-$\rho$ portfolio selection and mean-$\rho^\infty$ portfolio selection, where $\rho^\infty$ is the \emph{smallest} positively homogeneous risk functional that dominates $\rho$. This relationship is crucial for the dual characterisation of $\rho$-arbitrage when $\rho$ is a convex risk measure since it allows to lift the results on mean-$\rho$ portfolio selection from the coherent case to the convex case. This link between $\rho$ and $\rho^\infty$ breaks in the case of strong $\rho$-arbitrage.  Nevertheless, we were still able to derive a dual characterisation of strong $\rho$-arbitrage by using tools from convex analysis. 

We answered (Q3) by showing that the well-posedness of the mean-$\rho$ problems (1) and (2) is equivalent to the absence of $\rho$-arbitrage, when $\rho$ is sensitive to large expected losses and satisfies the Fatou property.

Finally, as a byproduct of (Q2), we were able to answer (Q4).  In the case of elliptical markets, this led to the elegant result in the form of Theorem \ref{cor: elliptical market}. More generally, when $\rho$ is a convex risk measure, this boils down to taking discounted expectations with respect to certain (but not necessarily all) absolutely continuous/equivalent martingale measures for the discounted risky assets, cf.~Theorems \ref{cor:strong rho arb pricing bounds} and \ref{cor:rho arb pricing bounds}.  In particular, we stress that unlike the majority of literature on pricing, (strong) $\rho$-consistent intervals can be expressed in a precise way.

\appendix

\section{Counterexamples}
\label{app:examples}
In this appendix we give some counterexamples to complement the results in Section \ref{sec:sensitivity to large losses}.

\begin{example}
\label{exa:binomial}
Suppose that risk is quantified by the Expected Shortfall at level $\alpha \in (0,1)$, $\ES^\alpha$.  Consider the binomial model with one riskless asset and one risky asset with returns $R^{0}$ and $R^{1}$, respectively, satisfying
	\begin{equation*}
	R^{0} = r\;\as{\P}, \quad   \P[R^{1}=u] = p  \quad \text{and} \quad\P[R^1 =d] = 1-p,
	\end{equation*}
where we assume that $p \in (0, 1)$ and $-1 < d < r < u$.  We also assume that $\mathbb{E}[R^{1}] > r$, i.e., $up+d(1-p) > r$.  For given desired excess return $\nu > 0$, we have to invest $\pi^1_\nu = \frac{\nu}{up+d(1-p) -r}$ into the risky asset to obtain an expected excess return of $\nu$. Denote the corresponding ES at level $\alpha$ by $\ES^\alpha_{\nu}$. Then $\ES^\alpha_\nu = \nu \ES^\alpha_1$ where
\begin{equation*}
	\ES^\alpha_{1} = 
	\begin{cases}
	\pi^{1}_{1}(r-d), &\text{if } \alpha \leq 1-p,\\
	\pi^{1}_{1}\tfrac{u-r}{\alpha}\left(\frac{1-p}{1-q}- \alpha\right),&\text{if } \alpha > 1-p.
	\end{cases}
\end{equation*}
Setting $\alpha^* := \frac{1-p}{1-q}$, we obtain
\begin{equation*}
	\ES^\alpha_{1}
	\begin{cases}
	> 0, &\text{if } \alpha < \alpha^*, \\
	= 0, &\text{if } \alpha = \alpha^*, \\
<0, &\text{if } \alpha > \alpha^*.
	\end{cases}
\end{equation*}
Thus, for $\alpha \geq \alpha^*$, $\ES^\alpha_\nu \leq 0$ for all $\nu \geq 0$ (the market admits $\ES^\alpha$-arbitrage).  And for $\alpha > \alpha^*$, we have that $\ES^\alpha_\nu \downarrow -\infty$ as $\nu \uparrow \infty$ (the market admits strong $\ES^\alpha$-arbitrage).
\end{example}

\begin{example}
\label{exa:no WSTLL}

Consider a three-dimensional market where $r=0$, $R^1 \sim N(0,1)$ and $R^2 \sim N(1,1)$ and assume that $R^1$ and $R^2$ are not perfectly correlated.  Then for $\nu \in \RR$, $\Pi_{\nu} = \{(\pi^1,\nu) : \pi^1 \in \RR\}$.  Let $\cX:=\{\pi^1 R^1 + \nu R^2: (\pi^1,\nu) \in \RR^2 \}$.  

\medskip
\noindent (a) Define $\eta:\cX \to (-\infty,\infty]$ by
 \begin{equation*}
     \eta(\pi^1 R^1 + \nu R^2) = \begin{cases}
    \max\{\nu^2 \e^{-\pi^1/\nu} - 1 ,0\},&\text{if $\nu > 0$ and $\pi^1 \in \RR$,} \\ 0,&\text{if $\nu = 0$ and $\pi^1 \in \RR$}, \\ \infty,&\text{otherwise}.
     \end{cases} 
 \end{equation*}
Then $\eta$ is normalised and star-shaped but not sensitive to large expected losses on $\cX$. By Proposition \ref{prop:extending risk measure}, we can extend $\eta$ to a risk measure $\rho:L^1 \to (-\infty,\infty]$ that is normalised, star-shaped, monotone and satisfies $\rho|_{\cX} \equiv \eta$. It also satisfies the Fatou property on $\cX$ but not sensitivity to large expected losses on $\cX$.  For $\nu \geq 0$, $\rho_\nu$ is attained and equal to $0$.  Thus, the market admits $\rho$-arbitrage. Now it is not difficult to check that 
\begin{equation*}
    \rho^\infty(\pi^1 R^1 + \nu R^2) = \begin{cases}
     0,&\text{if $\nu = 0$ and $\pi^1 \in \RR$,} \\ \infty,&\text{if $\nu \neq 0$ and $\pi^1 \in \RR$}.
     \end{cases}
\end{equation*}
Whence, $\rho^\infty_1 = \infty$.  Therefore, in the absence of sensitivity to large expected losses (even if the Fatou property is satisfied), $\rho$-arbitrage does not imply $\rho^\infty$-arbitrage.

\medskip
\noindent (b) Alter the above risk functional by defining $\eta(\pi^1 R^1 + \nu R^2) = \infty$ for $\nu = 0$ and $\pi^1 \neq 0$.  Then $\rho$ satisfies sensitivity to large expected losses on $\mathcal{X}$, but no longer satisfies the Fatou property on $\mathcal{X}$.  By arguing as above, it is not difficult to check that the market admits $\rho$-arbitrage but not $\rho^\infty$-arbitrage.  Whence, if the Fatou property is not satisfied (even if sensitivity to large expected losses is satisfied), $\rho$-arbitrage does not imply $\rho^\infty$-arbitrage.
\end{example}

\section{Key definitions and results on convex analysis}
\label{app:convex analysis}
In this appendix, we recall some key definitions and results regarding convex functions and convex conjugates.

\medskip
Let $X$ be a topological vector space and $f:X \to [-\infty,\infty]$ a function.
\begin{itemize}
	\item The \emph{epigraph} of $f$ is given by 
	\begin{align*}
		\textnormal{epi$\,f$} & := \{(x,t) \in X \times \RR : f(x) \leq t \}.
	\end{align*}
	Note that $f$ can be recovered from its epigraph, $f(x)=\inf\{t \in \RR : (x,t) \in \text{epi$\,f$}\}$.  Also, a function $g:X \to [-\infty,\infty]$ is dominated by $f$ if and only if epi\,$f \subset$ epi\,$g$.
	\item The \emph{effective domain} of $f$ is given by
		\begin{align*}
		\textnormal{dom$\,f$} & :=\{x \in X : f(x) < \infty\}.
	\end{align*}
We say $f$ is \emph{proper} if dom$\,f \neq \emptyset$ and $f(x) > -\infty$ for all $x \in X$. 
\item  We say $f$ is \emph{convex} if epi$\,f$ is a convex subset of $X \times \RR$. Note that if $f$ is convex, $\textnormal{dom$\,f$}$ is a convex subset of $X$. 

\item We say $f$ is \emph{quasi-convex} if $\{x \in X: f(x) \leq t\}$ is a convex subset of $X$ for all $t \in \RR$. Every convex function is quasi-convex, but the converse is not true. However, if $f$ is quasi-convex, $\textnormal{dom$\,f$}$ is a convex subset of  $X$. 

	\item  We say $f$ is \emph{lower semi-continuous} if epi$\,f$ is a closed subset of $X \times \RR$.
		\item  The \emph{convex hull} of $f$, co$\,f:X \to [-\infty,\infty]$, is the largest convex function majorised by $f$,  
	\begin{equation*}
		\textnormal{co$\,f$}(x) := \sup\{g(x) | \, g:X \to [-\infty,\infty] \textnormal{ is convex and } g \leq f\}.
	\end{equation*}
	By \cite[Equation (3.5)]{rockafellar1974conjugate}, epi\,co\,$f = \{(x,t) \in X \times \RR : (x,s) \in \textnormal{co\,epi\,$f$ for all } s > t \}$, where co\,epi\,$f:=\bigcap\{C \subset X \times \RR : \textnormal{epi\,$f$} \subset C \textnormal{ and $C$ is convex}  \}$. Moreover, it is not difficult to check that $\mathrm{dom}\,\textnormal{co}\,f= \textnormal{co}\,\mathrm{dom}\,f$, where $\textnormal{co}\,\mathrm{dom}\,f = \bigcap\{C \subset X : \textnormal{dom\,$f$} \subset C \textnormal{ and } C \textnormal{ is convex}  \}$.
	
	\item The \emph{quasi-convex hull} of $f$, qco$\,f:X \to [-\infty,\infty]$, is the largest quasi-convex function majorised by $f$,  
	\begin{equation*}
		\textnormal{qco$\,f$}(x) := \sup\{g(x) | \, g:X \to [-\infty,\infty] \textnormal{ is quasi-convex and } g \leq f\}.
	\end{equation*}
	Since every convex function is quasi-convex, it follows that $\textnormal{co$\,f$} \leq \textnormal{qco$\,f$} \leq f$. Moreover, it is not difficult to check that $\mathrm{dom}\,\textnormal{co}\,f=\mathrm{dom}\,\textnormal{qco}\,f = \textnormal{co}\,\mathrm{dom}\,f$.

	\item The \emph{lower semi-continuous hull} of $f$, lsc\,$f:X \to [-\infty,\infty]$ is the largest lower semi-continuous function majorised by $f$,
	\begin{equation*}
		\textnormal{lsc$\,f$}(x) := \sup\{h(x) | \, h:X \to [-\infty,\infty] \textnormal{ is lower semi-continuous and } h \leq f\}.
	\end{equation*}
	By \cite[Equation (3.6)]{rockafellar1974conjugate}, epi\,lsc\,$f = $ cl\,epi\,$f$, or equivalently we have \cite[Equation (3.7)]{rockafellar1974conjugate},
	\begin{equation}
		\label{eq:lsc hull}
		\textnormal{lsc\,}f(x) = \inf \{ \liminf_{i \in I} f(x_i) : \lim_{i \in I}x_i = x \}.
	\end{equation}
In particular, this implies that $\mathrm{dom}\,\textnormal{lsc}\,f \subset \mathrm{cl}\,\mathrm{dom}\, f$.
	\item The \emph{lower semi-continuous convex hull} of $f$, $\overline{\textnormal{co}}\,f:X \to [-\infty,\infty]$ is given by $\overline{\textnormal{co}}\,f := \textnormal{lsc\,co\,}f$ (which may not be the same as co\,lsc\,$f$). Since the closure of a convex set is again convex and epi\,$\overline{\textnormal{co}}\,f = \mathrm{cl\,co\,epi}\,f$, it follows that $\overline{\textnormal{co}}\,f$ is the largest lower semi-continuous convex function majorised by $f$. Moreover, 
		\begin{equation}
			\label{eq:lsc co hull}
		\mathrm{dom}\,\overline{\textnormal{co}}\,f\subset \mathrm{cl}\,\mathrm{co}\,\mathrm{dom}\,f.
		\end{equation}
	\item If $Y$ is a nonempty subset of $X$ and $f: Y \to [-\infty, \infty]$ a function, we can extend $f$ to $X$ by considering the function $\bar f: X \to [-\infty, \infty]$ defined by 
	\begin{equation*}
		\bar f(x) = 
		\begin{cases}
			f(x),&\text{if } x \in Y, \\
			\infty,&\text{if } x \in X \setminus Y.
		\end{cases}
	\end{equation*}
This extension is \emph{natural} in that $\textnormal{epi$\,\bar f$} \subset Y \times \RR$, $\mathrm{dom}\,\bar f \subset Y$, $\mathrm{dom}\,\textnormal{co}\,\bar f, \mathrm{dom}\,\textnormal{qco}\,\bar f \subset Y$ if $Y$ is convex, $\mathrm{dom}\,\textnormal{lsc}\,\bar f \subset Y$ if $Y$ is closed and $\mathrm{dom}\,\overline{\textnormal{co}}\,\bar f \subset Y$ if $Y$ is convex and closed. For this reason, if $Y$ is convex, we may define the functions $\textnormal{co}\,f, \textnormal{qco}\, f:Y \to [-\infty, \infty]$ by  $\textnormal{co}\,f(x) :=\textnormal{co}\,\bar f(x)$, $\textnormal{qco}\,f(x) :=\textnormal{qco}\,\bar f(x)$ and call this the convex hull and quasi-convex hull of $f$, respectively. Similarly, if $Y$ is closed (and convex), we may define the functions $\textnormal{lsc}\,f:Y \to [-\infty, \infty]$ (and $\overline{\textnormal{co}}\,f:Y \to [-\infty, \infty]$) by $\textnormal{lsc}\,f(x) :=\textnormal{lsc}\,\bar f(x)$ (and $\overline{\textnormal{co}}\,f(x) :=\overline{\textnormal{co}}\,\bar f(x)$) and call this the lower-semi-continuous (convex) hull of $f$.
\end{itemize}
In order to discuss convex conjugates, we assume that  $\langle X,X' \rangle$ is a dual pair under the duality $\langle \cdot , \cdot \rangle: X \times X' \to \RR$, i.e., $X$ and $X'$ are vector spaces together with a bilinear functional $(x,x') \mapsto \langle x,x' \rangle$ such that
\begin{itemize}
    \item If $\langle x,x' \rangle = 0$ for each $x' \in X'$, then $x = 0$;
    \item If $\langle x,x' \rangle = 0$ for each $x \in X$, then $x' = 0$.
\end{itemize}
We endow $X$ with the weak topology, $\sigma(X,X')$, 
\begin{equation*}
    x_\alpha \xrightarrow{w} x \textnormal{ in $X$ if and only if } \langle x_\alpha,x' \rangle \xrightarrow{} \langle x,x' \rangle \textnormal{ in } \RR \textnormal{ for each } x' \in X',
\end{equation*}
and $X'$ with the weak* topology, $\sigma(X',X)$,
\begin{equation*}
    x'_\alpha \xrightarrow{w^*} x' \textnormal{ in $X'$ if and only if } \langle x,x'_\alpha \rangle \xrightarrow{} \langle x,x' \rangle \textnormal{ in } \RR \textnormal{ for each } x \in X.
\end{equation*}
These topologies are locally convex and Hausdorff; the topological dual of $(X,\sigma(X,X'))$ is $X'$; and the topological dual of $(X',\sigma(X',X))$ is $X$; see \cite[Section 5.14]{guide2006infinite} for details.

\begin{itemize}
\item  The \emph{convex conjugate} of $f$, $f^*:X' \to [-\infty,\infty]$, and the \emph{biconjugate} of $f$, $f^{**}:X \to [-\infty,\infty]$, are defined as
\begin{equation*}
    f^*(x') := \sup \{\langle x,x' \rangle - f(x):x \in X\} \quad \textnormal{and} \quad f^{**}(x) := \sup \{\langle x,x' \rangle - f^*(x'):x' \in X'\}.
\end{equation*}
\item It follows from \cite[Theorem 5]{rockafellar1974conjugate} that epi\,$f^{**}$ is the intersection of all the ``non-vertical'' closed half spaces in $X \times \RR$ that contain epi\,$f$, i.e.,
\begin{equation}
	\label{eq:biconjugate}
    f^{**}(x) = \sup\{a(x) \,| \, a:X \to \RR \textnormal{ is affine and continuous and } a \leq f \},
\end{equation}
where a function $a:X \to \RR$ is \emph{affine and continuous} if it is of the form $a(x) = \langle x,x' \rangle + c$ for some $x' \in X'$ and $c \in \RR$.
\item If $\overline{\textnormal{co}}\,f(x) > -\infty$ for all $x \in X$, then $f^{**} = \overline{\textnormal{co}}\,f$ by \cite[Theorems 4 and 5]{rockafellar1974conjugate}. In particular if $f$ is convex, lower semi-continuous and proper, then $f = f^{**}$, which is the famous Fenchel-Moreau theorem.
\end{itemize}

\section{Additional results and proofs}
\label{app:additional results}

\begin{proposition}
    \label{prop:general result relating WSTD SSTD}
Let $\rho$ be a risk functional that satisfies the Fatou property on $\mathcal{X}$ and let $c \geq 0$.  Assume there exists an unbounded sequence of portfolios $(\pi_n)_{n \geq 1} \subset \RR^d$ with $\rho(X_{\pi_{n}}) \leq c$ for all $n \in \NN$.  Then there exists a portfolio $\pi \in \RR^d \setminus \{\mathbf{0}\}$ with $\rho(\lambda X_\pi) \leq c$ for all $\lambda > 0$.  Moreover, if $\mathbb{E}[X_{\pi_n}] = 0$ for all $n$, we may further assume $\mathbb{E}[X_\pi] = 0$.
\end{proposition}

\begin{proof}
By passing to a subsequence and relabelling the assets, we may assume without loss of generality that $|\pi^1_n| \geq |\pi^i_n|$ for all $n \in \NN$ and $i \in \{1,\dots,d\}$.  As $\lVert \pi_n \rVert \to \infty$ we must have that $|\pi^1_n| \to \infty$, and by shifting the sequence we may assume $|\pi^1_n| > 0$ for all $n \in \NN$.  Then for all $i \in \{1,\dots,d\}$ we have $\pi^i_n/\pi^1_n \in [-1,1]$ and by compactness we can pass to a further subsequence and assume that $\pi^i_n/|\pi^1_n| \to \pi^i \in [-1,1]$, where $\pi^1 \in \{-1, 1\}$.  It follows that
\begin{equation}
\label{eq:portfolio X c}
    X_{\pi_n}/|\pi^1_n| \to X_\pi \ \as{\mathbb{P}},
\end{equation}
where $\pi \neq \mathbf{0}$ since $\pi^1 \in \{-1, 1\}$.  Since $|\pi^1_n| \to \infty$, for any $\lambda > 0$, there exists $N$ such that $\lambda/|\pi^1_n| \in (0,1)$ for all $n \geq N$. Now star-shapedness of $\rho$ gives
\begin{equation*}
    \rho(\lambda  X_{\pi_n}/|\pi^1_n|) \leq \lambda \rho(X_{\pi_n})/|\pi^1_n| \leq c, \quad n \geq N.
\end{equation*}
By the Fatou property ($L \supset \cX$ being a Riesz space), $\rho(\lambda X_\pi) \leq \liminf_{n \to \infty}\rho(\lambda X_{\pi_n}/|\pi^1_n|) \leq c$.  Hence $\rho(\lambda X_\pi) \leq c$ for all $\lambda > 0$.

If in addition $\mathbb{E}[X_{\pi_{n}}] = 0$ for all $n$, then linearity of the expectation and the dominated convergence theorem gives $\mathbb{E}[X_\pi] = 0$.  Indeed, since $\pi^i_n/\pi^1_n \in [-1,1]$ we have 
\begin{equation*}
    |X_{\pi_n}/|\pi^1_n|| = |X^1+\tfrac{\pi^2_n}{|\pi^1_n|}X^2 + \dots \tfrac{\pi^d_n}{|\pi^1_n|}X^d| \leq |X^1| + |X^2| + \dots |X^d|
\end{equation*}
where $X^i:=R^i-r \in L^1$ for $i \in \{1,\dots,d\}$.  This, together with \eqref{eq:portfolio X c} and the dominated convergence theorem gives $\mathbb{E}[X_\pi] = \lim_{n \to \infty} \mathbb{E}[X_{\pi_n}/|\pi^1_n|] = 0$. 
\end{proof}

 \begin{proof}[Proof of Theorem \ref{thm:WSTD equiv to boundedness and existence of rho optimal sets}]
Define the function $f_\rho:\RR^d \to [0,\infty]$ by $f_\rho(\pi) = \max\{\rho(X_\pi),0\}+|\mathbb{E}[X_\pi]|$.  Then $f_\rho$ is lower semi-continuous by the Fatou property of $\rho$ on $\cX$ (and the fact that $L \supset \cX$ is a Riesz space) and linearity of the expectation. Moreover, it is star-shaped, i.e., $f_\rho(\lambda \pi) \geq \lambda f_\rho(\pi)$ for all $\lambda \geq 1$ and $\pi \in \RR^d$,  by the star-shapedness of $\rho$ and linearity of  the expectation. 

For $\delta \geq 0$, set $A_\delta :=\{\pi \in \RR^d : f_\rho(\pi) \leq \delta \}$. Then each $A_\delta$ is closed by lower semi-continuity of $f_\rho$. We proceed to show that each $A_\delta$ is also bounded and hence compact. 

For $\delta = 0$, using $f_\rho(\pi) \geq |\mathbb{E}[X_\pi]| > 0$ for any $\pi \in \RR^d \setminus \Pi_0$, it follows that $A_0 \subset \Pi_0$. Also note that for each $\pi \in A_0$, $X_\pi \in \cA_\rho$. If $A_0$ were unbounded, then Proposition \ref{prop:general result relating WSTD SSTD} would imply the existence of a portfolio $\pi \in \Pi_0 \setminus \{\mathbf{0}\}$ with $\rho(\lambda X_\pi) \leq 0$ for all $\lambda >0$.  But this would contradict $\rho$ being sensitive to large expected losses on $\mathcal{X}$.  Therefore, $A_0$ must be bounded.

For $\delta > 0$, we argue as follows: Since $A_0$ is bounded, there exists $d > 0$ such that $f_\rho(\pi) > 0$ for any portfolio $\pi$ belonging to the set $D:=\{ x \in \RR^d : \lVert x \rVert_2 = d \}$. Compactness of $D$ and lower semi-continuity of $f_\rho$ give $m:=\min\{f_\rho(x) : x \in D\} \in (0,\infty]$.  Star-shapedness of $f_\rho$ in turn implies that $	f_\rho(\pi) \geq m\lVert \pi \rVert_{2}/d$ for all  $\pi \in \RR^d$ with $\lVert \pi \rVert_{2} \geq d$, which in turn implies that each $A_\delta$ is bounded.

We finish by a standard argument.  Fix $\nu \geq 0$ and assume $\rho_\nu < \infty$.  By definition, there exists a sequence of portfolios $(\pi_n)_{n \geq 1} \subset \Pi_\nu$ such that $\rho(X_{\pi_n}) \searrow \rho_\nu$ and $\rho_\nu + 1 \geq \rho(X_{\pi_n})$ for all $n$.  Setting $\delta^*:=\max\{\rho_\nu + 1, 0\} + \nu$, it follows that $(\pi_n)_{n \geq 1} \subset A_{\delta^*}$.  Compactness of $A_{\delta^*}$, closedness of $\Pi_\nu$ and the Fatou property of $\rho$ imply the existence of a portfolio $\pi \in \Pi_\nu$ with $\rho(X_\pi) \leq \rho_\nu$, i.e., $\Pi^\rho_\nu$ is nonempty.  Furthermore, $\Pi^\rho_\nu$ is bounded since it is a subset of $A_{\delta^*}$, and closed since $\rho$ satisfies the Fatou property. 
 \end{proof}

\begin{proposition}
\label{prop:interchange min and lim}
Suppose $X$ is a topological space and $K \subset X$ is compact.  Then for any nondecreasing sequence of lower semi-continuous functions $f_t : K \to [-\infty,\infty]$ with $f(x):=\lim_{t \to \infty} f_t(x)$ for all $x \in K$, we have
\begin{equation*}
    \min_{x \in K} f(x) = \lim_{t \to \infty} \min_{x \in K} f_t(x).
\end{equation*}
Furthermore, if $(x_t)_{t \geq 1}$ is a sequence where $\min_{x \in K} f_t(x) = f_t(x_t)$, then any limit point is a minimiser for $f$.\footnote{Convergence of minima and convergence of minimisers are often delicate, but important notions in optimisation problems.  A similar result to Proposition \ref{prop:interchange min and lim} is \cite[Lemma 2.7(c)]{hernandez1992discrete}.  The application there was to relate \emph{finite} horizon discrete time Markov decision processes with \emph{infinite} horizon ones.}
\end{proposition}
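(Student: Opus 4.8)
The plan is to exploit two standard facts. First, a lower semi-continuous function on a nonempty compact set attains its minimum: its sublevel sets are closed, and a nested intersection of nonempty closed sets over a compact space is nonempty. Second, the pointwise limit $f = \sup_t f_t$ of a nondecreasing sequence of lower semi-continuous functions is again lower semi-continuous (its epigraph is the intersection of the epigraphs $\mathrm{epi}\,f_t$), hence $f$ too attains its minimum on $K$. Write $m_t := \min_{x \in K} f_t(x)$ and $m := \min_{x \in K} f(x)$; since $(f_t)$ is nondecreasing, $(m_t)$ is nondecreasing and the limit $L := \lim_{t\to\infty} m_t = \sup_t m_t$ exists in $[-\infty,\infty]$.

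I would first establish the easy inequality $L \le m$. Let $x^\ast \in K$ minimise $f$. Since $f_t \le f$ for every $t$, we have $m_t \le f_t(x^\ast) \le f(x^\ast) = m$, and letting $t \to \infty$ gives $L \le m$.

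The main work is the reverse inequality $m \le L$, and here lies the principal obstacle: because $X$ is merely a topological space and not metric, I cannot pass to a convergent subsequence and must instead argue via cluster points (equivalently, convergent subnets), which exist by compactness. For each $t$ choose a minimiser $x_t$ with $f_t(x_t) = m_t$, and let $\bar x \in K$ be a cluster point of $(x_t)_{t \ge 1}$. The key claim is that $f_s(\bar x) \le L$ for every fixed $s$. Indeed, for all $t \ge s$ monotonicity gives $f_s(x_t) \le f_t(x_t) = m_t \le L$. If one had $f_s(\bar x) > L$, pick $a$ with $f_s(\bar x) > a > L$; then $\{x : f_s(x) > a\}$ is an open neighbourhood of $\bar x$ by lower semi-continuity of $f_s$, so by the cluster-point property it contains $x_t$ for some index $t \ge s$, contradicting $f_s(x_t) \le L < a$. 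Taking the supremum over $s$ yields $f(\bar x) = \sup_s f_s(\bar x) \le L$, whence $m \le f(\bar x) \le L$. Combined with the easy inequality this gives $m = L$.

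Finally, for the statement about minimisers, the same computation applies to an arbitrary sequence $(x_t)$ with $f_t(x_t) = m_t$: if $\bar x$ is a cluster point, the argument above shows $f(\bar x) \le L = m$, while trivially $f(\bar x) \ge m$; hence $f(\bar x) = m$ and $\bar x$ minimises $f$. I expect the delicate point throughout to be the careful handling of the cluster-point (subnet) argument in the non-metric setting, together with the interplay between monotonicity in $t$ (used to bound $f_s(x_t)$ for $t \ge s$) and lower semi-continuity in $x$ (used to pass to the limiting point), rather than any hard estimate.
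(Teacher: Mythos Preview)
Your proof is correct, but it takes a different route from the paper's. The paper argues the reverse inequality via the finite intersection property: setting $A_t := \{x \in K : f_t(x) \le m\}$ gives a nested family of nonempty closed (hence compact) subsets of $K$, so Cantor's intersection theorem yields a point $x^\ast \in \bigcap_t A_t$, which satisfies $f(x^\ast) \le m$. The paper then observes separately that $\argmin f = \bigcap_t A_t$ and that any limit point of the minimiser sequence $(x_t)$ lies in each $A_s$ (since $x_t \in A_s$ for $t \ge s$ and $A_s$ is closed), hence in the intersection.

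Your approach instead works directly with a cluster point of the minimiser sequence and uses the interplay between monotonicity in $t$ and lower semi-continuity in $x$ to bound $f_s(\bar x)$ level by level. This has the advantage of proving both conclusions at once: the cluster point you produce for the main equality is already shown to be a minimiser of $f$, so the final statement requires no extra argument. The paper's nested-sets argument is perhaps slightly cleaner for the first part on its own (no need to unpack the cluster-point definition), but it then has to revisit the construction to handle the second claim. Both arguments are standard and of comparable length; your handling of the non-metric setting via cluster points rather than subsequences is exactly right.
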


\begin{proof}
First note that $f$ is lower semi-continuous because it is the supremum of lower semi-continuous functions.  By the compactness of $K$ and lower semi-continuity, $f$ and $f_t$ attain their minimum values.  Now since $f_t$ is a nondecreasing sequence, it is easy to see that
\begin{equation*}
    \min_{x \in K} f(x) \geq \lim_{t \to \infty} \min_{x \in K} f_t(x) =: m.
\end{equation*}
For the reverse inequality, consider the sets $A_t:=\{x \in K : f_t(x) \leq m \}$.  These are nonempty (because $\emptyset \neq \argmin f_t \subset A_t$), closed (by the lower semi-continuity of $f_t$) and compact (since $K$ is compact and $A_t$ is closed).  Moreover, they are nested in the sense that $A_t \supset A_{t+1}$.  It follows by Cantor's intersection theorem that 
\begin{equation*}
    A := \bigcap_{t=1}^{\infty} A_t \neq \emptyset,
\end{equation*}
i.e., there exists $x^* \in K$ such that $f_t(x^*) \leq m$ for all $t$.  Taking the limit as $t \to \infty$ yields
\begin{equation*}
    \min_{x \in K} f(x) \leq f(x^*) \leq m = \lim_{t \to \infty} \min_{x \in K} f_t(x).
\end{equation*}

To prove the final claim, note that $\argmin f = A$ because $f(x) \leq m$ if and only if $f_t(x) \leq m$ for all $t$.  Whence, any limit point of a sequence of minimisers $(x_t)_{t \in \NN}$ -- that is where $x_t \in \argmin f_t$ for all $t \geq 1$ -- is contained in $A$, and hence, is a minimiser for $f$.
\end{proof}

\begin{proof}[Proof of Proposition \ref{prop:properties of rho optimal boundary}]
     First note that by Theorem \ref{thm:WSTD equiv to boundedness and existence of rho optimal sets}, the map $\nu \mapsto \rho_\nu$ is $(-\infty,\infty]$-valued.  
    
    Next we establish lower semi-continuity. Fix $y \in \RR$ and let $B_y:=\{\nu \in \RR_+ : \rho_\nu \leq y \}$.  We must show that this set is closed.  So let $(\nu_n)_{n \geq 1} \subset B_y$ and assume $\nu_n \to \nu$.  By Theorem \ref{thm:WSTD equiv to boundedness and existence of rho optimal sets}, for each $n$ there exists a portfolio $\pi_n$ such that $\rho(X_{\pi_n}) = \rho_{\nu_n} \leq y$ and $\mathbb{E}[X_{\pi_n}] = \nu_n$. We proceed to show that the sequence $(\pi_n)_{n \geq 1}$ belongs to a compact set. To this end, let $c \in \RR$ be such that $|\nu_n| \leq c$ for all $n$.  Setting $\delta := \max\{y,0\}+c$ it follows that each $\pi_n$ lies in  $A_\delta := \{ \pi \in \RR^d : \max\{\rho(X_\pi),0\} + |\mathbb{E}[X_\pi]| \leq \delta \}$, which is compact by the proof of Theorem \ref{thm:WSTD equiv to boundedness and existence of rho optimal sets}. Passing to a subsequence, we may assume that $(\pi_n)_{n \geq 1}$ converges to some $\pi \in \RR^d$, and by dominated convergence and the Fatou property, it follows that $\mathbb{E}[X_{\pi}] = \nu$ and $\rho(X_{\pi}) \leq y$. Whence, $\rho_\nu \leq y$ and so $\nu \in B_y$.
    
    We now show that $\nu \mapsto \rho_\nu^\infty$ is the smallest positively homoegeneous majorant of $\nu \mapsto \rho_\nu$.  Since $\rho^\infty$ is sensitive to large expected losses, $\rho^\infty_0 = 0 \geq \rho_0$.  Thus, it suffices to show that 
    \begin{equation}
 		\label{eq:bar rho 1}
 		\rho^\infty_1 = \lim_{t \to \infty} \rho_{t}/t.
 	\end{equation}
The key idea is to consider the risk functionals $\rho^{t}:L \to (-\infty,\infty]$ defined by $\rho^{t}(X)=\rho(tX)/t$ for $t \geq 1$.  They satisfy the Fatou property on $\mathcal{X}$, sensitivity to large expected losses on $\cX$ and $\rho_t/t = \rho^t_1$.  By star-shapedness of $
 	\rho$ and definition of $\rho^{\infty}$ in \eqref{eq:rho bar}, we have $\rho^{t+1}(X) \geq \rho^{t}(X)$  and $\lim_{t \to \infty}\rho^{t}(X) = \rho^\infty(X)$ for all $X \in L$. This implies $(\rho_1^{t})_{t \geq 1}$ is a nondecreasing sequence and
 	\begin{equation*}
 		\rho^\infty_1 \geq m := \lim_{t \to \infty} \rho^{t}_1.
 	\end{equation*}
 	If $m = \infty$, the reverse inequality is clear, so assume $m < \infty$. Then as  $\rho^{t} \geq \rho$ and $m \geq \rho_1^{t}$ for each $t \geq 1$, it follows that
 	\begin{equation*}
 		\Pi^{\rho^{t}}_1 \subset \{\pi \in \Pi_1 : \rho(X_\pi) \leq m \} \subset \{ \pi \in \Pi_1 : \max\{\rho(X_\pi),0\} + |\mathbb{E}[X_\pi]| \leq \max\{m,0\} + 1 \}  := K.
 	\end{equation*}
 	Since $K$ is compact by the proof of Theorem \ref{thm:WSTD equiv to boundedness and existence of rho optimal sets}, \eqref{eq:bar rho 1} follows by applying Proposition \ref{prop:interchange min and lim} to the sequence of functions $f_t:K \to (-\infty,\infty]$ given by $f_t(\pi) := \rho^{t}(X_\pi)$. 
 	
The statements in (b) and (c) as well as the equivalence between $\rho^\infty_1 > 0$ and $\nu^+ < \infty$ follow directly from the fact $\nu \mapsto \rho^\infty_\nu$ is the smallest positively homogeneous majorant of $\nu \mapsto \rho_\nu$.
 
Finally, we establish (a).  If $\rho^\infty_1 > 0$, then $\nu^{+} < \infty$ by the above and hence $\rho_\nu > 0$ for all $\nu > \nu^+$.  By lower semi-continuity of $\nu  \mapsto \rho_\nu$ and compactness of $[0,\nu^{+}]$, there exists a global minimum $m \leq \rho_0 \leq 0$ that is attained at $\nu^*:=\sup\{\nu \in [0,\nu^{+}]: \rho_\nu = m\}$. By construction, $\rho_\nu > m$ for all $\nu > \nu^*$.  Whence, by definition $\nu_{\min} = \nu^* < \infty$ and $\rho_{\min} = \rho_{\nu_{\min}} \in (-\infty,0]$.
\end{proof}

\begin{proof}[Proof of Proposition \ref{prop:convex and WSTD optimal boundary properties}]
First, we establish convexity of $\nu \mapsto \rho_\nu$.
Let $\nu, \nu' \in \RR_+$, $\lambda \in [0,1]$ and $A:=\Pi_\nu \times \Pi_{\nu'}$.  Using convexity of $\rho$ and the fact $(\pi,\pi') \in A$ implies $\lambda \pi + (1-\lambda) \pi' \in  \Pi_{\lambda\nu + (1- \lambda)\nu'}$, we obtain
\begin{align*}
    \rho_{\lambda \nu + (1-\lambda)\nu'} \leq  \inf_{(\pi,\pi') \in A } \{ \rho(X_{\lambda \pi + (1-\lambda)\pi'}) \} \leq \inf_{(\pi,\pi') \in A } \{ \lambda \rho(X_\pi) + (1-\lambda)\rho(X_\pi') \} \leq \lambda \rho_\nu + (1-\lambda) \rho_{\nu'}.
\end{align*}
Thus, $\nu \mapsto \rho_\nu$ is convex on $\RR_+$. 

Next, since $\nu \mapsto \rho_\nu$ is convex, it is continuous in the interior of its effective domain $\{\nu \in \RR_+ : \rho_{\nu} < \infty\}$, which is an interval. This together with lower semi-continuity shown in Proposition \ref{prop:properties of rho optimal boundary}, implies that $\nu \mapsto \rho_\nu$ is finite and continuous on the closure of $\{\nu \in \RR_+ : \rho_{\nu} < \infty\}$, which a fortiori implies that $\{\nu \in \RR_+ : \rho_{\nu} < \infty\}$ is closed. The other claims follow directly from Propositions \ref{prop:basic properties of rho optimal boundary} and  \ref{prop:properties of rho optimal boundary} together with standard properties of convex functions.
\end{proof}

\begin{proof}[Proof of Theorem \ref{prop:strong reg arb:first characterisation}]
	``(a) $\iff$ (b)''. This is \cite[Theorem 3.18]{herdegen2020dual}. 
``(b) $\implies$ (c)''.  This follows from the definition of strong $\rho$-arbitrage and the fact $\rho^\infty$ dominates $\rho$.  
\end{proof}

\begin{proof}[Proof of Theorem \ref{thm:reg arb:first characterisation}]
``(c) $\implies$ (b).''  This follows from the definition of $\rho$-arbitrage and the fact $\rho^\infty$ dominates $\rho$.

``(b) $\implies$ (a).'' We prove by contraposition, so assume $\rho^\infty_1 \leq 0$.  Since $\rho$ satisfies the Fatou property on $\cX$ and sensitivity to large expected losses on $\cX$, so too does $\rho^\infty$. Whence, by Theorem \ref{thm:WSTD equiv to boundedness and existence of rho optimal sets}, $\rho^\infty_1 = \rho^\infty(X_\pi)$ for some $\pi \in \Pi_1$.  Letting $\pi_n = n\pi$ yields a sequence of portfolios for which $\mathbb{E}[X_{\pi_n}] \uparrow \infty$ and $\rho^\infty(X_{\pi_n}) = n\rho^\infty_1 \leq 0$ for all $n$.  Thus, the market admits $\rho^\infty$-arbitrage.

``(a) $\implies$ (c).'' Assume $\rho^\infty_1 > 0$.  This implies $\nu^+ < \infty$ by Proposition \ref{prop:properties of rho optimal boundary}. By definition of $\nu^+$, it follows that for any sequence of portfolios with $\mathbb{E}[X_{\pi_n}] \uparrow \infty$, $\rho(X_{\pi_n}) > 0$ eventually.  Therefore, the market does not admit $\rho$-arbitrage.
\end{proof}

\begin{proposition}
    \label{prop:extending risk measure}
Let $\cX$ be a subspace of $L^1$ such that $\{(X,Y) \in \cX^2 : Y \geq X \ \mathbb{P}\textnormal{-a.s.~and } \mathbb{P}[Y > X] > 0 \} = \emptyset$.  Suppose $\eta:\cX \to (-\infty,\infty]$ is normalised, star-shaped and $\eta(X) \leq \WC(X)$ for all $X \in \cX$.  Define $\mathcal{Y} := \{ Y \in L^1 \setminus \cX : Y \geq X \ \mathbb{P}\textnormal{-a.s.~for some } X \in \cX \}$ and  $\rho:L^1 \to (-\infty,\infty]$ by
\begin{equation*}
    \rho(X) = \begin{cases}
			\eta(X),&\text{if $X \in \cX$}, \\ 
			\mathbb{E}[-X],&\text{if $X \in \mathcal{Y}$}, \\ 
			\WC(X),&\text{otherwise}.
		\end{cases}
\end{equation*}
Then $\rho$ is a risk functional such that $\rho|_{\cX} \equiv \eta$.
\end{proposition}

\begin{proof}
Normalisation of $\rho$ is clear, as is the fact $\rho|_{\cX} \equiv \eta$.  

To show monotonicity, let $X,Y \in L^1$ and assume $Y \geq X \ \mathbb{P}$-a.s.~and $\mathbb{P}[Y > X] > 0$.  If $X \in \cX$ or $X \in \mathcal{Y}$, then $Y \in \mathcal{Y}$ and $\rho(X) \geq \mathbb{E}[-X] \geq \mathbb{E}[-Y] = \rho(Y)$.  If $X \in L^1 \setminus (\cX \cup \mathcal{Y})$, then $\rho(X) = \WC(X) \geq \WC(Y) \geq \rho(Y)$.  Whence, $\rho$ is monotone.

Finally, to show $\rho$ is star-shaped, fix $\lambda \geq 1$ and $X \in L^1$.  If $X \in \cS$ where $\cS \in \{\cX, \mathcal{Y}, L^1 \setminus (\cX \cup \mathcal{Y})\}$, then also $\lambda X \in \cS$ and so it follows that $\rho(\lambda X ) \geq \lambda \rho(X)$.
\end{proof}

\begin{proof}[Proof of Theorem \ref{prop:A rho infinity larger than L+}]
Assume $\cA_{\rho^\infty} \supsetneq L_+$ and let $X \in \cA_{\rho^\infty} \setminus L_+$.  Since $\rho$ (and hence $\rho^\infty$) are monotone, we may assume without loss of generality that $\mathbb{P}[X > 0] > 0$.  Fix $n \in \NN$ so that $\mathbb{E}[X^+] - \tfrac{1}{n} \mathbb{E}[X^-] > 0$ and let $R:= X^+ - \frac{1}{n} X^-$.  Consider the market $(S^0,S)$ defined by $S^0 \equiv 1$ and $S := S^1$ where $S^1_0 = 1$ and $S^1_1 = 1+ R$.  This market does not admit arbitrage of the first kind since $\mathbb{P}[ R < 0] >0$ and $\mathbb{P}[ R > 0] > 0$.  Moreover, $\mathbb{E}[R] > 0$ and $R \in \cA_{\rho^\infty}$ since $R \geq X$.  Thus, $\rho^\infty_1 \leq 0$ and the market admits $\rho^\infty$-arbitrage. Whence, the market admits $\rho$-arbitrage by Remark \ref{rmk:rho arb scalable acceptable deal relation}(a).

When $\rho$ is cash-invariant,  $\rho^\infty$ is also cash-invariant.  By adding $\epsilon > 0$ sufficiently small to $R$ above, we can find $\tilde{R}$ such that $\mathbb{P}[ \tilde{R} < 0] > 0$, $\mathbb{E}[\tilde{R}] > 0$ and $\rho^\infty(\tilde{R}) < 0$.  Replacing $R$ with $\tilde{R}$ in the market above produces a market where $\rho^\infty_1 < 0$.  This market does not admit arbitrage of the first kind but admits strong $\rho$-arbitrage by Theorem \ref{prop:strong reg arb:first characterisation}.
\end{proof}

\begin{proof}[Proof of Theorem \ref{thm:well posedness}]
Assume the market admits $\rho$-arbitrage.  Then by definition, it follows that the mean-$\rho$ problem (2) does not have any solutions for any $\rho^* \geq 0$.  

\medskip
Assume the market does not admit $\rho$-arbitrage.  Then by Theorem \ref{thm:reg arb:first characterisation}, $\rho^\infty_1 > 0$.  We first show that the set 
\begin{equation*}
    K_c:=\{ \pi \in \RR^d : \rho(X_\pi) \leq c \}
\end{equation*}
is compact for all $c < \infty$.  Closedness follows from the Fatou property.  We show boundedness via contradiction.  If $(\pi_n)_{n \geq 1} \subset K_c$ is unbounded, then by Proposition \ref{prop:general result relating WSTD SSTD}, there exists a nonzero portfolio $\pi \in \RR^d$ with $\rho(\lambda X_\pi) \leq c$ for all $\lambda > 0$.  Thus $\rho^\infty(X_\pi) \leq 0$, and by sensitivity to large expected losses, $\mathbb{E}[X_\pi] > 0$.  But this would contradict the fact $\rho^\infty_1 > 0$. 
 
To show that the mean-$\rho$ problem (1) admits a solution, let $\nu^* \geq 0$.  If $\rho_{\nu^*} = \infty$, then $\rho_\nu = \infty$ for all $\nu \geq \nu^*$ and so any portfolio in $\Pi_\nu$ for $\nu \geq \nu^*$ is a solution to (1).  If $\rho_{\nu^*} < \infty$, then as $K_{\rho_{\nu^*}}$ is compact, and $\rho_{\nu^*}$ is attained by Theorem \ref{thm:WSTD equiv to boundedness and existence of rho optimal sets}, this means that the set
 \begin{equation*}
     K_{\rho_{\nu^*}}^{\nu^*}:=\{ \pi \in \RR^d : \rho(X_\pi) \leq \rho_{\nu^*} \textnormal{ and } \mathbb{E}[X_\pi] \geq \nu^* \}
 \end{equation*}
 is nonempty and compact.  It follows that there exists a convergent sequence $(\pi_n)_{n \geq 1} \subset K_{\rho_{\nu^*}}^{\nu^*}$ with limit $\pi^* \in K_{\rho_{\nu^*}}^{\nu^*}$ such that
 \begin{equation*}
     \lim_{n \to \infty} \rho(X_{\pi_n}) = \inf \{\rho(X_\pi) : \pi \in  K_{\rho_{\nu^*}}^{\nu^*}\}.
 \end{equation*}
 By the Fatou property, $\pi^*$ must be a solution to (1).

 Finally, to show the mean-$\rho$ problem (2) admits a solution, let $\rho^* \in [0,\infty)$.  Then as $K_{\rho^*}$ is nonempty (since $\mathbf{0} \in K_{\rho^*}$) and compact, there exists a convergent sequence $(\pi_n)_{n \geq 1} \subset K_{\rho^*}$ with limit $\pi^* \in K_{\rho^*}$ such that 
 \begin{equation*}
     \lim_{n \to \infty} \mathbb{E}(X_{\pi_n}) = \sup \{\mathbb{E}(X_\pi) : \pi \in  K_{\rho^*}\}.
 \end{equation*}
 As $\mathbb{E}[X_{\pi_n}] \to \mathbb{E}[X_{\pi^*}]$, it follows that $\pi^*$ is a solution to (2).
\end{proof}

\begin{proof}[Proof of Proposition \ref{prop:basic pricing properties}]
The first part of the statement in (a) is because strong $\rho^\infty$-arbitrage implies strong $\rho$-arbitrage, which in turn implies $\rho$-arbitrage.  The second part of (a) is a consequence of Theorem \ref{thm:reg arb:first characterisation} and Corollary \ref{cor:rho infinity WC}. Parts (b) and (c) are clear.
\end{proof}

\begin{proof}[Proof of Theorem \ref{cor: elliptical market}]
Consider the market $(S^0,S,S^{d+1})$ with $S^{d+1}_0=x$ and $S^{d+1}_1 = X$.  We will work with the economically equivalent market $(S^0,\tilde{S},\tilde{S}^{d+1})$ where $\tilde S^i := S^{i} + (1- S^i_0) S^0$.  Then the mean vector of the excess returns is given by $\tilde{\mu}(x)$ and its covariance matrix is given by $\Sigma$.  The maximal Sharpe ratio in the augmented market is then given by 
    \begin{equation*}
        \SR_{\max}(x) := \max_{\pi \in \RR^d \setminus \{\0\}} \frac{\mathbb{E}[X_\pi]}{\sqrt{\Var(X_{\pi})}} = \sqrt{\tilde{\mu}(x)^\top \Sigma^{-1}\tilde{\mu}(x)}.
    \end{equation*}
If $\rho$ is a law-invariant risk measure for which (strong) $\rho$-arbitrage is equivalent to (strong) $\rho^\infty$-arbitrage, then by \cite[Corollary 3.28]{herdegen2020dual}, the market $(S^0,\tilde{S},\tilde{S}^{d+1})$ admits (strong) $\rho$-arbitrage if and only if $\SR_{\max}(x) < (\leq) \ \rho^\infty(Z)$.  Whence,
\begin{align*}
        I_\rho(X) & = I_{\rho^\infty}(X) = \{ x \in \RR : \SR_{\max}(x) < \rho^{\infty}(Z) \} \quad \textnormal{and} \\ 
        I^{s}_{\rho}(X) & = I^{s}_{\rho^\infty}(X) = \{ x \in \RR : \SR_{\max}(x) \leq \rho^{\infty}(Z) \}. \qedhere
    \end{align*}
\end{proof}

\begin{proof}[Proof of Proposition \ref{prop:Cond UI preliminary result}]
Let $X \in \cX$, i.e., there is $\pi \in \RR^d$ such that $X = X_\pi = \pi \cdot (R- r\mathbf{1})$. By Condition UI, this implies that $\cQ^\alpha$ and $X  \cQ^\alpha$ are UI. 
	
	First, $\textnormal{co\,}\alpha$ (whose effective domain is $\cQ^\alpha$) represents $\rho$ by Remark \ref{rem:dual char}(d) since $\alpha^\rho \leq \textnormal{co\,}\alpha \leq \alpha$ by Remark \ref{rem:dual char}(c)
	and the definition of the convex hull. This together with $ \overline{\textnormal{co}}\, \alpha \leq \textnormal{co}\,\alpha$ and  Remark \ref{rmk:Q subset Q start start subset Q bar}(a)  implies that $\cQ^{\overline{\textnormal{co\,}}\alpha} \subset \ol{\cQ}^\alpha$ and 
	\begin{equation}
		\label{eq: sup over Q leq sup over Q bar}
		\rho(X) = \sup_{Z \in \cQ^\alpha} \{\mathbb{E}[-ZX] - \textnormal{co}\,\alpha(Z)\} \leq \sup_{Z \in \ol\cQ^\alpha} \{ \mathbb{E}[-ZX]-\overline{\textnormal{co}}\, \alpha(Z) \}.
	\end{equation}
	If we can show that the supremum on the right side of \eqref{eq: sup over Q leq sup over Q bar} is attained and the inequality is an equality, then \eqref{eq:preliminary representation} follows.
	
	To see that the supremum on the right side of \eqref{eq: sup over Q leq sup over Q bar} is attained, let $(Z_{n})_{n \in \NN}$ be a maximising sequence in $\ol \cQ^\alpha$.  As $\cQ^\alpha$ is uniformly integrable and convex, $\ol \cQ^\alpha$ is convex and $\sigma(L^1, L^\infty)$-sequentially compact by the Dunford-Pettis and the Eberlein-\v{S}mulian theorems. After passing to a subsequence, we may assume that $Z_n$ converges weakly to some $Z^* \in \ol\cQ^\alpha$. Then because the map $\tilde Z \mapsto \mathbb{E}[-\tilde{Z}X]$ is weakly continuous on $\ol \cQ^\alpha$ (by \cite[Proposition C.2]{herdegen2020dual}) and $\overline{\textnormal{co}}\, \alpha$ is also $\sigma(L^1,L^\infty)$-lower semi-continuous by \cite[Theorem 2.2.1]{zalinescu2002convex}, $Z^*$ is a maximiser.
	
	Finally, we show that the inequality in \eqref{eq: sup over Q leq sup over Q bar} is an equality. We may assume without loss of generality that the right hand side of  \eqref{eq: sup over Q leq sup over Q bar} is larger than $-\infty$. Hence, $\overline{\textnormal{co}}\, \alpha(Z^*)$ is finite.  Let $\epsilon > 0$.  Since $\overline{\textnormal{co}}\, \alpha$ is the $L^1$-lower semi-continuous hull of $\textnormal{co}\,\alpha$ and $\textnormal{co}\,\alpha(Z) = \infty$ for $Z \notin \cQ^\alpha$ and $\overline{\textnormal{co}}\, \alpha(Z^*) < \infty$,  by \eqref{eq:lsc hull}, there exists a sequence $(Z_n)_{n \in \NN} \subset \cQ^\alpha$ that converges in $L^1$ to the maximiser $Z^*$ and for which $\lim_{n \to \infty}\textnormal{co}\,\alpha(Z_n) \leq \overline{\textnormal{co}}\,\alpha(Z^*) + \epsilon$.  Using again that the map $\tilde Z \mapsto \mathbb{E}[-\tilde{Z}X]$ is weakly and hence strongly continuous
	yields
	\begin{equation*}
		\rho(X) \geq \lim_{n \to \infty} \{\mathbb{E}[-Z_n X] - \textnormal{co}\,\alpha(Z_n)\} \geq \mathbb{E}[-Z^*X] - \overline{\textnormal{co}}\,\alpha(Z^*) - \epsilon.
	\end{equation*}
	Now the claim follows by letting $\epsilon \to 0$.
\end{proof}

\begin{proof}[Proof of Proposition \ref{prop:cond UI}]
	It is clear by the definition of $C_{\cQ^{\overline{\textnormal{co}}\, \alpha}}$ that $\mathrm{dom}\, f_{\overline{\textnormal{co}}\, \alpha} = C_{\cQ^{\overline{\textnormal{co}}\, \alpha}}$. By Remark \ref{rmk:Q subset Q start start subset Q bar}(a), $\cQ^{\overline{\textnormal{co}}\, \alpha} \subset \ol{\cQ}^\alpha$ and as $\overline{\textnormal{co}}\, \alpha(Z) = \infty$ for $Z \in  \ol{\cQ}^\alpha \setminus \cQ^{\overline{\textnormal{co}}\, \alpha}$ it follows that 
	\begin{equation*}
		f_{\overline{\textnormal{co}}\, \alpha}(c) =
		\inf\{\overline{\textnormal{co}}\, \alpha(Z):Z \in \ol{\cQ}^\alpha \textnormal{ and } \mathbb{E}[-Z(R-r\mathbf{1})] = c\}.
	\end{equation*}
	Since $\ol{\cQ}^\alpha$ is $\sigma(L^1, L^\infty)$-sequentially compact by Dunford-Pettis and the Eberlein-\v{S}mulian theorems and $\overline{\textnormal{co}}\, \alpha$ is $\sigma(L^1, L^\infty)$-lower semi-continuous by \cite[Theorem 2.2.1]{zalinescu2002convex}, it follows that the infimum is attained and (finite) if  $c \in C_{\cQ^{\overline{\textnormal{co}}\, \alpha}}$. Moreover, since $\cQ^{\overline{\textnormal{co}}\, \alpha} \subset \ol{\cQ}^\alpha$ is convex, it follows that 
	$C_{\cQ^{\overline{\textnormal{co}}\, \alpha}} \subset C_{\ol{\cQ}^\alpha} = \{\mathbb{E}[-Z(R-r\mathbf{1})] : Z \in \ol{\mathcal{Q}}^\alpha \}$ is convex and bounded since  $C_{\ol{\mathcal{Q}}^\alpha} = \textnormal{cl}(C_{\cQ^\alpha})$ is a (convex) compact subset of $\RR^d$ by \cite[Proposition 4.5]{herdegen2020dual}.
	
	Next, we show that $f_{\overline{\textnormal{co}}\, \alpha}$ is convex and lower semi-continuous. Convexity follows easily from convexity of $\overline{\textnormal{co}}\, \alpha$. To argue lower semi-continuity let $(c_n)_{n \in \NN}$ be a sequence in $\RR^d$ that converges to $c \in \RR^d$. Without loss of generality, we may assume that each $c_n$ and $c$ lies in $C_{\cQ^{\overline{\textnormal{co}}\, \alpha}}$. Let $(Z_n)_ {n \in \NN}$ in $\cQ^{\overline{\textnormal{co}}\, \alpha}$ be a corresponding sequence of minimisers. Since $\ol{\cQ}^\alpha$ is $\sigma(L^1, L^\infty)$-sequentially compact by the Dunford--Pettis and the Eberlein-\v{S}mulian theorems, after passing to a subsequence, we may assume that $(Z_n)_ {n \in \NN}$ converges weakly to some $Z \in \cQ^{\overline{\textnormal{co}}\, \alpha}$.
	As the map $\tilde Z \mapsto \mathbb{E}[-\tilde{Z}X]$ is $\sigma(L^1, L^\infty)$-continuous on $\ol \cQ^\alpha$ by \cite[Proposition C.2]{herdegen2020dual}, it follows that $\E[-Z(R - r\1)] = c$. By $\sigma(L^1, L^\infty)$-lower semi-continuity of $\overline{\textnormal{co}}\, \alpha$ this implies that $	f_{\overline{\textnormal{co}}\, \alpha}(c) \leq \overline{\textnormal{co}}\, \alpha(Z) \leq \liminf_{n \to \infty} \overline{\textnormal{co}}\, \alpha(Z_n) = \liminf_{n \to \infty} f_{\overline{\textnormal{co}}\, \alpha}(c_n)$.
	
	We proceed to show that $f_{\overline{\textnormal{co}}\, \alpha}$ is the the lower semi-continuous convex hull of $f_\alpha$. To this end, for a function $g:\mathbb{R}^d \to [0,\infty]$, define the map 
	$\alpha^{g}:\cD \to [0,\infty]$ by 
	\begin{equation*}
		\alpha^{g}(Z) = \begin{cases}
			g(\mathbb{E}[-Z(R-r\mathbf{1})]),&\text{if $Z \in \cQ^{\overline{\textnormal{co}}\, \alpha}$} \\ \infty,&\text{otherwise},
		\end{cases}
	\end{equation*}
	If $g$ is convex and lower-semicontinuous, then $\alpha^{g}$ is convex and $\sigma(L^1,L^\infty)$-lower semi-continuous because the map $\tilde Z \mapsto \E[-\tilde Z(R - r\1)]$ is linear and $\sigma(L^1, L^\infty)$-continuous on $\ol \cQ^\alpha \supset \cQ^{\overline{\textnormal{co}}\, \alpha}$ by \cite[Proposition C.2]{herdegen2020dual}.
	
	Seeking a contradiction, suppose now that there exists a convex lower semi-continuous function $g:\mathbb{R}^d \to [0,\infty]$ such that $g \leq f_\alpha$ and  $f_{\overline{\textnormal{co}}\, \alpha}(c^*) < g(c^*)$ for some $c^* \in C_{\cQ^{\overline{\textnormal{co}}\, \alpha}}$. Then
	\begin{equation*}
		\alpha^{g}(Z) \leq 	\alpha^{f_\alpha}(Z) \leq \alpha(Z) ,  \quad Z \in \cQ^{\overline{\textnormal{co}}\, \alpha},
	\end{equation*}
	and hence $\alpha^{g} \leq \overline{\textnormal{co}}\, \alpha$. Let $Z^* \in \cQ^{\overline{\textnormal{co}}\, \alpha}$ be such that
	$\mathbb{E}[-Z^*(R-r\mathbf{1})] = c^*$ and  $\overline{\textnormal{co}}\, \alpha(Z^*) = f_{\overline{\textnormal{co}}\, \alpha}(c^*)$. Then 
	\begin{equation*}
		\overline{\textnormal{co}}\, \alpha(Z^*) = f_{\overline{\textnormal{co}}\, \alpha}(c^*) < g(c^*) = \alpha^g(Z^*) 
	\end{equation*}
	and we arrive at a contradiction.

	Finally, \eqref{eq:relate rho to C hat} follows from Proposition \ref{prop:Cond UI preliminary result}.
\end{proof}

\begin{proof}[Proof of Theorem \ref{thm: no strong reg arb}]
First we show that the market admits strong $\rho$-arbitrage if and only if $\inf_{\pi \in \RR^d}\rho(X_\pi) = - \infty$.  For the nontrivial direction, let $(\pi_n)_{n \geq 1} \subset \RR^d$  be a sequence of portfolios such that $\rho(X_{\pi_n}) \searrow -\infty$. By the dual representation of $\rho$,  this implies that $\mathbb{E}[-X_{\pi_n}] - \alpha(1) \searrow -\infty$, and since $\alpha(1) < \infty$, this gives $\mathbb{E}[X_{\pi_n}] \nearrow \infty$. 

Now let $f_{\overline{\textnormal{co}}\,\alpha}$ be as in Proposition \ref{prop:cond UI}. Since dom\,$f_{\overline{\textnormal{co}}\,\alpha} = C_{\cQ^{\overline{\textnormal{co}}\,\alpha}}$, the convex conjugate of $f_{\overline{\textnormal{co}}\,\alpha}$ is given by
\begin{equation*}
   f_{\overline{\textnormal{co}}\,\alpha}^*(\pi)=\sup_{c \in \RR^d}(\pi \cdot c - f_{\overline{\textnormal{co}}\,\alpha}(c)) = \sup_{c \in C_{\cQ^{\overline{\textnormal{co}}\,\alpha}}}(\pi \cdot c - f_{\overline{\textnormal{co}}\,\alpha}(c)), \quad \pi \in \RR^d.
\end{equation*}
By \eqref{eq:relate rho to C hat}, this implies
\begin{equation}
	\label{eq:pf:thm: no strong reg arb}
	f_{\overline{\textnormal{co}}\,\alpha}^*(\pi) = \rho(X_\pi), \quad \pi \in \RR^d.
\end{equation}
 Since $f_{\overline{\textnormal{co}}\,\alpha}$ is a nonnegative lower semi-continuous convex function, the Fenchel-Moreau theorem (cf.~Appendix \ref{app:convex analysis}) and \eqref{eq:pf:thm: no strong reg arb} give
\begin{equation*}
-f_{\overline{\textnormal{co}}\,\alpha} (\0) = -f_{\overline{\textnormal{co}}\,\alpha}^{**}(\0) = -\sup_{\pi \in \RR^d}(- f_{\overline{\textnormal{co}}\,\alpha^*}(\pi)) = \inf_{\pi \in \RR^d}\rho(X_\pi).
\end{equation*}
The result follows since $\cQ^{\overline{\textnormal{co}}\,\alpha} \cap \mathcal{M} = \emptyset$ if and only if $f_{\overline{\textnormal{co}}\,\alpha}(\mathbf{0}) = \infty$, and
the market admits strong $\rho$-arbitrage if and only if $\inf_{\pi \in \RR^d}\rho(X_\pi) = - \infty$.  The final claim is a consequence of Remark \ref{rmk:Q subset Q start start subset Q bar}(a).
\end{proof}

\begin{proof}[Proof of Theorem \ref{thm: no reg arb equivalence}]
The result follows from Theorem \ref{thm:reg arb:first characterisation} and \cite[Theorem 4.20]{herdegen2020dual}, noting that by Remark \ref{rmk:rho WSLL iff rho bar WSLL}(b), $\rho$ satisfying sensitivity to large expected losses on $L$ implies that $\rho^\infty$ is strictly expectation bounded.
\end{proof}

\begin{proof}[Proof of Theorem \ref{cor:strong rho arb pricing bounds}] 
If the original market $(S^0,S)$ admits strong $\rho$-arbitrage, then $I^s_{\rho}(X) = \emptyset$ by Proposition \ref{prop:basic pricing properties}(b) and $\cQ^{\overline{\textnormal{co}}\,\alpha} \cap \mathcal{M} = \emptyset$ by Theorem \ref{thm: no strong reg arb}.  Whence, \eqref{eq:no strong rho arb price bounds} holds.  

So assume the original market does not admit strong $\rho$-arbitrage.  Consider the market $(S^0,S,S^{d+1})$ where $S^{d+1}_0=1$ and $S^{d+1}_1=X+(1-x)S^0_1$.  Then, by  Theorem \ref{thm: no strong reg arb}, $x$ is a strong $\rho$ consistent price for $X$ if and only if there is an ACMM $Z \in  \mathcal{Q}^{\overline{\textnormal{co}}\,\alpha}$ for the extended market, i.e.,
\begin{equation*}
    S^i_0=\mathbb{E}[ZS^{i}_1/(1+r)], \quad \textnormal{for } i=1,\dots,d+1.
\end{equation*}
In particular, $Z$ is necessarily contained in $\mathcal{Q}^{\overline{\textnormal{co}}\,\alpha} \cap \mathcal{M}$, and we obtain the inclusion $\subset$ in \eqref{eq:no strong rho arb price bounds}.  Conversely, if $S^{d+1}_0 = \mathbb{E}[\hat{Z}S^{d+1}_1/(1+r)]$ for some $\hat{Z} \in \mathcal{Q}^{\overline{\textnormal{co}}\,\alpha} \cap \mathcal{M}$, then this $\hat{Z}$ is also an ACMM for the extended market model, and so the two sets in \eqref{eq:no strong rho arb price bounds} are equal.
\end{proof}

\begin{proof}[Proof of Theorem \ref{cor:rho arb pricing bounds}]
If the original market $(S^0,S)$ admits $\rho$-arbitrage, then $I_{\rho}(X) = \emptyset$ by Proposition \ref{prop:basic pricing properties}(b) and $\tilde{\mathcal{Q}}^\alpha \cap \mathcal{P} = \emptyset$ by Theorem \ref{thm: no reg arb equivalence}.  Whence, \eqref{eq:no rho arb price bounds} holds.  

So assume the original market does not admit $\rho$-arbitrage.  Consider the market $(S^0,S,S^{d+1})$ where $S^{d+1}_0=1$ and $S^{d+1}_1=X+(1-x)S^0_1$.  Then, by Theorem \ref{thm: no reg arb equivalence}, $x \in \RR$ is a $\rho$ consistent price for $X$ if and only if there exists an EMM $Z \in  \tilde{\mathcal{Q}}^\alpha$ for the extended market, i.e.,
\begin{equation*}
    S^i_0=\mathbb{E}[ZS^{i}_1/(1+r)], \quad \textnormal{for } i=1,\dots,d+1.
\end{equation*}
In particular, $Z$ is necessarily contained in $\tilde{\mathcal{Q}}^\alpha \cap \mathcal{P}$, and we obtain the inclusion $\subset$ in \eqref{eq:no rho arb price bounds}.  Conversely, if $S^{d+1}_0 = \mathbb{E}[\hat{Z}S^{d+1}_1/(1+r)]$ for some $\hat{Z} \in \tilde{\mathcal{Q}^\alpha} \cap \mathcal{P}$, then this $\hat{Z}$ is also an EMM for the extended market model, and so the two sets in \eqref{eq:no rho arb price bounds} are equal.
\end{proof}

\begin{proposition}
\label{prop:EW SSLL characterisation}
 $(\cA_{\textnormal{EW}^l})^\infty = H^{\Phi_{l}}_+$ if and only if $a_l = 0$ or $b_l = \infty$. 
\end{proposition}

\begin{proof}
Assume first that $b_l = \infty$ and suppose $X \in H^{\Phi_{l}}$ and $\mathbb{P}[X<0]>0$.  Then since there exists $a \geq 0$ and $b \leq 0$ such that $l(x) \geq ax+b$ for all $x \leq 0$, for any $\lambda >0$ we have
\begin{equation*}
    \textnormal{EW}^l(\lambda X) = \mathbb{E}[l(-\lambda X)] \geq  \mathbb{E}[(-a \lambda X + b) \mathds{1}_{\{X \geq 0\}}] + \mathbb{E}[l(-\lambda X)\mathds{1}_{\{X < 0\}}] \geq \lambda k_1 + c + p l(\lambda k_2)
\end{equation*}
where $k_1:=\mathbb{E}[-aX\mathds{1}_{\{X \geq 0\}}] \leq 0$, $c:=b\mathbb{P}[X \geq 0] \leq 0$, $k_2:=\min\{1,-\essinf(X)/2\} > 0$ and $p:=\mathbb{P}[X \leq -k_2] > 0$.  Now as $\lambda \to \infty$, $(\lambda k_1 + c + p l(\lambda k_2))/\lambda \to \infty$ since $b_l = \infty$.  Therefore, there exists $\tilde{\lambda} \geq 1 $ such that $\textnormal{EW}^l(\tilde{\lambda} X) > 0$ and so $(\cA_{\textnormal{EW}^l})^\infty = H^{\Phi_{l}}_+$.

Now assume that $a_l = 0$ and suppose $X \in H^{\Phi_{l}}$ and $\mathbb{P}[X<0]>0$.  If $X$ is constant, then of course there exists $\tilde{\lambda} \geq 1$ such that $\textnormal{EW}^l(\tilde{\lambda} X) > 0$, so assume $X$ is not constant.  Then there exists $Y \in H^{\Phi_{l}}$ such that $Y \geq X \ \mathbb{P}$-a.s., $\esssup(Y) > 0$ and $\mathbb{P}[Y<0]>0$.  By monotonicity, $\textnormal{EW}^l(\lambda X) \geq \textnormal{EW}^l(\lambda Y)$, and since there exists $a > 0$ and $b \leq 0$ such that $l(x) \geq ax+b$ for all $x \geq 0$, for any $\lambda >0$ we have
\begin{equation*}
     \textnormal{EW}^l(\lambda Y)=\mathbb{E}[l(-\lambda Y)] \geq \mathbb{E}[(-a\lambda Y+b) \mathds{1}_{\{Y<0\}}] + \mathbb{E}[l(-\lambda Y) \mathds{1}_{\{Y \geq 0\}}] \geq \lambda j_1 + d + q l(\lambda j_2)
\end{equation*}
where $j_1:=\mathbb{E}[-aY \mathds{1}_{\{Y < 0\}}] > 0$, $d:=b\mathbb{P}[Y < 0] \leq 0$, $j_2:=\max\{-1,-\esssup(Y)/2\} < 0$ and $q:=\mathbb{P}[Y \geq -c_2] > 0$.  As $\lambda \to \infty$, $(\lambda j_1 + d + q l(\lambda j_2))/\lambda \to j_1 > 0$ since $a_l = 0$.  Therefore, there exists $\tilde{\lambda} \geq 1 $ such that $\textnormal{EW}^l(\tilde{\lambda} X) > 0$ and so $(\cA_{\textnormal{EW}^l})^\infty = H^{\Phi_{l}}_+$.

\medskip
On the other hand, if $b_l \neq \infty$ and $a_l \neq 0$, then define the loss function $\tilde{l}:\RR \to \RR$ by 
\begin{equation*}
    \tilde{l}(x) = \begin{cases}
        b_l x,&\textnormal{if $x \geq 0$,} \\ a_l x,&\textnormal{if $x < 0$}.
    \end{cases}
\end{equation*}
Then $\tilde{l} \geq l$, so $\textnormal{EW}^{\tilde{l}} \geq \textnormal{EW}^{l}$, and to complete the proof, it suffices to find $X \in H^{\Phi_l} = H^{\Phi_{\tilde{l}}} = L^1$ such that $\mathbb{P}[X < 0] > 0$ and $\textnormal{EW}^{\tilde{l}}(\lambda X) \leq 0$ for all $\lambda > 0$.  To that end, let $A \in \mathcal{F}$ be a nontrivial event, $p:=\mathbb{P}[A] \in (0,1)$ and consider the random variable $X = \alpha \mathds{1}_A - \beta \mathds{1}_{A^c}$ where $\alpha ,\beta > 0$ satisfy $-p a_l \alpha + (1-p) b_l \beta < 0$.  Then $X \in L^1$, $\mathbb{P}[X < 0] > 0$ and for any $\lambda > 0$,
\begin{equation*}
\textnormal{EW}^{\tilde{l}}(\lambda X) = \mathbb{E}[\tilde{l}(-\lambda X)] = \lambda [-p a_l \alpha + (1-p) b_l \beta] \leq 0.\qedhere
\end{equation*}
\end{proof}

\begin{proof}[Proof of Proposition \ref{prop:dual rep SR}]
\label{pf:prop:dual rep SR}
If $l|_{\RR_-} = 0$, then $\textnormal{SR}^l \equiv \WC$.  In this case, \eqref{eq:dual rep SR} holds since $l^*|_{[0,1]} = 0$ and $\alpha^l(Z) = 0$ for $Z \in D \cap L^\infty$.  Otherwise,  if  $l|_{\RR_-} \neq 0$, then $\textnormal{SR}^l$ is a real-valued convex risk measure on $H^{\Phi_l}$ and \eqref{eq:dual rep SR} follows from \cite[Theorem 4.3]{cheridito2009risk}, and the proof of  \cite[Theorem 10]{follmer2002convex}. 
\end{proof}

\begin{proposition}
\label{prop:interior SR}
    Let $l$ be a loss function and assume that $0 < a_l < b_l < \infty$.  Consider the penalty function $\alpha^{l}(Z) := \inf_{\lambda > 0} \tfrac{1}{\lambda} \mathbb{E}[l^*(\lambda Z)]$ for $Z \in \cD$.  Then
    \begin{equation*}
        \tilde{\mathcal{Q}}^{\alpha^l} = \{ Z \in \mathcal{D} : \textnormal{there exists $k > 0$ and $\epsilon > 0$ such that } a_l+\epsilon < kZ < b_l - \epsilon \ \mathbb{P}\textnormal{-a.s.}  \}
    \end{equation*}
    is a nonempty subset of $\mathcal{Q}^{\alpha^l}$ satisfying Conditions POS, MIX and INT.
\end{proposition}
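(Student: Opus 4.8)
The plan is to recast membership in $\cQ^{\alpha^l}$ and $\tilde{\cQ}^{\alpha^l}$ in a scaling-invariant way and then read off each property from this reformulation. Recall from the main text that $\cQ^{\alpha^l} = \{Z \in \cD : \exists\, k > 0 \text{ with } kZ \in [a_l, b_l]\ \as{\P}\}$. An elementary argument (a suitable $k$ exists precisely when $k$ can be chosen in $[a_l/\essinf Z, b_l/\esssup Z]$) shows that for $Z \in \cD$,
\begin{equation*}
Z \in \cQ^{\alpha^l} \iff a_l \esssup Z \leq b_l \essinf Z, \qquad Z \in \tilde{\cQ}^{\alpha^l} \iff a_l \esssup Z < b_l \essinf Z,
\end{equation*}
where in both cases the condition forces $\essinf Z > 0$ (otherwise $\esssup Z = 0$ and $Z \equiv 0$, contradicting $\E[Z] = 1$). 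From this reformulation three of the required facts are immediate: $1 \in \tilde{\cQ}^{\alpha^l}$ since $a_l < b_l$, so the set is nonempty; $\tilde{\cQ}^{\alpha^l} \subset \cQ^{\alpha^l}$ trivially; and Condition POS holds because $\essinf Z > 0$ for every $Z \in \tilde{\cQ}^{\alpha^l}$.

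For Condition MIX, I would fix $Z \in \cQ^{\alpha^l}$, $\tilde Z \in \tilde{\cQ}^{\alpha^l}$ and $\lambda \in (0,1)$, and set $W := \lambda Z + (1-\lambda)\tilde Z \in \cD$. Using subadditivity of $\esssup$ and superadditivity of $\essinf$ under convex combinations,
\begin{equation*}
b_l \essinf W - a_l \esssup W \geq \lambda\big(b_l \essinf Z - a_l \esssup Z\big) + (1-\lambda)\big(b_l \essinf \tilde Z - a_l \esssup \tilde Z\big).
\end{equation*}
The first bracket is $\geq 0$ because $Z \in \cQ^{\alpha^l}$, and the second is $> 0$ because $\tilde Z \in \tilde{\cQ}^{\alpha^l}$; since $1 - \lambda > 0$, the right-hand side is strictly positive, so $W \in \tilde{\cQ}^{\alpha^l}$.

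For Condition INT, I would take $\cE := \cD \cap L^\infty$, which is trivially $L^\infty$-dense in itself. Given $\tilde Z \in \tilde{\cQ}^{\alpha^l}$, write $\delta := b_l \essinf \tilde Z - a_l \esssup \tilde Z > 0$. For any $Z \in \cE$ and $\lambda \in (0,1)$, the same convex-combination estimate (now discarding the nonnegative term $\lambda \essinf Z$ and bounding $\esssup Z \leq \lVert Z \rVert_\infty$) gives, with $W_\lambda := \lambda Z + (1-\lambda)\tilde Z$, the lower bound $b_l \essinf W_\lambda - a_l \esssup W_\lambda \geq (1-\lambda)\delta - \lambda a_l \lVert Z \rVert_\infty$. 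This is positive once $\lambda < \delta/(\delta + a_l \lVert Z \rVert_\infty)$, and for such $\lambda$ we obtain $W_\lambda \in \cQ^{\alpha^l}$, as required.

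The routine preliminary is the scaling-invariant reformulation; the substantive step is Condition MIX. The point to get right there is that an arbitrary $Z \in \cQ^{\alpha^l}$ need only satisfy the \emph{non-strict} ratio bound and may sit at a completely different scale than $\tilde Z$, yet the strict slack carried by $\tilde Z$, weighted by $1 - \lambda > 0$, survives the convex combination. The reformulation is exactly what makes this transparent: working directly with the defining constants $k$ would force one to reconcile two different scalings, whereas the scale-free comparison $a_l \esssup(\cdot)$ versus $b_l \essinf(\cdot)$ behaves well under mixing.
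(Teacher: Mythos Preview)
Your proof is correct and takes a genuinely different route from the paper's. The paper works directly with the defining data $(k,\epsilon)$: for MIX it produces an explicit scaling $k^* = k\tilde k/(\lambda\tilde k + (1-\lambda)k)$ for the mixture, so that $k^*(\lambda Z + (1-\lambda)\tilde Z)$ becomes a convex combination of $kZ$ and $\tilde k\tilde Z$, and for INT it fixes the scaling $k$ of $\tilde Z$ and chooses $\lambda$ small enough to force $k(\lambda Z + (1-\lambda)\tilde Z) \in [a_l,b_l]$. Your scale-invariant reformulation $Z \in \cQ^{\alpha^l} \iff a_l\,\esssup Z \leq b_l\,\essinf Z$ (and the strict version for $\tilde\cQ^{\alpha^l}$) sidesteps the need to reconcile different scalings: sub/superadditivity of $\esssup/\essinf$ under convex combinations then gives MIX in one line, and INT with a simple linear-in-$\lambda$ bound. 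Your approach is cleaner and makes the convexity structure transparent; the paper's is more constructive in that it exhibits the witnessing $k$ explicitly. One tiny imprecision: the ``nonnegative term'' you discard in INT is $\lambda\, b_l\,\essinf Z$, not just $\lambda\,\essinf Z$; the conclusion is unaffected.
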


\begin{proof}
First, since $(a_l, b_l) \subset \textnormal{dom} \, l^* \subset [a_l, b_l]$ and $l^*$ is bounded on any compact subset of $(a_l, b_l)$, $\tilde{\mathcal{Q}}^{\alpha^l} \subset \mathcal{Q}^{\alpha^l}$. Moreover, it is clear that $1 \in \tilde{\mathcal{Q}}^{\alpha^l}$, and by definition $\tilde{\mathcal{Q}}^{\alpha^l}$ satisfies Condition POS.
    
We next show Condition MIX.  To that end, let $Z \in \mathcal{Q}^{\alpha^l}$, $\tilde{Z} \in \tilde{\mathcal{Q}}^{\alpha^l}$ and $\lambda \in (0,1)$.  Then there exists $k,\tilde{k},\tilde{\epsilon} > 0$ such that $\mathbb{E}[l^*(kZ)] < \infty$ and $a_l + \tilde{\epsilon} \leq \tilde{k}\tilde{Z} \leq b_l - \tilde{\epsilon} \ \mathbb{P}$-a.s.  In particular, since $a_l \leq kZ \leq b_l$ $\mathbb{P}$-a.s., it follows that
\begin{equation*}
    a_l + \epsilon^* < k^*(\lambda Z + (1-\lambda)\tilde{Z}) < b_l - \epsilon^* \quad \mathbb{P}\textnormal{-a.s.,}
\end{equation*}
where $k^*:=k\tilde{k}/(\lambda \tilde{k} + (1-\lambda)k) > 0$ and $\epsilon^*:=(1-\lambda)k^*\tilde{\epsilon} > 0$.  Therefore, $\lambda Z + (1-\lambda)\tilde{Z} \in \tilde{\mathcal{Q}}^{\alpha^l}$ and $\tilde{\mathcal{Q}}^{\alpha^l}$ satisfies Condition MIX.

We now show Condition INT is satisfied.  Let $\tilde{Z} \in \tilde{\mathcal{Q}}^{\alpha^l}$, set $\mathcal{E}:=\cD \cap L^\infty$ and let $Z \in \mathcal{E}$.  Then there exists $k, \epsilon > 0$ such that $\essinf k\tilde{Z} > a_l + \epsilon$ and $\esssup k\tilde{Z} < b_l - \epsilon$.  Let
\begin{equation*}
    \lambda_1:=\begin{cases}
        \frac{b_l - \epsilon - k\lVert \tilde{Z} \rVert_\infty}{k(\lVert Z \rVert_\infty - \lVert \tilde{Z} \rVert_\infty)},&\textnormal{if $\lVert Z \rVert_\infty > \lVert \tilde{Z} \rVert_\infty$}, \\ \tfrac{1}{2},&\textnormal{otherwise},
    \end{cases}
\end{equation*}
and $\lambda_2:=1-(a_l + \epsilon)/\essinf k\tilde{Z}$.  Then setting $\lambda:=\min\{\lambda_1,\lambda_2\}$ yields $\essinf k (\lambda Z+(1-\lambda)\tilde{Z}) \geq a_l + \epsilon$ and $\esssup k (\lambda Z+(1-\lambda)\tilde{Z}) \leq b_l - \epsilon$.  Therefore, $\lambda Z + (1-\lambda)\tilde{Z} \in \mathcal{Q}^{\alpha^l}$ and $\tilde{\mathcal{Q}}^{\alpha^l}$ satisfies Condition INT.
\end{proof}

\begin{proposition}
\label{prop:Q co alpha l SRl}
     Let $l$ be a loss function where $a_l > 0$ and $b_l < \infty$. Then the penalty function $\alpha^{l}(Z) := \inf_{\lambda > 0} \tfrac{1}{\lambda} \mathbb{E}[l^*(\lambda Z)]$ is convex and $L^1$-lower semi-continuous.  Thus, $\mathcal{Q}^{\overline{\textnormal{co}} \, \alpha^l} = \mathcal{Q}^{\alpha^l}$. 
\end{proposition}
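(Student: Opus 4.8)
The plan is to combine the general sandwiching from Remark \ref{rmk:Q subset Q start start subset Q bar}(a) with the observation that $\cQ^{\alpha^l}$ is already $L^1$-closed, so that nothing new is added by passing to the lower semi-continuous convex hull. Recall from the discussion following Proposition \ref{prop:dual rep SR} that, since $\mathrm{dom}\,l^* = [a_l,b_l]$ under the present hypotheses, one has
\[
\cQ^{\alpha^l} = \{Z \in \cD : \text{there exists } k > 0 \text{ such that } a_l \leq kZ \leq b_l \ \as{\P}\}.
\]
Applying Remark \ref{rmk:Q subset Q start start subset Q bar}(a) with $\alpha = \alpha^l$ gives the chain $\cQ^{\alpha^l} \subset \cQ^{\overline{\textnormal{co}}\,\alpha^l} \subset \bar{\cQ}^{\alpha^l}$, where $\bar{\cQ}^{\alpha^l}$ denotes the $L^1$-closure of $\cQ^{\alpha^l}$. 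Hence it suffices to prove that $\cQ^{\alpha^l}$ is $L^1$-closed: once $\bar{\cQ}^{\alpha^l} = \cQ^{\alpha^l}$, all three sets in the chain coincide and the claim follows.

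To establish closedness I would take a sequence $(Z_n)_{n \geq 1} \subset \cQ^{\alpha^l}$ with $Z_n \to Z$ in $L^1$ and manufacture a single scaling constant $k > 0$ witnessing $Z \in \cQ^{\alpha^l}$. For each $n$, fix $k_n > 0$ with $a_l \leq k_n Z_n \leq b_l$ $\as{\P}$. Integrating these pointwise bounds and using $\E[Z_n] = 1$ pins the scalars down to a compact interval: $k_n = \E[k_n Z_n] \in [a_l,b_l]$. This uniform control is exactly what lets me pass to the limit. Since $L^1$-convergence yields a subsequence converging $\P$-a.s., and since $[a_l,b_l]$ is compact, I extract a common subsequence along which $Z_{n_j} \to Z$ $\as{\P}$ and $k_{n_j} \to k \in [a_l,b_l]$. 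Then $k_{n_j} Z_{n_j} \to kZ$ $\as{\P}$, so the bounds survive in the limit, giving $a_l \leq kZ \leq b_l$ $\as{\P}$ with $k \geq a_l > 0$; and $Z \in \cD$ because $\cD$ is $L^1$-closed. Thus $Z \in \cQ^{\alpha^l}$, which is the required closedness.

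The only genuinely delicate point is the confinement of the scaling constants $k_n$ to the compact set $[a_l,b_l]$; this is precisely where the hypotheses $a_l > 0$ and $b_l < \infty$ enter, since they guarantee that $\mathrm{dom}\,l^* = [a_l,b_l]$ is a compact interval bounded away from the origin. Once the $k_n$ are trapped there, the remainder is routine almost-sure limit bookkeeping and I expect no further obstacles. (As a sanity check consistent with this route, $\alpha^l$ is in fact bounded above on $\cQ^{\alpha^l}$ — choosing $\lambda = k$ gives $\alpha^l(Z) \leq \tfrac1k \E[l^*(kZ)] \leq \tfrac1{a_l}\sup_{[a_l,b_l]} l^* < \infty$ because $l^*$ is finite, hence bounded, on the compact interval $[a_l,b_l]$ — so the \emph{moreover} clause of Remark \ref{rmk:Q subset Q start start subset Q bar}(a) already yields $\cQ^{\overline{\textnormal{co}}\,\alpha^l} = \bar{\cQ}^{\alpha^l}$; but the closedness argument above settles the equality directly and does not rely on this.)
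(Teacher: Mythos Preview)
Your proposal is correct and follows essentially the same route as the paper: both arguments use the sandwich $\cQ^{\alpha^l} \subset \cQ^{\overline{\textnormal{co}}\,\alpha^l} \subset \bar\cQ^{\alpha^l}$ from Remark~\ref{rmk:Q subset Q start start subset Q bar}(a) and then establish $L^1$-closedness of $\cQ^{\alpha^l}$ by trapping the scaling constants $k_n$ in the compact interval $[a_l,b_l]$ and passing to an a.s.\ convergent subsequence. The only cosmetic difference is that the paper first verifies boundedness of $\alpha^l$ on its effective domain (to invoke the \emph{moreover} clause and get $\cQ^{\overline{\textnormal{co}}\,\alpha^l} = \bar\cQ^{\alpha^l}$ before showing closedness), whereas you proceed directly to closedness and mention boundedness only as an aside; your derivation of $k_n \in [a_l,b_l]$ by integrating the pointwise bounds is in fact slightly cleaner than the paper's.
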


\begin{proof}
First note that $\alpha^l$ is the minimal penalty function for SR$^l$ by the proof of \cite[Theorem 10]{follmer2002convex}, and $\textnormal{SR}^l$ is a real-valued convex risk measure on $H^{\Phi_l}$.  Therefore $\alpha^l$ is convex by \cite[Theorem 4.3]{cheridito2009risk}.  To prove $L^1$-lower semi-continuity, consider an arbitrary sequence $(Z_n)_{n \geq 1} \subset \cQ^{\alpha^l}$ that converges in $L^1$ to $Z$ and assume $\liminf_{n \to \infty} \alpha^l(Z_n) \leq y$.  By restricting to a subsequence and relabelling, we may assume without loss of generality $Z_n \to Z$ $\mathbb{P}$-a.s.  Thus, $Z \in \cD$ and $\alpha^l(Z)$ is well-defined.  To complete the proof, we must show $\alpha^l(Z) \leq y$.  Now for any $n \in \NN$: since $Z_n \in \cQ^{\alpha^l}$, $l^*$ is nonnegative and $\textnormal{dom} \, l^* \subset [a_l, b_l]$, that means $\alpha^l(Z_n) = \tfrac{1}{\lambda_n^*} \mathbb{E}[l^*(\lambda_n^* Z_n)]$ for some $\lambda_n^* \in [a_l, b_l]$ (i.e., the infimum is attained -- this can be shown by taking a minimising sequence $(\lambda_k)_{k \geq 1} \subset [a_l, b_l]$, restricting to a convergent subsequence and using Fatou's lemma).  By restricting to a further subsequence and relabelling, we may assume that $\lambda_n^* \to \lambda^*$.  Then $\tfrac{1}{\lambda_n^*} l^*(\lambda_n^* Z_n) \to \tfrac{1}{\lambda^*} l^*(\lambda^* Z) \ \mathbb{P}$-a.s.~(since $l^*$ is continuous on its effective domain and $\lambda_n Z_n \in \textnormal{dom} \, l^*$ $\mathbb{P}$-a.s.), and so by Fatou's lemma, which we can apply since $l^*$ is nonnegative:
\begin{equation*}
    y \geq \liminf_{n \to \infty} \alpha^l(Z_n) = \liminf_{n \to \infty} \tfrac{1}{\lambda_n^*} \mathbb{E}[l^*(\lambda_n^* Z_n)] \geq  \tfrac{1}{\lambda^*} \mathbb{E}[l^*(\lambda^* Z)] \geq \alpha^l(Z).\qedhere
\end{equation*}
\end{proof}

\begin{proof}[Proof of Proposition \ref{prop:OCE dual rep} ]
    This follows from: \cite[Equation (5.23)]{cheridito2009risk} and Remark \ref{rmk:OCE condition} in the case that $l(x) > x$ for all $x$ with $|x|$ sufficiently large; and from $\alpha^l(1) = 0$ and $\alpha^l(Z) = \infty$ for all $Z \in \cD \setminus \{1\}$ in the case that $l$ is equal to the identity either on $\RR_+$ or $\RR_-$.
\end{proof}

\begin{proposition}
\label{prop:interior OCE}
  Let $l$ be a loss function and assume that either $ a_l  > 0$, or $b_l  < \infty$ and $a_l < 1 < b_l$.  Define $\alpha^{l}(Z) := \mathbb{E}[l^*(Z)]$.  Then 
\begin{equation*}
    \tilde{\mathcal{Q}}^{\alpha^l} =  \{ Z \in \mathcal{Q}^{\alpha^l} : a_l + \epsilon < Z < b_l - \epsilon \ \mathbb{P}\textnormal{-a.s.~for some $\epsilon > 0$}\},
\end{equation*}
is a nonempty subset of $\mathcal{Q}^{\alpha^l}$ satisfying Conditions POS, MIX and INT.
\end{proposition}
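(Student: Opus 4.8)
The plan is to follow the structure of the proof of Proposition~\ref{prop:interior SR}, the only structural difference being that for the OCE penalty $\alpha^l(Z)=\mathbb{E}[l^*(Z)]$ there is no free scaling factor $k$: since $\mathrm{dom}\,l^*\subset[a_l,b_l]$, finiteness of $\mathbb{E}[l^*(Z)]$ already forces $a_l\le Z\le b_l$ $\mathbb{P}$-a.s.\ (interpreting $b_l=\infty$ as no upper constraint), so membership in $\mathcal{Q}^{\alpha^l}$ pins down the range of $Z$ directly. I would first record two facts: that $l^*\ge 0$ with $l^*(1)=\sup_x(x-l(x))=0$ (using $l(x)\ge x$ with equality at $0$), whence by convexity $l^*$ attains its minimum at $1$ and is nondecreasing on $[1,\infty)$; and that $\mathrm{dom}\,l^*\subset[a_l,b_l]$. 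Nonemptiness and $\tilde{\mathcal{Q}}^{\alpha^l}\subset\mathcal{Q}^{\alpha^l}$ are then immediate ($1\in\tilde{\mathcal{Q}}^{\alpha^l}$ because $\alpha^l(1)=0$ and $a_l<1<b_l$), and Condition POS holds because every $Z\in\tilde{\mathcal{Q}}^{\alpha^l}$ satisfies $Z>a_l+\epsilon\ge\epsilon>0$.

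For Condition MIX, fix $Z\in\mathcal{Q}^{\alpha^l}$, $\tilde Z\in\tilde{\mathcal{Q}}^{\alpha^l}$ and $\lambda\in(0,1)$. Convexity of $l^*$ makes $\alpha^l$ convex, so $\alpha^l(\lambda Z+(1-\lambda)\tilde Z)\le\lambda\alpha^l(Z)+(1-\lambda)\alpha^l(\tilde Z)<\infty$ (both terms being finite), i.e.\ the mixture lies in $\mathcal{Q}^{\alpha^l}$. The range bounds $a_l\le Z\le b_l$ and $a_l+\tilde\epsilon<\tilde Z<b_l-\tilde\epsilon$ then give $a_l+(1-\lambda)\tilde\epsilon\le\lambda Z+(1-\lambda)\tilde Z\le b_l-(1-\lambda)\tilde\epsilon$ (the upper bound vacuous if $b_l=\infty$), so with $\epsilon^\ast=(1-\lambda)\tilde\epsilon/2$ the mixture lies in $\tilde{\mathcal{Q}}^{\alpha^l}$. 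This step is routine precisely because both inputs already have finite penalty.

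Condition INT is where the work lies, and I would take $\mathcal{E}=\mathcal{D}\cap L^\infty$. Given $\tilde Z\in\tilde{\mathcal{Q}}^{\alpha^l}$ (say $a_l+\epsilon\le\tilde Z$) and $Z\in\mathcal{E}$ with $m:=\lVert Z\rVert_\infty$, choosing $\lambda$ small forces $W:=\lambda Z+(1-\lambda)\tilde Z\ge(1-\lambda)(a_l+\epsilon)\ge a_l+\tfrac{\epsilon}{2}$ everywhere (here $Z\ge 0$ is what keeps $W$ off the left endpoint, so no constraint on $Z$ is needed on the left). The remaining and genuinely delicate point is integrability of $l^*(W)$, which is immediate when $b_l<\infty$ (shrink $\lambda$ further so that $W$ stays in a compact subinterval $[a_l+\tfrac{\epsilon}{2},\,b_l-\epsilon']\subset(a_l,b_l)$ on which $l^*$ is bounded) but not when $b_l=\infty$, since then $\tilde Z$ and hence $W$ may be unbounded and the convexity bound is useless because $\alpha^l(Z)$ may be $+\infty$. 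I would resolve this by splitting on $\{\tilde Z\le M\}$ and $\{\tilde Z>M\}$ for $M\ge\max(m,1/(1-\lambda))$: on the first set $W$ is bounded, so $l^*(W)$ is bounded; on the second, $\tilde Z>m\ge Z$ gives $W\le\tilde Z$ while $W\ge(1-\lambda)\tilde Z\ge 1$, so monotonicity of $l^*$ on $[1,\infty)$ yields $l^*(W)\le l^*(\tilde Z)$, whence $\mathbb{E}[l^*(W)\mathds{1}_{\{\tilde Z>M\}}]\le\mathbb{E}[l^*(\tilde Z)]<\infty$. Combining the two estimates gives $W\in\mathcal{Q}^{\alpha^l}$, completing INT. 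The main obstacle throughout is thus this tail integrability in the case $b_l=\infty$; everything else transfers essentially verbatim (minus the scaling $k$) from Proposition~\ref{prop:interior SR}.
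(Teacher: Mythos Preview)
Your proposal is correct and follows essentially the same strategy as the paper: verify nonemptiness and POS trivially, use convexity of $l^*$ and the range bounds for MIX, and for INT take $\mathcal{E}=\mathcal{D}\cap L^\infty$ and push $\lambda$ small to control the left endpoint before handling integrability of $l^*(W)$ case by case. The only differences are cosmetic. In the case $b_l<\infty$ you force $W$ into a compact subinterval of the \emph{open} interval $(a_l,b_l)$, whereas the paper only shows $a_l\le W\le b_l$; your version sidesteps any worry about $l^*$ at the endpoints. In the case $b_l=\infty$ the paper splits on $\{W\le c\}$ (where $c$ is the minimiser of $l^*$) and $\{Z\le\tilde Z\}$, bounding the pieces by $l^*(a_l)$, $\mathbb{E}[l^*(\tilde Z)]$ and $l^*(\lVert Z\rVert_\infty)$ respectively, while you split on $\{\tilde Z\le M\}$ versus $\{\tilde Z>M\}$ and use monotonicity of $l^*$ on $[1,\infty)$ directly; both decompositions exploit the same monotonicity and yield the same conclusion.
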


\begin{proof}
It is clear that $1 \in \tilde{\mathcal{Q}}^{\alpha^l} \subset \mathcal{Q}^{\alpha^l}$, and by definition $\tilde{\mathcal{Q}}^{\alpha^l}$ satisfies Condition POS.  For the rest of the proof, it is useful to recall that $(a_l, b_l) \subset \textnormal{dom} \, l^* \subset [a_l, b_l]$ and $l^*$ is bounded on any compact subset of $(a_l, b_l)$.
    
    To show Condition MIX, let $Z \in \mathcal{Q}^{\alpha^l}$, $\tilde{Z} \in \tilde{\mathcal{Q}}^{\alpha^l}$ and $\lambda \in (0,1)$.  Since $l^*$ is convex, $\mathbb{E}[l^*(\lambda Z + (1-\lambda) \tilde{Z})] < \infty$ so $\lambda Z + (1-\lambda) \tilde{Z} \in \mathcal{Q}^{\alpha^l}$.  Furthermore, since $a_l \leq Z \leq b_l \ \mathbb{P}$-a.s.~and $a_l + \epsilon < \tilde{Z} < b_l - \epsilon \ \mathbb{P}$-a.s.~for some $\epsilon > 0$, it follows that 
    \begin{equation*}
        a_l + (1-\lambda) \epsilon < \lambda Z + (1-\lambda)\tilde{Z} < b_l - (1-\lambda)\epsilon \quad \mathbb{P}\textnormal{-a.s.}
    \end{equation*}
Therefore, $\lambda Z + (1-\lambda)\tilde{Z} \in \tilde{\mathcal{Q}}^{\alpha^l}$ and $\tilde{\mathcal{Q}}^{\alpha^l}$ satisfies Condition MIX.

Finally we show Condition INT is satisfied.  Assume first that $b_l=\infty$ and let $\tilde{Z} \in \tilde{\mathcal{Q}}^{\alpha^l}$.  Set $\mathcal{E}:=\mathcal{D} \cap L^\infty$ and $Z \in \mathcal{E}$.  Then there exists $\epsilon > 0$ such that $\tilde{Z} > a_l + \epsilon \ \mathbb{P}$-a.s.  By choosing $\lambda \in (0,\tfrac{1}{2}\epsilon/(a_l+\epsilon)]$, it follows that
\begin{equation*}
    \lambda Z + (1-\lambda) \tilde{Z} \geq (1-\lambda) (a_l+\epsilon) \geq a_l + \tfrac{\epsilon}{2} \quad \mathbb{P}\textnormal{-a.s.}
\end{equation*}
Now since $l^*$ is convex, real-valued on $(a_l,\infty)$ and its minimum is $l^*(1)=0$, it follows that $l^*$ is nonincreasing on $(a_l,1)$ and nondecreasing on $(1,\infty)$.  Whence, setting $A:=\{\lambda Z + (1-\lambda)\tilde{Z} \leq 1\}$ and $B:=\{Z \leq \tilde{Z}\}$, we have 
\begin{align*}
    \mathbb{E}[l^*(\lambda Z + (1-\lambda)\tilde{Z})] & = \mathbb{E}[l^*(\lambda Z + (1-\lambda)\tilde{Z})\mathds{1}_A] + \mathbb{E}[l^*(\lambda Z + (1-\lambda)\tilde{Z})\mathds{1}_{A^c}] \\ & \leq l^*(a_l+\tfrac{\epsilon}{2}) + \mathbb{E}[l^*(\lambda Z + (1-\lambda)\tilde{Z})\mathds{1}_{A^c \cap B}] + \mathbb{E}[l^*(\lambda Z + (1-\lambda)\tilde{Z})\mathds{1}_{A^c \cap B^c}] \\ & \leq l^*(a_l+\tfrac{\epsilon}{2}) + \mathbb{E}[l^*(\tilde{Z})] + l^*(\lVert Z \rVert_\infty) < \infty,
\end{align*}
Therefore, $\lambda Z + (1-\lambda)\tilde{Z} \in \mathcal{Q}^{\alpha^l}$ and $\tilde{\mathcal{Q}}^{\alpha^l}$ satisfies Condition INT when $b_l=\infty$.  Now assume $1 < b_l < \infty$ and let $\tilde{Z} \in \tilde{\mathcal{Q}}^{\alpha^l}$.  Set $\mathcal{E}:=\mathcal{D} \cap L^\infty$ and $Z \in \mathcal{E}$.  Then there exists $\epsilon > 0$ such that $a_l + \epsilon < \tilde{Z} < b_l - \epsilon$.  By choosing $\lambda \in (0,\tfrac{1}{2}\epsilon/(a_l+\epsilon)]$ if $\lVert Z \rVert_\infty < b_l$ and choosing $\lambda \in (0,\min\{\tfrac{1}{2}\epsilon/(a_l+\epsilon),\epsilon/(\tfrac{1}{2}\epsilon+\lVert Z \rVert_\infty - b_l)\}]$ otherwise; it follows that there exists $\epsilon' > 0$ such that $a_l + \epsilon' \leq \lambda Z + (1-\lambda) \tilde{Z} \leq b_l - \epsilon' \ \mathbb{P}\textnormal{-a.s.}$
Therefore, $\lambda Z + (1-\lambda)\tilde{Z} \in \mathcal{Q}^{\alpha^l}$ and $\tilde{\mathcal{Q}}^{\alpha^l}$ satisfies Condition INT when $1 < b_l < \infty$.
\end{proof}

\begin{proof}[Proof of Corollary \ref{cor:OCE:SRA}]
(a) This follows from Proposition \ref{prop:interior OCE} and Theorem \ref{thm: no reg arb equivalence}, noting that Condition I follows from the generalised H{\"o}lder inequality; see e.g.~\cite[Equation (B.1)]{herdegen2020dual}.

(b) First, note that since $(a_l, b_l) \subset \textnormal{dom} \, l^* \subset [a_l, b_l]$ we have
\begin{equation*}
    \{ Z \in \cD : a_l < \essinf Z \textnormal{ and } \esssup Z < b_l \} \subset \cQ^{\alpha^l} \subset \{ Z \in \mathcal{D} : a_l \leq Z \leq b_l \ \mathbb{P}\textnormal{-a.s.}\}.
\end{equation*}
Thus, Condition UI holds if $b_l < \infty$. Using this, the result follows from Theorem \ref{thm: no strong reg arb} if we can show the penalty function $\cD \ni Z \mapsto \alpha^l(Z) := \mathbb{E}[l^*(Z)]$ is convex and $L^1$-lower semi-continuous.  Since $l^*$ is convex, $\alpha^l$ is convex.  To show $L^1$-lower semi-continuity, it suffices to consider a sequence $(Z_n)_{n \geq 1} \subset \cQ^{\alpha^l}$ that converges in $L^1$ to $Z$ and show $\alpha^l(Z) \leq \liminf_{n \to \infty} \alpha^l(Z_n)$.  By restricting to a subsequence and relabelling, we may assume $Z_n$ converges to $Z$ $\mathbb{P}$-a.s., and hence $Z \in \cD$.  Moreover, $l^*(Z_n)$ is a nonnegative sequence that converges to $l^*(Z)$ $\mathbb{P}$-a.s., since $l^*$ is nonnegative, continuous on its effective domain and $Z_n \in \textnormal{dom} \, l^*$ $\mathbb{P}$-a.s.  Applying Fatou's lemma yields:
\begin{equation*}
    \alpha^l(Z) = \mathbb{E}[l^*(Z)] \leq \liminf_{n \to \infty} \mathbb{E}[l^*(Z_n)] = \liminf_{n \to \infty} \alpha^l(Z_n).\qedhere
\end{equation*}
\end{proof}

\begin{proposition}
\label{prop:VaRg acceptance set}
Assume $g: (0,1) \to [0,\infty]$ is a nonincreasing function with $\inf g = 0$.  If $g$ is real-valued, then $(\cA_{\VaR^g})^\infty = L^1_+$.  If $g$ is not real-valued and the probability space is atomless, then $(\cA_{\VaR^g})^\infty \supsetneq L^1_+$.
\end{proposition}

\begin{proof}
Let $X \in L^1$.  By definition, $\VaR^g(\lambda X) \leq 0$ for all $\lambda \in (0,\infty)$ if and only if $\VaR^\alpha(\lambda X) \leq g(\alpha)$ for all $\lambda \in (0,\infty)$ and $\alpha \in (0,1)$.  By the positive homogeneity of $\VaR$, this is equivalent to $\VaR^\alpha(X) \leq 0$ for all $\alpha \in \textnormal{dom} \, g$.

    If $g$ is real-valued, this implies that $0 \geq \lim_{\alpha \to 0} \VaR^\alpha(X) = \WC(X)$ for all $X \in (\cA_{\VaR^g})^\infty$.  We may conclude that $(\cA_{\VaR^g})^\infty = L^1_+$.

    If $g$ is not real-valued, then there exists $\beta \in (0,1)$ such that $\inf \textnormal{dom} \, g = \beta$. It follows that $(\cA_{\VaR^g})^\infty \supset \cA_{\VaR^\beta}$ because $0 \geq \VaR^\beta(X) \geq \VaR^\alpha(X)$ for any $X \in \cA_{\VaR^\beta}$ and $\alpha \in \textnormal{dom} \, g$. Since the probability space is atomless, it is straightforward to check that  $\cA_{\VaR^\beta} \supsetneq L^1_+$.  
\end{proof}

\begin{proof}[Proof of Proposition \ref{prop:dual rep g adjusted ES}]
The dual representation has been shown in \cite[Proposition 3.7]{burzoni2020adjusted}.  \eqref{eq:dual set g adjusted ES} follows directly from the definition of $\alpha^g$, which  also gives convexity of  $\cQ^{\alpha^g}$.
\end{proof}

\begin{proposition}
	\label{prop:g bounded dom g iff g hat bounded dom g hat}
	Let $\beta \in (0,1)$ and $g \in \cG_\beta \setminus \cG_\beta^\infty$.  Define $\hat{g}:(0,1] \to [0,\infty]$ by $\hat{g}(x) = \overline{\textnormal{co}}\,\tilde{g}(1/x)$ where $\tilde{g}:[1,\infty) \to [0,\infty]$ is given by $\tilde{g}(x) = g(1/x)$.  Then $\hat{g} \in \cG_\beta \setminus \cG_\beta^\infty$ and $\hat g(\beta) = \infty$.
\end{proposition}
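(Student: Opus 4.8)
The plan is to transfer every required property of $\hat g$ back to the single function $h := \overline{\textnormal{co}}\,\tilde g$ on $[1,\infty)$, exploiting that $\hat g(x)=h(1/x)$ and that $x \mapsto 1/x$ is an order-reversing bijection between $(0,1]$ and $[1,\infty)$. Concretely, it suffices to show that $h$ is a nonnegative, nondecreasing, lower semi-continuous convex function with $h(1)=0$, effective domain $\mathrm{dom}\,h=[1,1/\beta)$, and $\lim_{x \uparrow 1/\beta} h(x)=\infty$. Granting this: $\hat g$ is nonincreasing (composition of the nondecreasing $h$ with the decreasing $1/x$) with $\hat g(1)=h(1)=0$ and values in $[0,\infty]$; $\mathrm{dom}\,\hat g = \{x \in (0,1] : 1/x \in [1,1/\beta)\} = (\beta,1]$, so $\inf\mathrm{dom}\,\hat g=\beta$ and $\hat g(\beta)=h(1/\beta)=\infty$; and $\sup_{x\in(\beta,1]}\hat g(x)=\sup_{y\in[1,1/\beta)}h(y)=\infty$, so $\hat g$ is unbounded on its domain. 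These are precisely the assertions $\hat g \in \cG_\beta \setminus \cG_\beta^\infty$ and $\hat g(\beta)=\infty$.

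First I would record the shape of $\tilde g$. Since $g \in \cG_\beta \setminus \cG_\beta^\infty$, $g$ is nonincreasing with $g(1)=0$, $\inf\mathrm{dom}\,g=\beta$, and $g$ unbounded on $\mathrm{dom}\,g$; as a nonincreasing function attains its supremum as its argument decreases to $\beta$, unboundedness forces $\beta \notin \mathrm{dom}\,g$, i.e.\ $g(\beta)=\infty$, $\mathrm{dom}\,g=(\beta,1]$, and $g(y)\to\infty$ as $y\downarrow\beta$. Hence $\tilde g(x)=g(1/x)$ is nondecreasing on $[1,\infty)$ with $\tilde g(1)=0$, $\mathrm{dom}\,\tilde g=[1,1/\beta)$, and $\tilde g(x)\to\infty$ as $x\uparrow 1/\beta$. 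Next come the routine properties of $h=\overline{\textnormal{co}}\,\tilde g$: the constant $0$ is a convex, lower semi-continuous minorant of $\tilde g$ (as $\tilde g\geq 0$), so $0 \leq h \leq \tilde g$; in particular $h(1)=0=\min h$ and $h>-\infty$ everywhere, whence $h=\tilde g^{**}$ by the identity recalled in Appendix \ref{app:convex analysis}. Because $1$ is a global minimiser of the convex function $h$ and $\mathrm{dom}\,h \subset \mathrm{cl}\,\mathrm{co}\,\mathrm{dom}\,\tilde g = [1,1/\beta]$ by \eqref{eq:lsc co hull}, $h$ is nondecreasing on $[1,\infty)$ (a convex function is nondecreasing to the right of any minimiser). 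Finiteness on $[1,1/\beta)$ is immediate from $h\leq\tilde g<\infty$ there, giving $[1,1/\beta)\subset\mathrm{dom}\,h\subset[1,1/\beta]$.

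The crux is to show $h(1/\beta)=\infty$, which pins down $\mathrm{dom}\,h=[1,1/\beta)$. Writing $c:=1/\beta$ and using $h=\tilde g^{**}$, I would start from
\[
h(c) = \sup_{p\in\RR}\bigl(pc - \tilde g^*(p)\bigr), \qquad pc - \tilde g^*(p) = \inf_{x\in[1,c)}\bigl(p(c-x)+\tilde g(x)\bigr),
\]
and produce an arbitrarily large lower bound: given $M>0$, choose $\delta>0$ with $\tilde g(x)\geq M$ on $[c-\delta,c)$ (possible since $\tilde g\uparrow\infty$ at $c$) and take $p=M/\delta \geq 0$. Then for $x\in[1,c-\delta]$ one has $p(c-x)\geq p\delta=M$, while for $x\in[c-\delta,c)$ one has $\tilde g(x)\geq M$, so the inner infimum is $\geq M$ and hence $h(c)\geq M$. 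Letting $M\to\infty$ yields $h(c)=\infty$. This monotone blow-up estimate is the one genuinely non-formal step; the remainder is bookkeeping about convex hulls and the change of variable $x\mapsto 1/x$. Finally, if $\lim_{x\uparrow c}h(x)$ were finite, lower semi-continuity would force the contradiction $h(c)\leq\lim_{x\uparrow c}h(x)<\infty$; therefore the limit is $\infty$, which completes the verification of the claimed properties of $h$ and, via the translation above, of $\hat g$.
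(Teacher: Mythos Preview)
Your proof is correct and takes essentially the same approach as the paper: both reduce the claim to showing $\overline{\textnormal{co}}\,\tilde g(1/\beta)=\infty$ via Fenchel--Moreau by producing affine minorants of $\tilde g$ whose value at $1/\beta$ is arbitrarily large. The only cosmetic difference is that the paper argues by contradiction (assume $\overline{\textnormal{co}}\,\tilde g(1/\beta)=k<\infty$ and exhibit one affine minorant with value $k+1$ there), whereas you compute the biconjugate directly; your version also spells out the auxiliary monotonicity and limit statements for $h$ a bit more explicitly, but the substance is identical.
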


\begin{proof}
	Since $g \in \cG_\beta$, it follows that $\tilde{g}$ is nondecreasing, real-valued on $[1,1/\beta)$, $\infty$ on $(1/\beta,\infty)$ and $\tilde{g}(1) = 0$. By the definition of the lower semi-continuous convex hull, it is not difficult to check that $\overline{\textnormal{co}}\,\tilde{g}$ has the same properties and so $\hat g \in \cG_\beta$. It remains to show that $\overline{\textnormal{co}}\,\tilde{g}(1/\beta) = \infty$.
	
Seeking a contradiction, suppose that $\overline{\textnormal{co}}\,\tilde{g}(1/\beta) =: k < \infty$.  As $\overline{\textnormal{co}}\,\tilde{g}$ is a proper lower semi-continuous convex function, the Fenchel-Moreau theorem gives  $\overline{\textnormal{co}}\,\tilde{g} = \tilde{g}^{**}$ where $\tilde{g}^{**}$ is the biconjugate of $\tilde{g}$.	Since $\tilde{g}$ is nondecreasing and $\lim_{x \uparrow 1/\beta} \tilde{g}(x) = \infty$, there exists $c \in [1,1/\beta)$ such that $\tilde{g}(x) > k+1$ for all $x \in [c,1/\beta)$.  Thus, the affine (and continuous) function $a:[1, \infty) \to \RR$ with $a(c) = 0$ and $a(1/\beta) = k+1$ satisfies $a \leq \tilde{g}$ and $a(1/\beta) > k = \tilde{g}^{**}(1/\beta)$.  This is in contradiction to the fact that by \eqref{eq:biconjugate}, $\tilde{g}^{**}$ dominates any affine (and continuous) function dominated by $\tilde g$.
\end{proof}

\begin{proposition}
\label{prop:Q g star star}
    Let $\beta \in (0,1)$ and $g \in \cG_\beta$.  Let $\overline{\textnormal{co}}\,\alpha^g$ be the  $L^1$-lower semi-continuous convex hull of $\alpha^g$.  Then its effective domain is given by
    \begin{equation*}
    \cQ^{\overline{\textnormal{co}}\,\alpha^g} = \begin{cases}
   \{Z \in \cD: \lVert Z \rVert_{\infty} \leq \tfrac{1}{\beta}\},&\text{if $g \in \cG_\beta^\infty$}, \\ \{Z \in \cD: \lVert Z \rVert_{\infty} < \tfrac{1}{\beta}\},&\text{if $g \in \cG_\beta \setminus \cG_\beta^\infty$}.
   \end{cases}
\end{equation*}
\end{proposition}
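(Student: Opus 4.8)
The plan is to split into the two cases of the statement and in both reduce everything to the $L^1$-closure $\bar\cQ^{\alpha^g}$ together with the auxiliary function $\hat g$ from Proposition~\ref{prop:g bounded dom g iff g hat bounded dom g hat}. First I would record the closure once and for all. Since $\lVert\cdot\rVert_\infty$ is $L^1$-lower semi-continuous (if $Z_n \to Z$ in $L^1$, pass to an a.s.-convergent subsequence and apply Fatou), the set $\{Z \in \cD : \lVert Z\rVert_\infty \le 1/\beta\}$ is $L^1$-closed; moreover every such $Z$ is the $L^1$-limit of $Z_n := (1-\tfrac1n)Z + \tfrac1n \in \cD$, which satisfy $\lVert Z_n\rVert_\infty \le \tfrac1\beta - \tfrac1n(\tfrac1\beta-1) < \tfrac1\beta$. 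Combined with \eqref{eq:dual set g adjusted ES}, this yields $\bar\cQ^{\alpha^g} = \{Z \in \cD : \lVert Z\rVert_\infty \le 1/\beta\}$ in all cases.

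If $g \in \cG_\beta^\infty$, then $g$ is bounded on its effective domain, so $\alpha^g(Z) = g(\lVert Z\rVert_\infty^{-1})$ is bounded above on $\cQ^{\alpha^g}$ (as $\lVert Z\rVert_\infty^{-1}$ ranges in $\mathrm{dom}\,g$). By the last sentence of Remark~\ref{rmk:Q subset Q start start subset Q bar}(a), $\cQ^{\overline{\textnormal{co}}\,\alpha^g} = \bar\cQ^{\alpha^g} = \{Z \in \cD : \lVert Z\rVert_\infty \le 1/\beta\}$, which is the first case of the claim. Now suppose instead $g \in \cG_\beta \setminus \cG_\beta^\infty$. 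Since $g$ is nonincreasing, unboundedness on its effective domain forces $g(\beta) = \infty$, so \eqref{eq:dual set g adjusted ES} and Remark~\ref{rmk:Q subset Q start start subset Q bar}(a) give
\[
\{Z \in \cD : \lVert Z\rVert_\infty < \tfrac1\beta\} = \cQ^{\alpha^g} \subseteq \cQ^{\overline{\textnormal{co}}\,\alpha^g} \subseteq \bar\cQ^{\alpha^g} = \{Z \in \cD : \lVert Z\rVert_\infty \le \tfrac1\beta\}.
\]
It therefore only remains to exclude the boundary $\{\lVert Z\rVert_\infty = 1/\beta\}$, i.e.\ to show that $\overline{\textnormal{co}}\,\alpha^g(Z) = \infty$ there.

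For this I would introduce $F(Z) := \overline{\textnormal{co}}\,\tilde g(\lVert Z\rVert_\infty)$, where $\tilde g(s) = g(1/s)$ on $[1,\infty)$ as in Proposition~\ref{prop:g bounded dom g iff g hat bounded dom g hat}, and show $F$ is an $L^1$-lower semi-continuous convex minorant of $\alpha^g$ on $\cD$. Writing $F = \phi \circ \lVert\cdot\rVert_\infty$ with $\phi := \overline{\textnormal{co}}\,\tilde g$, the function $\phi$ is convex, lower semi-continuous and nondecreasing: it satisfies $\phi \ge 0$ with $\phi(1)=0=\min\phi$ attained at the left endpoint of $[1,\infty)$, and a convex function whose minimum sits at the left endpoint is nondecreasing. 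Composing a nondecreasing lsc convex $\phi$ with the convex $L^1$-lsc map $\lVert\cdot\rVert_\infty$ produces a convex $L^1$-lsc function, and $\phi \le \tilde g$ gives $F(Z) \le \tilde g(\lVert Z\rVert_\infty) = \alpha^g(Z)$. Hence $F \le \overline{\textnormal{co}}\,\alpha^g$. Finally, for any $Z_0$ with $\lVert Z_0\rVert_\infty = 1/\beta$, Proposition~\ref{prop:g bounded dom g iff g hat bounded dom g hat} yields $F(Z_0) = \overline{\textnormal{co}}\,\tilde g(1/\beta) = \hat g(\beta) = \infty$, so $\overline{\textnormal{co}}\,\alpha^g(Z_0) = \infty$ and $Z_0 \notin \cQ^{\overline{\textnormal{co}}\,\alpha^g}$, giving the second case.

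The main obstacle I anticipate is the verification that $F$ is a genuine lsc convex minorant of $\alpha^g$ — specifically the monotonicity of $\overline{\textnormal{co}}\,\tilde g$ (which is what makes the composition $\phi \circ \lVert\cdot\rVert_\infty$ lower semi-continuous) and the clean identification $F(Z_0) = \hat g(\beta)$ on the boundary. Everything else is bookkeeping with \eqref{eq:dual set g adjusted ES} and Remark~\ref{rmk:Q subset Q start start subset Q bar}(a).
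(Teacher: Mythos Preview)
Your proposal is correct and follows essentially the same route as the paper: in the bounded case you invoke Remark~\ref{rmk:Q subset Q start start subset Q bar}(a), and in the unbounded case you construct the lsc convex minorant $F(Z)=\overline{\textnormal{co}}\,\tilde g(\lVert Z\rVert_\infty)$, which is precisely the paper's $\alpha^{\hat g}$, and use Proposition~\ref{prop:g bounded dom g iff g hat bounded dom g hat} to see it is infinite on the boundary. Your explicit computation of $\bar\cQ^{\alpha^g}$ and your justification that $\phi=\overline{\textnormal{co}}\,\tilde g$ is nondecreasing (so that $\phi\circ\lVert\cdot\rVert_\infty$ is lsc and convex) fill in details that the paper leaves implicit, but the architecture is the same.
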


\begin{proof}
If $g \in \cG_\beta^\infty$, then the result follows from Remark \ref{rmk:Q subset Q start start subset Q bar}(a).  So assume $g \in \cG_\beta \setminus \cG_\beta^\infty$.  Define the function $\hat{g}:(0,1] \to [0,\infty]$ by $\hat{g}(x) = \overline{\textnormal{co}}\,\tilde{g}(1/x)$ where $\tilde{g}:[1,\infty) \to [0,\infty]$ is given by $\tilde{g}(x) = g(1/x)$. By Proposition \ref{prop:g bounded dom g iff g hat bounded dom g hat}, $\hat{g} \in \cG_\beta \setminus \cG_\beta^\infty$ and $\hat{g}(\beta) = \infty$. Moreover, $\hat g(x) \leq \tilde g(1/x) = g(x)$ for $x \in (0, 1]$. Moreover, by the fact that $\overline{\textnormal{co}}\,\tilde{g}$ is convex and lower semi-continuous, nondecreasing, real-valued on $[1,1/\beta)$ and $\infty$ on $[1/\beta, \infty]$, it follows that 
$\alpha^{\hat{g}}:\cD \to [0,\infty]$, given by 
\begin{equation*}
    \alpha^{\hat{g}}(Z) = \begin{cases}
    \hat{g}(\lVert Z \rVert_\infty^{-1}) = \overline{\textnormal{co}}\,\tilde{g}(\lVert Z \rVert_\infty),&\text{if $Z \in \cQ^{\hat{g}} = \{Z \in \cD: \lVert Z \rVert_{\infty} < 1/\beta\},$} \\ \infty,&\text{otherwise},
    \end{cases}
\end{equation*}
is convex and $L^1$-lower semi-continuous. Thus, $\alpha^g \geq \overline{\textnormal{co}}\,\alpha^g \geq \overline{\textnormal{co}}\,\alpha^{\hat{g}} = \alpha^{\hat{g}}$, which implies $\cQ^{\alpha^g} \subset \cQ^{\overline{\textnormal{co}}\,\alpha^g} \subset \cQ^{ \alpha^{\hat{g}}}$. Since $\cQ^{\alpha^g} = \cQ^{ \alpha^{\hat{g}}} = \{Z \in \cD: \lVert Z \rVert_{\infty} < 1/\beta\}$, the result follows.
\end{proof}

\small
\bibliography{convexregarb} 
\bibliographystyle{amsplain}
 
\end{document}